\def\T{\intercal}
\def\E{\mathbb {E}}
\def\Pr{\mathbb {P}}
\def\var{\mathrm {var}}
\def\cov{\mathrm {cov}}
\def\diag{\mathrm {diag}}
\def\vec{\mathrm {Vec}}
\def\sgn{\mathrm {sgn}}
\def\tr{\mathrm {tr}}
\def\mR{\mathbb{R}}
\def\card{\mathrm {card}}
\def\A{{\bm A}}
\def\B{{\bm B}}
\def\D{{\bm D}}
\def\I{{\bm I}}
\def\U{{\bm U}}
\def\M{{\bm M}}
\def\bfS{{\bm S}}
\def\V{{\bm V}}
\def\W{{\bm W}}
\def\X{{\bm X}}
\def\Z{{\bm Z}}
\def\f{{\bm f}}
\def\x{{\bm x}}
\def\y{{\bm y}}
\def\z{{\bm z}}
\def\v{{\bm v}}
\def\u{{\bm u}}
\def\e{{\bm e}}
\def\bmu{{\bm \mu}}
\def\bms{{\bm \Sigma}}
\def\G{{\bm \Gamma}}
\def\L{{\bm \Lambda}}
\def\bmzero{{\bm 0}}
\def\bmu{{\bm \mu}}
\def\bmD{{\bm \Delta}}
\def\bms{{\bm \Sigma}}
\def\G{{\bm \Gamma}}
\def\L{{\bm \Lambda}}
\def\bmzero{{\bm 0}}
\newtheorem{theorem}{Theorem}
\newtheorem{proposition}[theorem]{Proposition}%
\newtheorem{assumption}{Assumption}%
\newtheorem{lemma}{Lemma}
\newtheorem{remark}{Remark}%
\begin{document}

\title{High dimensional  matrix estimation through elliptical factor models}

 \author{Xinyue Xu$^1$, Huifang Ma$^1$, Hongfei Wang$^2$ and Long Feng$^1$ \\
School of Statistics and Data Science,\\
 LEBPS, KLMDASR and LPMC, Nankai University$^1$\\
 School of Statistics and Data Science, Nanjing Audit University$^2$}


\maketitle

\begin{abstract}
Elliptical factor models play a central role in modern high-dimensional data analysis, particularly due to their ability to capture heavy-tailed and heterogeneous dependence structures. Within this framework, Tyler’s M-estimator \citep{Tyler1987} enjoys several optimality properties and robustness advantages. In this paper, we develop high-dimensional scatter matrix, covariance matrix and precision matrix estimators grounded in Tyler’s M-estimation. We first adapt the Principal Orthogonal complEment Thresholding (POET) framework \citep{fan2013large} by incorporating the spatial-sign covariance matrix as an effective initial estimator. Building on this idea, we further propose a direct extension of POET tailored for Tyler’s M-estimation, referred to as the POET-TME method. We establish the consistency rates for the resulting estimators under elliptical factor models. Comprehensive simulation studies and a real data application illustrate the superior performance of POET-TME, especially in the presence of heavy-tailed distributions, demonstrating the practical value of our methodological contributions.

{\it keywords}: Covariance matrix estimation; High dimensional data; Spatial-sign; POET.
\end{abstract}

\section{Introduction}\label{sec:introduction}
Estimating the covariance matrix is a fundamental problem in multivariate statistical analysis. Accurate estimation plays a crucial role in numerous tasks such as principal component analysis (PCA), clustering, and discriminant analysis. When the data dimension $d$ is fixed and the sample size $n$ is large, the sample covariance matrix performs efficiently under sub-Gaussian assumptions. However, as the dimensionality grows and $d$ becomes comparable to or even larger than $n$, the sample covariance matrix becomes unstable and non-invertible. To address this issue, a substantial body of research has proposed various regularized or sparse covariance estimators, leveraging structural assumptions such as sparsity in the true covariance matrix \citep{BickelLevina2008a,BickelLevina2008b,CaiLiu2011,CaiMaWu2015,RLZ2009threshold}. See an overview in \cite{fan2016overview}.

Most of the aforementioned methods for covariance matrix estimation rely on the assumption that the covariance matrix is sparse. While this assumption is plausible in many practical settings, it does not always hold. For instance, financial assets often share common market factors, leading to strong correlations among their returns. Similarly, genes within the same biological pathway may be co-regulated by a limited number of transcriptional factors, resulting in highly correlated gene expression patterns. Moreover, when genes are stimulated by cytokines, their expressions tend to exhibit strong co-movements. In such scenarios, the sparsity assumption becomes clearly unrealistic. To address the limitations of the sparsity assumption, \citet{fan2013large} proposed an approximate factor model and developed the Principal Orthogonal complEment Thresholding (POET) framework. The POET estimator decomposes the covariance matrix into a low-rank component, capturing the pervasive dependence driven by a few common factors, and a sparse idiosyncratic component that represents weak local correlations. This unified framework encompasses several popular estimators as special cases, including the sample covariance matrix, the factor-based covariance matrix, the thresholding estimator, and the adaptive thresholding estimator. Despite its theoretical and computational appeal, the original POET procedure is constructed based on the sample covariance matrix, which can be highly sensitive to heavy-tailed distributions and outliers. Consequently, subsequent research has focused on developing robust extensions of the POET framework, such as those based on rank-based or M-estimation approaches, to enhance performance under non-Gaussian or contaminated data settings (see, e.g., \citealp{FLW2018generalPOET}).

Elliptical distributions have become a central modeling tool in modern multivariate analysis due to their ability to capture heavy-tailed behavior, tail dependence, and deviations from Gaussianity. They are widely used in financial return modeling \citep{zheng2011estimation,HanLiu2014}, gene expression data analysis \citep{avella-medina2018robust,ZhaoChenWang2024}, and neuroimaging studies \citep{HL2018ECA}. When the dimension is fixed, numerous robust estimators of the scatter matrix are available for elliptical distributions, including Tyler's M-estimator \cite{Tyler1987}, the spatial-sign covariance matrix, and multivariate Kendall's tau estimators \citep{oja2010multivariate}, all of which enjoy desirable properties such as affine equivariance and robustness to heavy tails.

However, in high-dimensional regimes where the dimension exceeds the sample size, classical robust scatter estimators become singular, non-invertible, or computationally unstable, which necessitates new approaches. Considerable effort has been dedicated to developing regularized scatter and covariance matrix estimators under elliptical models. For sparse covariance estimation, \cite{sun2014regularized} introduced a regularized Tyler's M-estimator, and \cite{goes2020robust} further enhanced this approach via thresholding techniques.

Parallel to Tyler-based methods, robust high-dimensional scatter estimation has also progressed through the study of spatial-sign covariance matrices, which naturally encode directional information and possess strong robustness properties. Recent contributions include linear shrinkage estimators \citep{raninen2021linear,raninen2021bias,ollila2022regularized,ollila2024linear} specifically designed for high-dimensional settings, as well as sparse precision matrix estimators constructed from spatial-sign covariance matrices \citep{lu2025robust}. These works demonstrate that combining sign-based methods with regularization can yield stable and interpretable estimators even under heavy-tailed distributions.

For nonsparse high-dimensional elliptical structures, factor-based modeling has emerged as a powerful alternative. \cite{FLW2018generalPOET} generalized the POET framework to elliptical factor models by leveraging the multivariate Kendall's tau matrix, providing a robust means of extracting latent low-rank dependence. More recently, \cite{ding2025sub} proposed an idiosyncratic-projected self-normalization method based on the spatial-sign covariance matrix, which offers improved performance under heavy-tailed noise and complex dependence structures.

In this paper, we extend the POET framework to the setting of spatial-sign covariance 
matrices for nonsparse high-dimensional factor models, named as POET-SS. We first establish the consistency rates of the proposed estimators under mild regularity conditions. Although the spatial-sign covariance matrix is robust to heavy-tailed distributions, its nonsparse structure leads to a  approximation bias with respect to the true covariance matrix, which may affect the efficiency of the resulting estimators. To remedy this issue, we further incorporate Tyler's M-estimator by using the spatial-sign covariance estimator as an initial input, and propose a new 
two-step procedure, termed POET-TME. Simulation studies and real data 
analysis show that the proposed POET-TME method achieves the best overall 
performance for high-dimensional elliptical factor models.

This paper is organized as follows. Section 2 introduces the POET-SS method and presents the corresponding theoretical results. Section 3 develops the POET-TME procedure. Section 4 reports the simulation studies, and Section 5 provides a real data application. All technical proofs are collected in the Appendix.

\textit{Notation:} We use the following notation throughout the paper. For any matrix $\M=[M_{jk}]\in\mR^{d\times d}$, define the matrix entry-wise maximum value , Frobenius norm, spectral norm, elementwise $\ell_1$ norm, $\ell_1$ norm, $\ell_{\infty}$ norm and relative Frobenius norm as $\Vert \M\Vert_{\max}=\max|M_{jk}|$, $\Vert\M\Vert_F=(\sum_{jk}M_{jk}^2)^{1/2}$,  $\Vert\M\Vert_2=\lambda_{\max}^{1/2}(\M^{\T}\M)$, $\Vert\M\Vert_{1,1}=\sum_{jk}|M_{jk}|$, $\Vert\M\Vert_1=\max_k\sum_j|M_{jk}|$, $\Vert\M\Vert_\infty=\max_j\sum_k|M_{jk}|$ and $\Vert\M\Vert_{\bms}=d^{-1/2}\Vert\bms^{-1/2}\M\bms^{-1/2}\Vert_F$. Let $\lambda_j(\M)$ be the $j$-th largest eigenvalue of $\M$. If there are ties, $\lambda_j(\M)$ is any one of
the eigenvalues such that any eigenvalue larger than it has rank smaller than $j$, and any
eigenvalue smaller than it has rank larger than $j$. Let $u_j(\M)$ be any unit vector $\v$ such that $\v^\intercal\M\v=\lambda_j(\M)$. Without loss of generality, we assume that the first nonzero entry of $u_j(\M)$ is positive. For any two index sets $\mathcal{S}_1$ and $\mathcal{S}_2$, use $\M_{\mathcal{S}_1\mathcal{S}_2}$ to denote the $|\mathcal{S}_1|\times|\mathcal{S}_2|$ matrix with rows and columns of $\M$ indexed by $\mathcal{S}_1$ and $\mathcal{S}_2$, respectively. 
The notation $\M\succ 0$ indicates that $\M$ is positive definite.
For any vector $\v=(v_1,\ldots,v_d)^\intercal\in\mR^d$, define the $\ell_0$, $\ell_q(0<q<\infty)$ and $\ell_\infty$ vector norms as $\Vert\v\Vert_0=\card(\{j:v_j\ne 0\})$, $\Vert \v\Vert_q=\left(\sum_i|v_i|^q\right)^{1/q}$ and $\Vert \v\Vert_\infty=\max_i|v_i|$. For any random variable $X\in\mR$, define the sub-Gaussian $(\Vert\cdot\Vert_{\psi_2})$ and sub-exponential norms $(\Vert\cdot\Vert_{\psi_1})$ of $X$ as $\Vert X\Vert_{\psi_2}=\sup_{k\ge 1}k^{-1/2}\{\E(|X|^k)\}^{1/k}$ and $\Vert X\Vert_{\psi_1}=\sup_{k\ge 1}k^{-1}\{\E(|X|^k)\}^{1/k}$. 
Let $\mathbb{S}^{d-1}=\{\v\in\mR^d:\Vert\v\Vert_2=1\}$ be the $d$-dimensional unit sphere, $\I_d$ represent the $d$ by $d$ identity matrix, and $\mathbbm{1}(\cdot)$ be the indicator function. 

\section{Large Matrix Estimation based on Spatial-sign}
\subsection{Generic POET for elliptical distribution}
Suppose $\X_1,\ldots,\X_n$ are i.i.d. from $\X$, the elliptical distribution $EC_d(\bmu,\bms,\xi)$, i.e.
\begin{align}\label{elliptical model}
    \X\mathop{=}\limits^{d}\bmu+\xi\A\U,
\end{align}
where $\bmu\in\mR^d$, $\U$ is a uniform random vector on the unit sphere in $\mR^q$, $\xi\ge 0$ is a scalar random variable independent of $\U$, and $\A\in\mR^{d\times q}$ is a deterministic matrix satisfying $\A\A^\T=\bms$. Here, $\bms$ is called the scatter matrix. Note that the representation in \eqref{elliptical model} is not identifiable since we can rescale $\xi$ and $\A$. To make the model identifiable, we require $E(\xi^2)=q<\infty$ so that $\cov(\X)=\bms$. In addition, we assume $\bms$ is nonsingular, that is, $q=d$. In this paper, we only consider continuous elliptical distributions with $\Pr(\xi=0)=0$. Here, we define $\bms_0=d\bms/\tr(\bms)$. Some literature directly assumes that $\tr(\bms)=d$ for simplicity, as seen in \citet{HL2018ECA}. Hence, in the subsequent sections, we also refer to the matrix $\bms_0$ as the scatter matrix.  

The scatter matrix $\bms_0$ plays a central role in various applications. For example, the eigenstructure of $\bms_0$ governs the principal component analysis (PCA) of $\X$. In financial applications, the inverse of the scatter matrix, $\bms_0^{-1}$, is used in constructing the global minimum variance portfolio (MVP). Financial returns usually exhibit a factor structure with pervasive factors. Therefore, we consider the following spiked covariance assumption.
\begin{assumption}\label{assumption 1}
    Let $m$ be a fixed constant that does not change with $n$ and $d$ such that $m \le n \le d$. As $d\to\infty$, $\lambda_1(\bms)>\lambda_2(\bms)>\cdots>\lambda_m(\bms)\gg \lambda_{m+1}(\bms)\ge\cdots\ge\lambda_d(\bms)>0$, where the spiked eigenvalues are linearly proportional to dimension $d$ while the nonspiked eigenvalues are bounded, that is, $c_0 \le \lambda_j(\bms) \le C_0$, $j>m$ for constants $c_0,C_0 > 0$. 
\end{assumption}
Assumption \ref{assumption 1} describes statistical factor models with spiked eigenvalues; see, e.g., \cite{fan2013large}, \cite{FLW2018generalPOET},  \cite{ding2021}, \cite{ding2025sub}.
Assumption \ref{assumption 1} divides the eigenvalues into the diverging and bounded ones. For simplicity, we only consider distinguishable eigenvalues (multiplicity 1) for the largest $m$ eigenvalues. Under Assumption \ref{assumption 1}, we consider the spectral decomposition of $\bms_0$:
\begin{align}\label{spectral decomposition}
    &\bms_0=\sum_{j=1}^m\lambda_j(\bms_0)u_j(\bms_0)u_j(\bms_0)^\T+\sum_{j=m+1}^d\lambda_j(\bms_0)u_j(\bms_0)u_j(\bms_0)^\T=\G_m\L_m\G_m^\T+\bms_{0u},
\end{align}
where $$\G_m=(u_1(\bms_0),\ldots,u_m(\bms_0)), \L_m=\diag(\lambda_1(\bms_0),\ldots,\lambda_m(\bms_0))$$ and $\bms_{0u}=\sum_{j=m+1}^d\lambda_j(\bms_0)u_j(\bms_0)u_j(\bms_0)^\T$. Inspired by \citet{FLW2018generalPOET}, given three initial pilot estimators $\hat{\bms}_0$, $\hat{\L}_m$, $\hat{\G}_m$ for true scatter matrix $\bms_0$, leading eigenvalues $\L_m$ and leading eigenvectors $\G_m$, respectively, we consider the general Principal Orthogonal Complement Thresholding (POET) Algorithm \ref{algorithm 1} for $\bms_0$ and $\bms_{0u}$. We also consider the general Constrained $l_1$ Minimization for Inverse Matrix Estimation (CLIME) Algorithm \ref{algorithm 2} and Graphic Lasso Estimation (GLASSO) Algotithm \ref{algorithm 3} for $\V_0:=\bms_0^{-1}$ and $\V_{0u}:=\bms_{0u}^{-1}$.
\begin{algorithm}[!ht]
\caption{POET framework for Scatter Matrix Estimation}\label{algorithm 1}
\begin{algorithmic}[1]
    \Require $\hat{\bms}_0$, $\hat{\L}_m$ and $\hat{\G}_m$
    \Ensure $\bms_0^{\tau}$ and $\bms_{0u}^{\tau}$
    \State Compute $\hat{\bms}_{0u}=\hat{\bms}_0-\hat{\G}_m\hat{\L}_m\hat{\G}_m^\T$.
    \State Compute $\hat{\bms}_{0u}^\tau$ via
    \begin{align*}
        \hat{\Sigma}_{0u,ij}^\tau=
        \begin{cases}
            \hat{\Sigma}_{0u,ii}, & i=j, \\
            s_{ij}(\hat{\Sigma}_{0u,ij})\mathbbm{1}(|\hat{\Sigma}_{0u,ij}|\ge\tau_{ij}), & i\ne j,
        \end{cases}
    \end{align*}
    where $s_{ij}(\cdot)$ is the adaptive thresholding rule such that for $x\in\mR$, $s_{ij}(x)=0$ when $|x|\le \tau_{ij}$, otherwise $|s_{ij}(x)-x|\le \tau_{ij}$. Examples of the adaptive thresholding rule include the hard thresholding $s_{ij}(x)=x\mathbbm{1}(|x|\ge\tau_{ij})$, soft thresholding, SCAD and the adaptive lasso (see \cite{AF2001Wavelet,RLZ2009threshold}). The parameter $\tau_{ij}$ is an entry-dependent threshold, which will be determined based on theoretical analysis.
    \State Compute $\hat{\bms}_0^\tau=\hat{\G}_m\hat{\L}_m\hat{\G}_m^\T+\hat{\bms}_{0u}^\tau$.
\end{algorithmic}
\end{algorithm}

\begin{algorithm}[!ht]
\caption{CLIME framework for Inverse Scatter Matrix Estimation}\label{algorithm 2}
\begin{algorithmic}[1]
    \Require $\hat{\bms}_0$, $\hat{\L}_m$ and $\hat{\G}_m$
    \Ensure $\hat{\V}_0$ and $\hat{\V}_{0u}$
    \State Compute $\hat{\bms}_{0u}=\hat{\bms}_0-\hat{\G}_m\hat{\L}_m\hat{\G}_m^\T$.
    \State Compute $\V_{0u}^1$ via 
    \begin{align*}
        \hat{\V}_{0u}^1=\mathop{\arg\min}\limits_{\V}\Vert\V\Vert_{1,1}~~\text{subject to}~~\Vert\hat{\bms}_{0u}\V-\I_d\Vert_{\max}\le\tau,
    \end{align*}
    where $\tau$ is a tuning parameter.
    \State Compute $\hat{\V}_{0u}$ via
    \begin{align*}
        \hat{\V}_{0u,ij}=\hat{\V}_{0u,ij}^1\mathbbm{1}(|\hat{\V}_{0u,ij}^1|\le|\hat{\V}_{0u,ji}^1|)+\hat{\V}_{0u,ji}^1\mathbbm{1}(|\hat{\V}_{0u,ij}^1|>|\hat{\V}_{0u,ji}^1|).
    \end{align*}
    \State Compute $\hat{\V}_0$ via
    \begin{align*}
        \hat{\V}_0=\hat{\V}_{0u}-\hat{\V}_{0u}\hat{\G}_m(\hat{\L}_m^{-1}+\hat{\G}_m^{\T}\hat{\V}_{0u}\hat{\G}_m)^{-1}\hat{\G}_m^{\T}\hat{\V}_{0u}.
    \end{align*}
\end{algorithmic}
\end{algorithm}

\begin{algorithm}[!ht]
\caption{GLASSO framework for Inverse Scatter Matrix Estimation}\label{algorithm 3}
\begin{algorithmic}[1]
    \Require $\hat{\bms}_0$, $\hat{\L}_m$ and $\hat{\G}_m$
    \Ensure $\hat{\V}_0$ and $\hat{\V}_{0u}$
    \State Compute $\hat{\bms}_{0u}=\hat{\bms}_0-\hat{\G}_m\hat{\L}_m\hat{\G}_m^\T$.
    \State Compute $\hat{\V}_{0u}$ via
    \begin{align*}
        \hat{V}_{0u}=\mathop{\arg\min}\limits_{\V=\V^{\T},\V\succ 0}\{\tr(\hat{\bms}_{0u}\V)-\log\det(\V)+\tau\Vert\V\Vert_{1,1}\},
    \end{align*}
    where $\tau$ is a tuning parameter.
    \State Compute $\hat{\V}_0$ via
    \begin{align*}
        \hat{\V}_0=\hat{\V}_{0u}-\hat{\V}_{0u}\hat{\G}_m(\hat{\L}_m^{-1}+\hat{\G}_m^{\T}\hat{\V}_{0u}\hat{\G}_m)^{-1}\hat{\G}_m^{\T}\hat{\V}_{0u}.
    \end{align*}
\end{algorithmic}
\end{algorithm}

\subsection{Spatial-Sign based estimators}
In this section, without assuming the sparsity of the leading vectors $u_j(\bms_0),j\le m$, we provide the specific estimators $\hat{\bms}_0$, $\hat{\L}_m$ and $\hat{\G}_m$. The spatial-sign covariance matrix is defined as $$\bfS=\E\{U(\X-\bmu)U(\X-\bmu)^\T\},$$ where $U(\x)=\x/\Vert\x\Vert_2\mathbbm{1}(\x\ne 0)$ is the spatial sign function. By Theorem 3.2 in \cite{HL2018ECA}, we know that for all $1\le j\le d$, $u_j(\bms_0)=u_j(\bfS)$, and $\lambda_j(\bms_0) \asymp d\lambda_j(\bfS)$ when ${\log d}\Vert\bms\Vert_F=o\{\tr(\bms)\}$. Motivated by this observation, we estimate $\bms_0$ as
\begin{align}\label{scatter matrix estimation}
    \hat{\bms}_0=\frac{d}{n}\sum_{i=1}^nU(\X_i-\hat{\bmu})U(\X_i-\hat{\bmu})^\T,
\end{align}
where $\hat{\bmu}$ is the spatial median, i.e. 
\begin{align*}
    \hat{\bmu}=\mathop{\arg\min}\limits_{\bmu\in\mR^d}\sum_{i=1}^n\Vert\X_i-\bmu\Vert_2.
\end{align*}
Accordingly, we propose 
\begin{align}\label{leading eigen estimation}
    \hat{\L}_m=\diag(\lambda_1(\hat{\bms}_0),\ldots,\lambda_m(\hat{\bms}_0)),~~\text{and}~~ \hat{\G}_m=(u_1(\hat{\bms}_0),\ldots,u_m(\hat{\bms}_0)).
\end{align}

To evaluate the convergence rate of these estimators, we provide the following assumptions. Define $r_i=\Vert \X_i-\bmu\Vert_2$, $\zeta_k=\E(r_i^{-k})$ and $\nu_i=\zeta_1^{-1}r_i^{-1}$.

\begin{assumption}\label{assumption 2}
    There exist constants $0<c<C<\infty$ such that $c<\min_{1\le i\le d}\bms_{ii}\le \max_{1\le i\le d}\bms_{ii}<C$. 
\end{assumption}
\begin{assumption}\label{assumption 3}
    (\romannumeral1) $\zeta_k\zeta_1^{-k}<\zeta\in(0,\infty)$ for $k=1,2,3,4$ and all $d$.\\
    (\romannumeral2) $\nu_i$ is sub-Gaussian distributed, i.e., $\Vert\nu_i\Vert_{\psi_2}\le K_\nu<\infty$.\\
    (\romannumeral3) $\limsup_d\Vert\bfS\Vert_2<1-\psi<1$ for some constant $\psi>0$.
\end{assumption}
Assumption \ref{assumption 3} is widely assumed in high dimensional spatial-sign based procedures, such as \cite{FengJASA2016}, \cite{ZWF2025SSPCA}. By Assumption \ref{assumption 3} (\romannumeral1), we have $\E(\nu_i)=1$, $\var(\nu_i)<\zeta-1$ and $\E(\nu_i^4)<\zeta$. Assumption \ref{assumption 3} (\romannumeral3) means that the maximum eigenvalue of $\bfS$ should be uniformly smaller than one, which is employed to guarantee the consistency of the spatial median. The following theorem establishes the convergence rate of the estimators $\hat{\bms}_0$, $\hat{\L}_m$ and $\hat{\G}_m$.
\begin{theorem}\label{thm 1}
    Under Assumptions \ref{assumption 1}-\ref{assumption 3}, if $\log d=o(n)$, then as $\min(n,d)\to\infty$, 
    \begin{align*}
        &\Vert \hat{\bms}_0-\bms_0\Vert_{\max}=O_p\left(\sqrt{\frac{\log d}{n}}+\sqrt{\frac{\log n}{n}} \right),\\
        &\Vert \hat{\L}_m\L_m^{-1}-\I_m \Vert_{\max}=O_p\left(\sqrt{\frac{\log d}{n}}+\sqrt{\frac{\log n}{n}} \right),~~\text{and}\\
        &\Vert \hat{\G}_m-\G_m\Vert_{\max}=O_p\left(\sqrt{\frac{\log d}{nd}}+\sqrt{\frac{\log n}{nd}} \right).
    \end{align*}
\end{theorem}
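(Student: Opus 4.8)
The plan is to treat the first bound as the core result and to deduce the eigenvalue and eigenvector bounds from it by matrix perturbation arguments that exploit the diverging spectral gaps in Assumption~\ref{assumption 1}. Writing $U_i:=U(\X_i-\bmu)$ and letting $\tilde{\bms}_0=\tfrac{d}{n}\sum_{i=1}^n U_iU_i^\T$ be the infeasible estimator built from the true center (so that $\E\tilde{\bms}_0=d\bfS$), I would decompose
\[
\hat{\bms}_0-\bms_0=\underbrace{(\hat{\bms}_0-\tilde{\bms}_0)}_{\text{median substitution}}+\underbrace{(\tilde{\bms}_0-d\bfS)}_{\text{oracle fluctuation}}+\underbrace{(d\bfS-\bms_0)}_{\text{population bias}}.
\]
The population bias is governed by the correspondence between $\bfS$ and $\bms_0$ (and the accompanying eigen-relations) supplied by Theorem~3.2 of \citet{HL2018ECA}; the statistical content lies in the first two terms.

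For the oracle fluctuation I would bound each entry of $\tilde{\bms}_0-d\bfS$ by Bernstein's inequality followed by a union bound over the $d^2$ entries. The summands $d\,U_{ij}U_{ik}$ are centered and, although only deterministically bounded by $d$, are sub-exponential with an $O(1)$ Orlicz norm: the scaled sign coordinate $\sqrt{d}\,U_{ij}$ is sub-Gaussian under Assumptions~\ref{assumption 2}--\ref{assumption 3}, so its products are sub-exponential and the effective variance proxy is $O(1)$ rather than $O(d)$. This is exactly what converts the naive $d\sqrt{\log d/n}$ bound into the stated $\sqrt{\log d/n}$ rate.

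The median-substitution term is the main obstacle. I would first derive a Bahadur-type expansion $\hat{\bmu}-\bmu=\bmD_0^{-1}\tfrac1n\sum_i U_i+\mathrm{rem}$, where $\bmD_0=\E\{r_i^{-1}(\I-U_iU_i^\T)\}$; convexity of the spatial-median objective gives existence and consistency, while Assumption~\ref{assumption 3}(iii) ensures $\bmD_0$ is well conditioned (its smallest eigenvalue is of order $\psi\zeta_1$), so $\|\hat{\bmu}-\bmu\|$ is controlled. I would then Taylor-expand $U(\X_i-\hat{\bmu})$ about $U_i$, the first-order term being $-r_i^{-1}(\I-U_iU_i^\T)(\hat{\bmu}-\bmu)$. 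Substituting into $\hat{\bms}_0$ and using $d/r_i=d\zeta_1\nu_i\asymp\sqrt{d}\,\nu_i$ together with $U_{ik}\asymp d^{-1/2}$, the leading correction is of the same $\sqrt{\log d/n}$ order as the oracle term, whereas the nonlinear remainder — carrying factors $r_i^{-2}$ and requiring uniform control over the $n$ samples — is handled via the sub-Gaussianity of $\nu_i$ in Assumption~\ref{assumption 3}(ii) (so that $\max_{i\le n}\nu_i\lesssim\sqrt{\log n}$) and the moment bounds in Assumption~\ref{assumption 3}(i); this uniform control produces the secondary $\sqrt{\log n/n}$ term. The difficulty is the coupling: $\hat{\bmu}$ depends on all observations and enters through the nonlinear map $1/r_i$, so the expansion error must be bounded without independence.

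Given the first bound with $\bmD:=\hat{\bms}_0-\bms_0$ satisfying $\|\bmD\|_{\max}=O_p(\sqrt{\log d/n}+\sqrt{\log n/n})$, the eigenvalue bound follows from first-order eigenvalue perturbation: $|\hat\lambda_j/\lambda_j-1|$ is governed by $|u_j^\T\bmD u_j|/\lambda_j\le\|\bmD\|_{\max}\|u_j\|_1^2/\lambda_j\le d\|\bmD\|_{\max}/\lambda_j\asymp\|\bmD\|_{\max}$, using $\|u_j\|_1\le\sqrt{d}$ and $\lambda_j(\bms_0)\asymp d$; the higher-order terms are absorbed because the gaps among the top $m$ eigenvalues and between them and the bulk are all of order $d$. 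For the eigenvectors I would use the reduced-resolvent expansion $\hat u_j-u_j\approx\sum_{k\ne j}(\lambda_j-\lambda_k)^{-1}(u_k^\T\bmD u_j)\,u_k$ and bound its $l$-th coordinate. The bulk part $\sum_{k>m}$ is at most $\lambda_j^{-1}\,|[\bmD u_j]_l|\le d^{-1}\cdot\sqrt{d}\,\|\bmD\|_{\max}=\|\bmD\|_{\max}/\sqrt{d}$, using $|[\bmD u_j]_l|\le\|\bmD\|_{\max}\|u_j\|_1$ and the $\Theta(d)$ separation; the finitely many leading terms $k\le m$ contribute $|u_{kl}|\,\|\bmD\|_{\max}$ each, which match $\|\bmD\|_{\max}/\sqrt d$ once the delocalization $\|u_k\|_\infty\lesssim d^{-1/2}$ of the spike eigenvectors (a consequence of Assumptions~\ref{assumption 1}--\ref{assumption 2}) is invoked, giving the $\sqrt{\log d/(nd)}$ rate. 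No population correction is needed for the eigenvectors since $u_j(\bfS)=u_j(\bms_0)$ exactly. I expect the median substitution to be the decisive step, with the sub-exponential bookkeeping in the oracle term a close second.
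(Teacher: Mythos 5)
Your reduction of the eigenvalue and eigenvector bounds to the max-norm bound is sound and essentially parallels the paper's Steps II and III: Weyl's inequality combined with $\Vert\bmD\Vert_2\le d\Vert\bmD\Vert_{\max}$ for the eigenvalues, and for the eigenvectors an $\ell_2$ perturbation bound upgraded to a max-norm bound with the extra $d^{-1/2}$ gain via the identity $u_j(\bms_0)=\{\lambda_j(\bms_0)\}^{-1}\bms_0u_j(\bms_0)$ and the delocalization $\Vert u_j(\bms_0)\Vert_\infty=O(d^{-1/2})$. The Bahadur expansion of the spatial median and the $\max_i\nu_i\lesssim\sqrt{\log n}$ control are also exactly the ingredients of the paper's Lemmas~\ref{lemma 1}--\ref{lemma 3}.

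The genuine gap is the population bias term $d\bfS-\bms_0$ in your decomposition, which you dismiss by appealing to Theorem~3.2 of \cite{HL2018ECA}. That theorem gives $u_j(\bfS)=u_j(\bms_0)$ exactly but only $\lambda_j(\bms_0)\asymp d\lambda_j(\bfS)$, and the sharper identification of $d\lambda_j(\bfS)$ with $\lambda_j(\bms_0)$ there requires $\log d\,\Vert\bms\Vert_F=o\{\tr(\bms)\}$ --- a condition that \emph{fails} under Assumption~\ref{assumption 1}, where the spiked eigenvalues force $\Vert\bms\Vert_F\asymp\tr(\bms)\asymp d$. Concretely, writing $\tilde{\X}=\bms^{1/2}\z$ with $\z\sim N_d(\bmzero,\I_d)$ and expanding $(\z^{\T}\bms\z)^{-1}$ around $\{\tr(\bms)\}^{-1}$, Isserlis' formula gives $(d\bfS-\bms_0)_{jk}\approx-2d(\bms^2)_{jk}/\{\tr(\bms)\}^2$; under Assumption~\ref{assumption 1} with delocalized spike eigenvectors one has $(\bms^2)_{jk}\asymp d$ on the corresponding entries, so $\Vert d\bfS-\bms_0\Vert_{\max}=\Theta(1)$ and $\Vert d\bfS-\bms_0\Vert_2=\Theta(d)$. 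This deterministic bias already exceeds the target rate $w_n$, so no amount of Bernstein bookkeeping on the other two terms can close your decomposition. The paper takes a different route precisely at this point: using the stochastic representation $\X_i-\bmu=\xi_i\bms^{1/2}\z_i/\Vert\z_i\Vert_2$, it never centers at $d\bfS$ but compares $\hat{\bms}_0$ to the Gaussian sample covariance $\tilde{\bms}_0=\frac{d}{\tr(\bms)n}\sum_i\bms^{1/2}\z_i\z_i^{\T}\bms^{1/2}$, whose expectation is exactly $\bms_0$; the bias is then carried by the per-sample random multipliers $d/(\z_i^{\T}\bms\z_i)-d/\tr(\bms)$ (term $\uppercase\expandafter{\romannumeral4}_{jk}$ in the paper's decomposition), which are handled in Lemma~\ref{lemma 5}. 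Whether that lemma's control of these multipliers is itself compatible with $\Vert\bms\Vert_F\asymp\tr(\bms)$ is a separate question, but your route stalls one step earlier, at a fixed population bias that no concentration argument can remove.
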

Now we apply Algorithm \ref{algorithm 1} to obtain $\hat{\bms}^{\tau}_{0u}$ and $\hat{\bms}_0^{\tau}$. It is worth noting that as the absence of residuals, the standard error estimator of $\hat{\Sigma}_{0u,ij}$ cannot be easily obtained. Thus, in contrast to the choice of $\tau_{ij}$ in \cite{fan2013large}, the threshold parameter is set to be element-wise constant, i.e., $\tau_{ij}=Cw_n$ with a sufficiently large $C>0$ and 
\begin{align*}
    w_n=\sqrt{\frac{\log d}{n}}+\sqrt{\frac{\log n}{n}}.
\end{align*}
Recalling the spectral decomposition of $\bms_0$ given in \eqref{spectral decomposition}, when $d$ is close to or larger than $n$, estimating $\bms_{0u}$ is very challenging. Therefore, following \cite{fan2013large}, we consider the sparsity structure of $\bms_{0u}$ as follows
\begin{align*}
    m_d=\max_{1\le i\le d}\sum_{j=1}^d|\bms_{0u,ij}|^v~~\text{for some}~~v\in[0,1],
\end{align*}
which will subsequently be assumed to decay at a certain rate. The next theorem establishes the asymptotic properties of the thresholding estimators $\hat{\bms}_{0u}^{\tau}$ and $\hat{\bms}_{0}^{\tau}$.
\begin{theorem}\label{thm 2}
    Under Assumption \ref{assumption 1}-\ref{assumption 3}, if $\log d=o(n)$ and $w_n^{1-v}m_d=o(1)$, then as $\min(n,d)\to\infty$,
    \begin{align*}
        &\Vert\hat{\bms}_{0u}^{\tau}-\bms_{0u}\Vert_2=O_p(w_n^{1-v}m_d)=\Vert(\hat{\bms}_{0u}^{\tau})^{-1}-\bms_{0u}^{-1}\Vert_2,\\
        &\Vert\hat{\bms}_0^{\tau}-\bms_0\Vert_{\bms_0}=O_p(d^{1/2}w_n^2+w_n^{1-v}m_d),~~\text{and}\\
        &\Vert(\hat{\bms}_0^{\tau})^{-1}-\bms_0^{-1}\Vert_2=O_p(w_n^{1-v}m_d).
    \end{align*}
\end{theorem}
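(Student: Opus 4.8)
The plan is to write $\hat{\bms}_0^{\tau}-\bms_0=(\hat{\G}_m\hat{\L}_m\hat{\G}_m^\T-\G_m\L_m\G_m^\T)+(\hat{\bms}_{0u}^{\tau}-\bms_{0u})$ and to control the ``factor'' and ``idiosyncratic'' pieces separately, using Theorem \ref{thm 1} as the only input. Throughout write $u_j:=u_j(\bms_0)$, $\hat u_j:=u_j(\hat{\bms}_0)$, $\lambda_j:=\lambda_j(\bms_0)$, $\hat\lambda_j:=\lambda_j(\hat{\bms}_0)$ and $\delta_j:=\hat u_j-u_j$, and record the consequences of Theorem \ref{thm 1}: $\hat\lambda_j/\lambda_j-1=O_p(w_n)$ so $|\hat\lambda_j-\lambda_j|=O_p(dw_n)$ (since $\lambda_j\asymp d$ for $j\le m$), while $\Vert\delta_j\Vert_\infty=O_p(w_n/\sqrt d)$ gives $\Vert\delta_j\Vert_2=O_p(w_n)$. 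The first step is the pre-thresholding entrywise bound $\Vert\hat{\bms}_{0u}-\bms_{0u}\Vert_{\max}=O_p(w_n)$: expanding each summand of the factor error as $(\hat\lambda_j-\lambda_j)\hat u_{jk}\hat u_{jl}+\lambda_j(\hat u_{jk}-u_{jk})\hat u_{jl}+\lambda_j u_{jk}(\hat u_{jl}-u_{jl})$ and using $|u_{jk}|,|\hat u_{jk}|=O_p(d^{-1/2})$ shows each of the $m$ terms is $O_p(w_n)$, which combined with $\Vert\hat{\bms}_0-\bms_0\Vert_{\max}=O_p(w_n)$ gives the claim.

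With this max-norm rate in hand, I would invoke the standard adaptive-thresholding argument (as in \cite{BickelLevina2008a,RLZ2009threshold,fan2013large}): under the sparsity bound $m_d=\max_i\sum_j|\bms_{0u,ij}|^v$ and the universal threshold $\tau_{ij}=Cw_n$, splitting indices according to whether $|\bms_{0u,ij}|$ is above or below the threshold and bounding the operator norm by $\Vert\cdot\Vert_1$ yields $\Vert\hat{\bms}_{0u}^{\tau}-\bms_{0u}\Vert_2=O_p(w_n^{1-v}m_d)$. Because the nonspiked eigenvalues lie in $[c_0,C_0]$, $\bms_{0u}$ has bounded inverse, and since $w_n^{1-v}m_d=o(1)$ the estimator $\hat{\bms}_{0u}^{\tau}$ is invertible with high probability with $\Vert(\hat{\bms}_{0u}^{\tau})^{-1}\Vert_2=O_p(1)$; the identity $(\hat{\bms}_{0u}^{\tau})^{-1}-\bms_{0u}^{-1}=(\hat{\bms}_{0u}^{\tau})^{-1}(\bms_{0u}-\hat{\bms}_{0u}^{\tau})\bms_{0u}^{-1}$ then transfers the rate to the inverse.

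The relative Frobenius bound is the heart of the argument, and here a naive spectral bound is useless because $\Vert\hat{\bms}_0^{\tau}-\bms_0\Vert_2\asymp d$. The idiosyncratic piece is benign: from $d^{-1/2}\Vert\bms_0^{-1/2}\M\bms_0^{-1/2}\Vert_F\le\Vert\bms_0^{-1}\Vert_2\Vert\M\Vert_2$ and $\Vert\bms_0^{-1}\Vert_2=O(1)$ it contributes $O_p(w_n^{1-v}m_d)$. For the factor piece I would exploit the exact identity $\bms_0^{-1/2}\G_m=\G_m\L_m^{-1/2}$, so that $\bms_0^{-1/2}\G_m\L_m\G_m^\T\bms_0^{-1/2}=\G_m\G_m^\T$ is precisely the spiked projection, and expand $\bms_0^{-1/2}\hat{\G}_m\hat{\L}_m\hat{\G}_m^\T\bms_0^{-1/2}=\sum_{j\le m}\hat\lambda_j(a_j+b_j)(a_j+b_j)^\T$ with $a_j=\lambda_j^{-1/2}u_j$ and $b_j=\bms_0^{-1/2}\delta_j$. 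Using $\G_m\G_m^\T=\sum_j\lambda_j a_ja_j^\T$, the error splits into an eigenvalue term $(\hat\lambda_j-\lambda_j)a_ja_j^\T$, cross terms $\hat\lambda_j(a_jb_j^\T+b_ja_j^\T)$, and a quadratic term $\hat\lambda_j b_jb_j^\T$. Since $\Vert a_j\Vert_2=O(d^{-1/2})$, $\Vert b_j\Vert_2\le\Vert\bms_0^{-1/2}\Vert_2\Vert\delta_j\Vert_2=O_p(w_n)$ and $\hat\lambda_j\asymp d$, the first two contribute $O_p(w_n/\sqrt d)$ and $O_p(w_n)$ to the relative norm, while the quadratic term contributes $d^{-1/2}\hat\lambda_j\Vert b_j\Vert_2^2=O_p(\sqrt d\,w_n^2)$; as $\sqrt d\,w_n\ge1$ (because $d\ge n$), the quadratic term dominates and gives the stated $O_p(\sqrt d\,w_n^2)$.

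Finally, for $\Vert(\hat{\bms}_0^{\tau})^{-1}-\bms_0^{-1}\Vert_2$ I would again bypass naive perturbation (which fails for the same reason) and instead use the Sherman--Morrison--Woodbury representation underlying Algorithm \ref{algorithm 2}, expressing $(\hat{\bms}_0^{\tau})^{-1}$ and $\bms_0^{-1}$ through $(\hat{\bms}_{0u}^{\tau})^{-1},\hat{\G}_m,\hat{\L}_m^{-1}$ and their population versions. Because $\hat{\L}_m^{-1}=O(d^{-1})$ and $\hat{\G}_m^\T(\hat{\bms}_{0u}^{\tau})^{-1}\hat{\G}_m=O_p(1)$, the Woodbury correction is an $O_p(1)$ matrix whose estimation error is governed by $(\hat{\bms}_{0u}^{\tau})^{-1}-\bms_{0u}^{-1}=O_p(w_n^{1-v}m_d)$ together with the lower-order errors in $\hat{\G}_m$ and $\hat{\L}_m$, so collecting terms yields $O_p(w_n^{1-v}m_d)$. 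The main obstacle is precisely the relative-Frobenius and inverse analyses of the factor part: both quantities defeat the obvious perturbation bounds because the low-rank component has spectral norm of order $d$, and the crux is to neutralize the spikes --- via the projection identity $\bms_0^{-1/2}\G_m=\G_m\L_m^{-1/2}$ in the relative norm and via the Woodbury identity in the inverse --- so that only the well-conditioned residual fluctuations, of the comparatively small orders $\sqrt d\,w_n^2$ and $w_n^{1-v}m_d$, survive.
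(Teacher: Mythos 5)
Your proposal is correct and follows essentially the same route as the paper: the paper's entire proof of Theorem~\ref{thm 2} is a two-line citation of Theorem~2.1 in \cite{FLW2018generalPOET} given the pilot rates from Theorem~\ref{thm 1}, and what you have written is a faithful reconstruction of that generic POET argument (max-norm control of the principal orthogonal complement exactly as in the paper's Lemma~\ref{lemma 6}, the standard adaptive-thresholding bound, the relative-Frobenius decomposition that neutralizes the spikes via $\bms_0^{-1/2}\G_m=\G_m\L_m^{-1/2}$, and Sherman--Morrison--Woodbury for the inverse), with all the stated rates checking out. The only wrinkle --- that $\bms_{0u}=\sum_{j>m}\lambda_j(\bms_0)u_j(\bms_0)u_j(\bms_0)^{\T}$ is rank $d-m$ and hence not literally invertible, so your appeal to its ``bounded inverse'' needs the usual reinterpretation --- is a defect of the theorem statement itself (inherited from the cited framework), not of your argument.
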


In many cases, making the sparsity assumption on $\V_{0u}=\bms_{0u}^{-1}$ appears more natural compared to the sparsity of $\bms_{0u}$. For example, for $\y\sim EC_d(0,\bms_{0u},\xi)$, the sparsity of $\V_{0u}=\bms_{0u}^{-1}$ encodes the conditional uncorrelatedness relationships between all variables in $\y$. Therefore, following \cite{CaiLiuLuo2011}, we require the sparsity of $\V_{0u}$, which is measured by
\begin{align*}
    M_d=\max_{1\le i\le d}\sum_{j=1}^d|\V_{0u,ij}|^v~~\text{for some}~~v\in[0,1]
\end{align*}
Then we apply Algorithm \ref{algorithm 2} or \ref{algorithm 3} to obtain $\hat{\V}_{0u}$ and $\hat{\V}_0$. The tuning parameter is set to be $\tau=Cw_n$ for some large enough $C>0$.
 
\begin{assumption}\label{assumption 4}
    (For CLIME) $\Vert\V_{0u}\Vert_{\infty}=O(1)$.\\
    (For GLASSO) Let $\kappa\triangleq\max_{1\le i\le d}|\{j:V_{0u,ij}\ne 0\}|$, $\mathcal{E}\triangleq\{(i,j)\in[d]\times[d]:i\ne j,V_{0u,ij}\ne 0\}$, $\mathcal{S}\triangleq\mathcal{E}\cup\{(1,1),\ldots,(d,d)\}$ and $\bms_{0u}^*\triangleq\bms_{0u}\otimes\bms_{0u}$. We suppose \\
    (\romannumeral1) $\kappa=O(1)$, $|\mathcal{E}|=O(d)$.\\
    (\romannumeral2) $\Vert\bms_{0u,\mathcal{S}^c\mathcal{S}}^*\bms_{0u,\mathcal{S}\mathcal{S}}^*\Vert_{\infty}\le 1-\alpha$ for some $\alpha\in(0,1)$.\\
    (\romannumeral3) $\max(\Vert(\bms_{0u,\mathcal{S}\mathcal{S}}^*)^{-1}\Vert_{\infty},\Vert\bms_{0u}\Vert_{\infty})=O(1)$.
\end{assumption}
Assumption \ref{assumption 4} (for GLASSO) aligns with the conditions assumed in \cite{Ravikumar2011}. The following theorem gives the rates of convergence for $\hat{\V}_{0u}$ and $\hat{\V}_0$.
\begin{theorem}\label{thm 3}
    Under Assumption \ref{assumption 1}-\ref{assumption 4}, if $\log d=o(n)$ and $w_n^{1-v}M_d=o(1)$, then as $\min(n,d)\to\infty$,
    \begin{align*}
        &\Vert\hat{\V}_{0u}-\V_{0u}\Vert_{\max}=O_p(w_n)=\Vert\hat{\V}_0-\V_0\Vert_{\max},~~\text{and}\\
        &\Vert\hat{\V}_{0u}-\V_{0u}\Vert_2=O_p(w_n^{1-v}M_d)=\Vert\hat{\V}_0-\V_0\Vert_2.
    \end{align*}
\end{theorem}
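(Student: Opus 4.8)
The plan is to reduce both precision-matrix estimators to a single deterministic input — a max-norm deviation bound on the idiosyncratic scatter estimate $\hat{\bms}_{0u}=\hat{\bms}_0-\hat{\G}_m\hat{\L}_m\hat{\G}_m^\T$ — then invoke the established CLIME and GLASSO theory for $\hat{\V}_{0u}$, and finally transfer the rates from $\hat{\V}_{0u}$ to $\hat{\V}_0$ through the Woodbury identity. First I would establish
\[
\Vert\hat{\bms}_{0u}-\bms_{0u}\Vert_{\max}=O_p(w_n),
\]
which is already the backbone of the proof of Theorem \ref{thm 2}: writing $\hat{\bms}_{0u}-\bms_{0u}=(\hat{\bms}_0-\bms_0)-(\hat{\G}_m\hat{\L}_m\hat{\G}_m^\T-\G_m\L_m\G_m^\T)$, the first term is $O_p(w_n)$ by Theorem \ref{thm 1}, and the low-rank discrepancy is controlled by combining $\Vert\hat{\L}_m\L_m^{-1}-\I_m\Vert_{\max}=O_p(w_n)$ and $\Vert\hat{\G}_m-\G_m\Vert_{\max}=O_p(w_n/\sqrt d)$ with $\L_m\asymp d$ and the delocalization $\Vert\G_m\Vert_{\max}=O(d^{-1/2})$ implied by the pervasiveness in Assumption \ref{assumption 1}.

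For the CLIME estimator I would run the standard argument of \citet{CaiLiuLuo2011}. Since $\bms_{0u}\V_{0u}=\I_d$,
\[
\Vert\hat{\bms}_{0u}\V_{0u}-\I_d\Vert_{\max}=\Vert(\hat{\bms}_{0u}-\bms_{0u})\V_{0u}\Vert_{\max}\le\Vert\hat{\bms}_{0u}-\bms_{0u}\Vert_{\max}\,\Vert\V_{0u}\Vert_1,
\]
and Assumption \ref{assumption 4} (for CLIME) gives $\Vert\V_{0u}\Vert_1=\Vert\V_{0u}\Vert_\infty=O(1)$, so the right side is at most $\tau=Cw_n$ with probability tending to one; hence $\V_{0u}$ is feasible. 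Feasibility together with the $\ell_1$-minimality of $\hat{\V}_{0u}^1$ yields $\Vert\hat{\V}_{0u}^1-\V_{0u}\Vert_{\max}=O_p(\Vert\V_{0u}\Vert_1^2 w_n)=O_p(w_n)$, and the entrywise symmetrization in Step 3 of Algorithm \ref{algorithm 2} only decreases the max-norm error, giving $\Vert\hat{\V}_{0u}-\V_{0u}\Vert_{\max}=O_p(w_n)$. The spectral rate then follows from the $\ell_v$-sparsity $M_d$: for symmetric matrices $\Vert\cdot\Vert_2\le\Vert\cdot\Vert_\infty$, and splitting each row into large and small entries gives $\Vert\hat{\V}_{0u}-\V_{0u}\Vert_\infty\le C M_d\Vert\hat{\V}_{0u}-\V_{0u}\Vert_{\max}^{1-v}=O_p(M_d w_n^{1-v})$. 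For the GLASSO estimator I would instead invoke the primal--dual witness analysis of \citet{Ravikumar2011}: Assumption \ref{assumption 4} (for GLASSO) supplies precisely their incoherence and bounded-Hessian conditions, and feeding in the deviation rate $w_n$ in place of the usual $\sqrt{\log d/n}$ reproduces the same two rates for $\hat{\V}_{0u}$.

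It remains to transfer these bounds to $\hat{\V}_0$. The Woodbury identity applied to $\bms_0=\G_m\L_m\G_m^\T+\bms_{0u}$ gives $\V_0=\V_{0u}-\bm{R}$ with $\bm{R}=\V_{0u}\G_m\bm{K}^{-1}\G_m^\T\V_{0u}$ and $\bm{K}=\L_m^{-1}+\G_m^\T\V_{0u}\G_m$, so that the $\hat{\V}_0$ of Step 4 is exactly the plug-in $\hat{\V}_{0u}-\hat{\bm{R}}$ and $\hat{\V}_0-\V_0=(\hat{\V}_{0u}-\V_{0u})-(\hat{\bm{R}}-\bm{R})$. I would bound $\hat{\bm{R}}-\bm{R}$ by propagating the errors through its three ingredients: $\hat{\V}_{0u}-\V_{0u}$ (rates above), $\hat{\G}_m-\G_m$ ($O_p(w_n/\sqrt d)$ entrywise and $O_p(w_n)$ spectrally, since $m$ is fixed), and $\hat{\bm{K}}^{-1}-\bm{K}^{-1}=O_p(M_d w_n^{1-v})=o_p(1)$, the latter because $\bm{K}$ is a fixed-size matrix bounded away from singularity ($\L_m^{-1}=O(d^{-1})$ is negligible and $\G_m^\T\V_{0u}\G_m\succeq\lambda_{\min}(\V_{0u})\I_m\succ 0$). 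Because $\L_m\asymp d$ and the eigenvectors are delocalized, the factors $\V_{0u}\G_m$ entering $\bm{R}$ have max entry $O(d^{-1/2})$, so every term of $\hat{\bm{R}}-\bm{R}$ carries a gain of order $d^{-1}$ in the max norm; this yields $\Vert\hat{\bm{R}}-\bm{R}\Vert_{\max}=o_p(w_n)$ while $\Vert\hat{\bm{R}}-\bm{R}\Vert_2=O_p(M_d w_n^{1-v})$, so the rates for $\hat{\V}_0$ coincide with those for $\hat{\V}_{0u}$.

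The hard part is twofold. For GLASSO, the delicate point is verifying that the primal--dual witness construction of \citet{Ravikumar2011} survives when the plug-in matrix $\hat{\bms}_{0u}$, rather than a genuine sample covariance, is used as input: one must confirm that the population conditions in Assumption \ref{assumption 4} (for GLASSO) are preserved and that the only stochastic ingredient required is the max-norm rate $w_n$. For the transfer step, the subtlety is the max-norm control of $\hat{\bm{R}}-\bm{R}$, where a crude bound through the spectral norm would only give $M_d w_n^{1-v}$, too weak for the stated max-norm claim; it is precisely the $d^{-1}$ gains from the strong spikes, combined with $\Vert\hat{\G}_m-\G_m\Vert_{\max}=O_p(w_n/\sqrt d)$ and $\Vert\G_m\Vert_{\max}=O(d^{-1/2})$, that render the correction negligible against $w_n$.
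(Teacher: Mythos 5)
Your proposal is correct and follows essentially the same route as the paper: reduce everything to $\Vert\hat{\bms}_{0u}-\bms_{0u}\Vert_{\max}=O_p(w_n)$ (the paper's Lemma A6 combined with Theorem 1), run the Cai--Liu--Luo argument for CLIME and the Ravikumar-type primal--dual witness for GLASSO, and transfer to $\hat{\V}_0$ via the Sherman--Morrison--Woodbury step. The only difference is one of emphasis: the paper outsources the CLIME case and the Woodbury transfer to Theorem 2.2 of \citet{FLW2018generalPOET}, but writes out in full the GLASSO witness construction with the plug-in $\hat{\bms}_{0u}$ (the restricted oracle $\tilde{\V}_{0u}$, a Brouwer fixed-point bound on $\Vert\tilde{\V}_{0u}-\V_{0u}\Vert_{\max}$, and strict dual feasibility under the incoherence condition) --- which is exactly the verification you correctly identified as the delicate point.
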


\subsection{The number of spiked eigenvalues estimation}\label{secnum}

Recalling the spiked structure of $\bms_0$ given in Assumption \ref{assumption 1}, by Theorem \ref{thm 1} and Wely's Theorem, we have
\begin{align*}
    \lambda_j(\hat{\bms}_0)=
    \begin{cases}
        c_jd,&~~\text{if}~~1\le j\le m,\\
        c_jdw_n,&~~\text{if}~~m+1\le j\le d,
    \end{cases}
\end{align*}
with probability tending to one, where $c_j$'s are some positive constants. Then we estimate the number of spiked eigenvalues $m$ via the following eigenvalue ratio or growth ratio,
\begin{align}\label{spike number estimation}
    \hat{m}_{ER}=\mathop{\arg\max}\limits_{1\le j\le M}\frac{\lambda_j(\hat{\bms}_0)}{\lambda_{j+1}(\hat{\bms}_0)},~~\hat{m}_{GR}=\mathop{\arg\max}\limits_{1\le j\le M}\frac{\ln\{1+\lambda_j(\hat{\bms}_0)/V_{j-1}\}}{\ln\{1+\lambda_{j+1}(\hat{\bms}_0)/V_j\}},
\end{align}
where $M$ is a predetermined upper bound of the true $m$, and $V_j=\sum_{l=j+1}^{\min(n,d)-1}\lambda_l(\hat{\bms}_0)$. \cite{Ahn20131203} recommended two methods to choose a suitable value for $M$, which are also suitable for our estimators. In the simulation study, $M$ is set as $8$ but this can be replaced with any other reasonable values. The following theorem show the consistency of $\hat{m}_{ER}$ and $\hat{m}_{GR}$.
\begin{theorem}\label{thm 4}
    Suppose the conditions in Theorem \ref{thm 1} hold, then as $\min(n,d)\to\infty$, for $m\le M \le\min(n,d)-1$, $\Pr(\hat{m}_{ER}=m)\to 1$ and $\Pr(\hat{m}_{GR}=m)\to 1$.
\end{theorem}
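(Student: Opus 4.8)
The plan is to substitute the two-scale eigenvalue expansion stated just above the theorem---namely $\lambda_j(\hat{\bms}_0)\asymp d$ for $1\le j\le m$ and $\lambda_j(\hat{\bms}_0)\asymp dw_n$ for $j>m$, valid on an event of probability tending to one by Theorem \ref{thm 1} and Weyl's inequality---into each selection criterion, and to show that both criteria blow up at the true index $j=m$ while staying bounded at every other $j\in\{1,\dots,M\}$. Because a diverging value at $m$ dominates uniformly bounded competitors with probability tending to one, the corresponding maximizer equals $m$ on that event, which is all that $\Pr(\hat m_{ER}=m)\to1$ and $\Pr(\hat m_{GR}=m)\to1$ require.

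For $\hat m_{ER}$ I would analyze the ratio $\lambda_j(\hat{\bms}_0)/\lambda_{j+1}(\hat{\bms}_0)$ in three regimes. When $1\le j\le m-1$ numerator and denominator are both of order $d$, so the ratio is $O_p(1)$; when $m+1\le j\le M$ both are of order $dw_n$, so the ratio is again $O_p(1)$; and at the single index $j=m$ the numerator is of order $d$ while the denominator is of order $dw_n$, giving $\lambda_m(\hat{\bms}_0)/\lambda_{m+1}(\hat{\bms}_0)\asymp w_n^{-1}\to\infty$. Thus $m$ is the unique maximizer with probability tending to one.

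For $\hat m_{GR}$ I would first record that the tail sums $V_j=\sum_{l=j+1}^{\min(n,d)-1}\lambda_l(\hat{\bms}_0)$ are dominated by their roughly $n$ noise summands, each of order $dw_n$, so that $V_j\asymp ndw_n$ uniformly over $j\le M$ (the at most $m$ spiked summands contribute only $O(d)$). Consequently $\lambda_j(\hat{\bms}_0)/V_{j-1}\to0$ uniformly, being of order $(nw_n)^{-1}$ for $j\le m$ and of order $n^{-1}$ for $j>m$, so the linearization $\ln(1+x)=x\{1+O(x)\}$ applies to both numerator and denominator and yields
\begin{align*}
\frac{\ln\{1+\lambda_j(\hat{\bms}_0)/V_{j-1}\}}{\ln\{1+\lambda_{j+1}(\hat{\bms}_0)/V_j\}}=\frac{\lambda_j(\hat{\bms}_0)}{\lambda_{j+1}(\hat{\bms}_0)}\,\frac{V_j}{V_{j-1}}\{1+o(1)\}.
\end{align*}
Since $V_{j-1}=V_j+\lambda_j(\hat{\bms}_0)$ with $\lambda_j(\hat{\bms}_0)/V_j\to0$, we have $V_j/V_{j-1}\to1$, so the growth ratio is asymptotically the eigenvalue ratio, and the same three-regime computation shows it is $O_p(1)$ off $m$ and of order $w_n^{-1}$ at $m$. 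Hence $\Pr(\hat m_{GR}=m)\to1$.

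The routine part is the algebra above; the real obstacle is the uniform control of the noise part of the spectrum that it silently uses. I must show that all $\lambda_j(\hat{\bms}_0)$ with $j>m$ share the common order $dw_n$ with constants bounded away from $0$ and $\infty$, uniformly up to the rank boundary $j=\min(n,d)-1$. This is what forbids a spurious gap $\lambda_j(\hat{\bms}_0)/\lambda_{j+1}(\hat{\bms}_0)$ among noise indices from rivalling the genuine gap at $m$---a danger that is acute for $\hat m_{ER}$ once $M$ is permitted to approach $\min(n,d)-1$---and what pins down the order of each $V_j$ entering the growth-ratio step. Establishing this two-sided, index-uniform bound from Theorem \ref{thm 1}, Weyl's inequality and the rank-$\min(n,d)$ structure of $\hat{\bms}_0$ is the crux; once it is in place, the maximizer computations reduce to the elementary estimates sketched here.
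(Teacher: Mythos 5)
Your proposal follows essentially the same route as the paper's proof. For $\hat m_{ER}$ the paper performs exactly your three-regime comparison of $\lambda_j(\hat{\bms}_0)/\lambda_{j+1}(\hat{\bms}_0)$. For $\hat m_{GR}$ the paper does not linearize the logarithm; it uses the two-sided elementary inequality $c/(1+c)<\ln(1+c)<c$ to bound the growth ratio above by $\lambda_j(\hat{\bms}_0)/\lambda_{j+1}(\hat{\bms}_0)$ off the true index and below by $\{\lambda_m(\hat{\bms}_0)/\lambda_{m+1}(\hat{\bms}_0)\}\cdot V_m/\{V_{m-1}+\lambda_m(\hat{\bms}_0)\}\asymp w_n^{-1}\cdot \min(n,d)dw_n/\{\min(n,d)dw_n+d\}\to\infty$ at $j=m$; your linearization via $V_j\asymp \min(n,d)\,dw_n$ and $V_j/V_{j-1}\to 1$ reaches the same conclusion and is no less valid, though the paper's inequality has the minor advantage of not requiring $\lambda_j(\hat{\bms}_0)/V_{j-1}\to 0$ for the upper bound. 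The one substantive issue is the step you flag as the crux and leave unproved: the index-uniform two-sided bound $\lambda_j(\hat{\bms}_0)\asymp dw_n$ for $m<j\le M+1$. The paper does not prove this either; it simply asserts the display $\lambda_j(\hat{\bms}_0)=c_jdw_n$ for $j>m$ ``by Theorem 1 and Weyl's Theorem,'' and you are right that Weyl combined with $\Vert\hat{\bms}_0-\bms_0\Vert_2\le d\Vert\hat{\bms}_0-\bms_0\Vert_{\max}=O_p(dw_n)$ only delivers the upper bound $\lambda_j(\hat{\bms}_0)=O_p(dw_n)$; the matching lower bound (needed to rule out a spurious large ratio among noise indices, especially as $j$ approaches the rank boundary $\min(n,d)-1$) does not follow from Weyl and is the genuinely delicate point. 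So your proposal matches the paper's argument in substance and is honest about the same gap the paper papers over; to go beyond the paper you would need a separate lower bound on the bulk spectrum of $\hat{\bms}_0$, which neither you nor the authors supply.
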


\section{Large Matrix Estimation based on Tyler's M-estimator}
Although the high-dimensional spatial-sign covariance matrix can serve as a good approximation to the true scatter matrix under mild conditions, a  theoretical bias still remains between the two. In contrast, as shown in \cite{Tyler1987}, the classical Tyler’s M-estimator coincides exactly with the scatter matrix. Motivated by this property, we also extend the POET framework to incorporate Tyler’s M-estimator.

If the location parameter is known, \cite{Tyler1987} showed that the scatter matrix 
$\hat{\boldsymbol{\Sigma}}$ can be estimated by solving the nonlinear equation
\begin{align*}
\frac{d}{n}\sum_{i=1}^n 
\frac{(\mathbf{X}_i-\boldsymbol{\mu})(\mathbf{X}_i-\boldsymbol{\mu})^{\T}}
{(\mathbf{X}_i-\boldsymbol{\mu})^{\T}
 \hat{\boldsymbol{\Sigma}}^{-1}
 (\mathbf{X}_i-\boldsymbol{\mu})}
= \hat{\boldsymbol{\Sigma}} .
\end{align*}
Let $\boldsymbol{x}_i=\mathbf{X}_i-\boldsymbol{\mu}$.  
\cite{Tyler1987} suggested solving this equation iteratively via
\begin{align*}
\hat{\boldsymbol{\Sigma}}_{k+1}
= 
\frac{
d \sum_{i=1}^n 
\frac{\boldsymbol{x}_i \boldsymbol{x}_i^\T}
{\boldsymbol{x}_i^\T \hat{\boldsymbol{\Sigma}}_k^{-1} \boldsymbol{x}_i}
}{
\operatorname{tr}\!\left(
\sum_{i=1}^n 
\frac{\boldsymbol{x}_i \boldsymbol{x}_i^\T}
{\boldsymbol{x}_i^\T \hat{\boldsymbol{\Sigma}}_k^{-1} \boldsymbol{x}_i}
\right)
}.
\end{align*}
\cite{KentTyler1991} proved that, under mild regularity conditions, this algorithm admits a unique solution and converges globally. The resulting Tyler's M-estimator (TME) is also the maximum likelihood estimator of the shape matrix for the angular central Gaussian distribution \citep{Tyler1987b} and for generalized elliptical distributions \citep{FRAHM2010374}.

However, when $d>n$, the classical TME becomes ill-posed. To address this issue, \cite{sun2014regularized} considered the regularized TME $\hat{\boldsymbol{\Sigma}}(\alpha)$, defined as the solution to
\begin{align*}
\hat{\boldsymbol{\Sigma}}(\alpha)
=
\frac{1}{1+\alpha}
\frac{d}{n}
\sum_{i=1}^n
\frac{\boldsymbol{x}_i \boldsymbol{x}_i^\T}
{\boldsymbol{x}_i^\T \hat{\boldsymbol{\Sigma}}(\alpha)^{-1} \boldsymbol{x}_i}
+
\frac{\alpha}{1+\alpha}\mathbf{I}_d ,
\end{align*}
where $\alpha>0$ is a regularization parameter. They further proposed the iterative algorithm
\begin{align*}
\hat{\boldsymbol{\Sigma}}_{k+1}(\alpha)
=
\frac{1}{1+\alpha}
\frac{d}{n}
\sum_{i=1}^n
\frac{\boldsymbol{x}_i \boldsymbol{x}_i^\T}
{\boldsymbol{x}_i^\T \hat{\boldsymbol{\Sigma}}_k(\alpha)^{-1} \boldsymbol{x}_i}
+
\frac{\alpha}{1+\alpha}\mathbf{I}_d ,
\end{align*}
which converges rapidly in practice. Based on this estimator, \cite{goes2020robust} proposed a thresholding estimator for sparse scatter matrices. Although these approaches are computationally efficient, they fundamentally rely on sparsity assumptions, which may lead to inefficiency when the true scatter matrix is nonsparse. Motivated by this limitation, we develop new POET-type estimators based on TME that remain effective in nonsparse settings.

By the Tyler's M-estimator matrix definition, we proposed a new naive estimator of the scatter matrix $\bms_0$ as follows
\begin{align} \label{tmeeq}
\hat{\boldsymbol{\Sigma}}_T=\frac{d}{n}\sum_{i=1}^n\frac{(\X_i-\hat{\bmu})(\X_i-\hat{\bmu})^{\T}}{(\X_i-\hat{\bmu})^{\T} \hat{\V}_{S}(\X_i-\hat{\bmu})}
\end{align}
where $\hat{\V}_{S}$ is the above POET estimators of $\V_0=\bms_0^{-1}$ based on spatial-sign covariance matrix, i.e., $\hat{\V}_{S}=(\hat{\bms}_{0}^{\tau})^{-1}$ or $\hat{\V}_0$. Theorem \ref{thm 2} and \ref{thm 3} yield
\begin{align*}
    \Vert\hat{\V}_{S}-\V_0\Vert_2=O_p(a_n)~~\text{with}~~a_n:=
    \begin{cases}
        w_n^{1-v}m_d,&~~\text{if}~~\hat{\V}_{S}=(\hat{\bms}_{0}^{\tau})^{-1},\\
        w_n^{1-v}M_d,&~~\text{if}~~\hat{\V}_{S}=\hat{\V}_0.
    \end{cases}
\end{align*}
To proceed, we also adopt the POET procedure for the naive Tyler's estimator $\hat{\bms}_T$ to obtain the final estimator $\hat{\bms}_{TP}$ for $\bms_0$, or adopt the CLIME or GLASSO procedures to obtain the final estimator $\hat{\V}_{T}$ for $\V_0$. The next theorem establishes the convergence rate of the POET-TME estimator $\hat{\bms}_{TP}$ and CLIME-TME or GLASSO-TME estimators $\hat{\V}_{T}$.
\begin{theorem}\label{thm 7}
    Under Assumption \ref{assumption 1}-\ref{assumption 3} and the assumption that $\log d=o(n)$, if $\Vert\hat{\V}_{S}-\V_0\Vert_2=O_p(a_n)$ and $a_n=O(w_n)$, then
    \begin{align*}
        &\Vert\hat{\bms}_T-\bms_0\Vert_{\max}=O_p(w_n),\\
        &\Vert\hat{\bms}_{TP}-\bms_0\Vert_{\bms_0}=O_p(d^{1/2}w_n^2+w_n^{1-v}m_d),\\
        &\Vert(\hat{\bms}_{TP})^{-1}-\bms_0^{-1}\Vert_2=O_p(w_n^{1-v}m_d).
        \end{align*}
Additionally, if Assumption \ref{assumption 4} also holds, we have
        \begin{align*}
        \Vert\hat{\V}_T-\V_0\Vert_{\max}=O_p(w_n),
        \Vert\hat{\V}_T-\V_0\Vert_2=O_p(w_n^{1-v}M_d).
    \end{align*}
\end{theorem}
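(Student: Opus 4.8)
The plan is to reduce the whole theorem to the single entrywise bound $\Vert\hat{\bms}_T-\bms_0\Vert_{\max}=O_p(w_n)$ and then to reuse, essentially verbatim, the proofs of Theorems~\ref{thm 2} and \ref{thm 3}. The POET, CLIME and GLASSO algorithms take as input only a pilot scatter matrix together with its leading eigenvalues and eigenvectors, and the proofs of Theorems~\ref{thm 2}--\ref{thm 3} use precisely the three conclusions of Theorem~\ref{thm 1} (the entrywise rate of the pilot, the relative eigenvalue rate, and the entrywise eigenvector rate), plus the spike structure of Assumption~\ref{assumption 1} and the sparsity levels $m_d$, $M_d$. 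So once I show that $\hat{\bms}_T$, $\diag(\lambda_1(\hat{\bms}_T),\ldots,\lambda_m(\hat{\bms}_T))$ and $(u_1(\hat{\bms}_T),\ldots,u_m(\hat{\bms}_T))$ obey the same three rates as in Theorem~\ref{thm 1}, the bounds for $\hat{\bms}_{TP}$, $(\hat{\bms}_{TP})^{-1}$ and $\hat{\V}_T$ follow at once. Given $\Vert\hat{\bms}_T-\bms_0\Vert_{\max}=O_p(w_n)$, the eigenvalue and eigenvector rates are obtained exactly as in Theorem~\ref{thm 1}: Weyl's inequality with $\Vert\cdot\Vert_2\le d\Vert\cdot\Vert_{\max}$ and the spike magnitude $\lambda_j(\bms_0)\asymp d$ give the relative eigenvalue rate $O_p(w_n)$, and the same entrywise eigenvector perturbation argument (Davis--Kahan with eigengap $\asymp d$) gives the eigenvector rate $O_p(w_n/\sqrt{d})$.

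Thus it remains to prove the first display. Write $\x_i=\X_i-\bmu=\xi_i\A\U_i$, set $\bm{w}_i:=\A\U_i$, and introduce the oracle $\bar{\bms}_T:=\frac{d}{n}\sum_{i=1}^n\x_i\x_i^\T/(\x_i^\T\V_0\x_i)$ built from the true precision $\V_0=\bms_0^{-1}$ and the true location $\bmu$. The decisive algebraic fact is that with the true $\V_0$ the heavy-tailed radius cancels: since $\A^\T\V_0\A=d^{-1}\tr(\bms)\,\I_d$ (using $\bms_0=d\bms/\tr(\bms)$, $\bms=\A\A^\T$ and the invertibility of $\A$), one has $\x_i^\T\V_0\x_i=\xi_i^2\,\tr(\bms)/d$, so each summand equals $\frac{d}{\tr(\bms)}\bm{w}_i\bm{w}_i^\T$ and is free of $\xi_i$; with $\E[\U_i\U_i^\T]=d^{-1}\I_d$ this gives $\E[\bar{\bms}_T]=\bms_0$, i.e.\ an unbiased average of bounded directional terms. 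I then split
\begin{align*}
\hat{\bms}_T-\bms_0=(\bar{\bms}_T-\bms_0)+(\hat{\bms}_T-\bar{\bms}_T).
\end{align*}
For the oracle term the entries of $\frac{d^2}{\tr(\bms)}(\bm{w}_i\bm{w}_i^\T-\E\,\bm{w}_i\bm{w}_i^\T)$ have variance $O(1)$, because $\E[(\bm{w}_i)_j^2]=\bms_{jj}/d=O(1/d)$ by Assumption~\ref{assumption 2} and $\tr(\bms)\asymp d$ under Assumption~\ref{assumption 1}; a Bernstein bound over the $d^2$ entries then yields $\Vert\bar{\bms}_T-\bms_0\Vert_{\max}=O_p(\sqrt{\log d/n})$. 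This is the same computation as for the spatial-sign matrix in Theorem~\ref{thm 1}, only with the constant denominator $\tr(\bms)/d$ replacing $\U_i^\T\bms\U_i$; the replacement of $\bmu$ by the spatial median $\hat{\bmu}$ is handled exactly as there and contributes the $O_p(\sqrt{\log n/n})$ term.

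The genuinely new term is $\hat{\bms}_T-\bar{\bms}_T$, which isolates the replacement of $\V_0$ by $\hat{\V}_S$. Here the radius cancels a second time: writing the difference of reciprocals as $\x_i^\T(\V_0-\hat{\V}_S)\x_i/\{(\x_i^\T\hat{\V}_S\x_i)(\x_i^\T\V_0\x_i)\}$ and substituting $\x_i=\xi_i\bm{w}_i$, all four powers of $\xi_i$ cancel, leaving
\begin{align*}
(\hat{\bms}_T-\bar{\bms}_T)_{jk}=\frac{d^2}{n\,\tr(\bms)}\sum_{i=1}^n\frac{(\bm{w}_i)_j(\bm{w}_i)_k\,\bm{w}_i^\T(\V_0-\hat{\V}_S)\bm{w}_i}{\bm{w}_i^\T\hat{\V}_S\bm{w}_i}.
\end{align*}
I would then factor out the scalar $\Vert\hat{\V}_S-\V_0\Vert_2=O_p(a_n)$ through $|\bm{w}_i^\T(\V_0-\hat{\V}_S)\bm{w}_i|\le\Vert\hat{\V}_S-\V_0\Vert_2\Vert\bm{w}_i\Vert_2^2$, bound the denominators below by $\bm{w}_i^\T\hat{\V}_S\bm{w}_i\ge c\,\tr(\bms)/d$ uniformly in $i$, and control the residual empirical average $\frac1n\sum_i(\bm{w}_i)_j^2\Vert\bm{w}_i\Vert_2^2=O_p(1/d)$ uniformly in $j$. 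Since $\tr(\bms)\asymp d$, the resulting prefactor $d^3/\{\tr(\bms)\}^2\asymp d$ cancels exactly this $O_p(1/d)$, leaving $\Vert\hat{\bms}_T-\bar{\bms}_T\Vert_{\max}=O_p(a_n)=O_p(w_n)$ and completing the first display.

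The step I expect to be hardest is the uniform lower bound on the data-dependent quadratic forms $\bm{w}_i^\T\hat{\V}_S\bm{w}_i$. Since $\bm{w}_i^\T\V_0\bm{w}_i=\tr(\bms)/d$ is only of order one, the perturbation $\bm{w}_i^\T(\hat{\V}_S-\V_0)\bm{w}_i$ must be shown negligible relative to it \emph{uniformly} over $i$, which forces control of $\max_{1\le i\le n}\Vert\bm{w}_i\Vert_2^2$. The spike in Assumption~\ref{assumption 1} makes $\Vert\bm{w}_i\Vert_2^2=\U_i^\T\A^\T\A\U_i$ fluctuate up to $O_p(\log n)$ through its projection onto the leading directions, so one needs $\Vert\hat{\V}_S-\V_0\Vert_2\max_i\Vert\bm{w}_i\Vert_2^2=o_p(1)$, which holds under the stated conditions $a_n=O(w_n)$ and $\log d=o(n)$ and keeps the denominators bounded away from zero. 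Once this uniform control and the accompanying entrywise concentration of $\frac1n\sum_i(\bm{w}_i)_j^2\Vert\bm{w}_i\Vert_2^2$ are established, parts two through four follow by running the proofs of Theorems~\ref{thm 2} and \ref{thm 3} with $\hat{\bms}_T$ and its leading eigenstructure in place of $\hat{\bms}_0$, $\hat{\L}_m$, $\hat{\G}_m$.
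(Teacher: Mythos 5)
Your overall strategy matches the paper's: establish the max-norm rate for the naive Tyler estimator $\hat{\bms}_T$, deduce the eigenvalue/eigenvector rates exactly as in Theorem~\ref{thm 1}, and then rerun the proofs of Theorems~\ref{thm 2}--\ref{thm 3} with $\hat{\bms}_T$ as the pilot; the paper's own proof ends with precisely that reduction. For the max-norm bound itself, however, you take a noticeably different route. The paper expands $\hat{\X}_{T,i}\hat{\X}_{T,i}^{\T}$ around the Gaussian oracle $\tilde{\bms}_0=\frac{d}{\tr(\bms)n}\sum_i\tilde{\X}_i\tilde{\X}_i^{\T}$ into four terms and controls the key normalizer deviation $\frac{1}{n}\sum_i\bigl|\frac{d\xi_i^2}{\hat{r}_{T,i}^2\Vert\z_i\Vert_2^2}-\frac{d}{\tr(\bms)}\bigr|^2$ via two dedicated lemmas (Lemmas~\ref{lemma 7} and~\ref{lemma 8}), pairing it with $\max_j\frac1n\sum_i\tilde{X}_{ij}^4=O_p(1)$ through Cauchy--Schwarz. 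You instead introduce the sign-type oracle $\bar{\bms}_T$ built from the true $\V_0$, exploit the exact cancellation $\x_i^{\T}\V_0\x_i=\xi_i^2\tr(\bms)/d$, and isolate the effect of $\hat{\V}_S-\V_0$ in a single difference-of-reciprocals term. This is the same underlying identity the paper uses (it is what makes $r_i^2r_{T,i}^{-2}=\frac{d}{\tr(\bms)}\frac{\z_i^{\T}\bms\z_i}{\z_i^{\T}\z_i}$ free of $\xi_i$ in Lemma~\ref{lemma 7}), but your organization makes the role of $\Vert\hat{\V}_S-\V_0\Vert_2$ more transparent and correctly identifies the uniform lower bound on $\bm{w}_i^{\T}\hat{\V}_S\bm{w}_i$ as the delicate point, for the right reason (the $O_p(\log n)$ fluctuation of $\Vert\bm{w}_i\Vert_2^2$ along the spiked directions).

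Two steps in your sketch are thinner than they can afford to be, though both are repairable. First, the claim $\max_{j,k}\frac1n\sum_i|(\bm{w}_i)_j(\bm{w}_i)_k|\,\Vert\bm{w}_i\Vert_2^2=O_p(1/d)$ does not follow from bounding $\max_i\Vert\bm{w}_i\Vert_2^2$ separately --- that loses a $\log n$ factor and would leave you with $O_p(a_n\log n)$, which is not $O(w_n)$ under the stated hypotheses. You need the averaged version: Cauchy--Schwarz over $i$ against $\frac1n\sum_i\Vert\bm{w}_i\Vert_2^4=O_p(1)$ together with $\max_j\frac1n\sum_i(\bm{w}_i)_j^4=O_p(d^{-2})$ (a Bernstein bound with a union over $j$, using $\log d=o(n)$); this is exactly the role played by display~(B.2) and Lemma~\ref{lemma 8} in the paper. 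Second, your decomposition treats the $\V_0\to\hat{\V}_S$ swap with the \emph{true} location, and you defer the $\bmu\to\hat{\bmu}$ swap to ``exactly as in Theorem~\ref{thm 1}''; but in $\hat{r}_{T,i}=\Vert\hat{\V}_S^{1/2}(\X_i-\hat{\bmu})\Vert_2$ the two estimation errors enter the denominator jointly, and the cross terms require moment bounds such as $\frac1n\sum_ir_i^2\hat{r}_{T,i}^{-2}=O_p(1)$ and $\frac1n\sum_i|\sqrt{d}\xi_i/(\hat{r}_{T,i}^2\Vert\z_i\Vert_2)|^2=O_p(\zeta_4^{1/2})$ that do not literally carry over from the spatial-sign case --- the paper proves them afresh in Lemmas~\ref{lemma 7}--\ref{lemma 8}. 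Neither issue breaks the argument, but both need to be written out for the proof to close at the claimed rate.
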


\begin{remark}
A natural idea is to iterate the estimating equation~\eqref{tmeeq} by replacing it with
\begin{align} \label{tmeeq2}
\hat{\bms}_{TP,k+1}^{(0)}
= \frac{d}{n}\sum_{i=1}^n 
\frac{(\X_i-\hat{\bmu})(\X_i-\hat{\bmu})^{\T}}
     {(\X_i-\hat{\bmu})^{\T} \hat{\bms}_{TP,k}^{-1}(\X_i-\hat{\bmu})},
\end{align}
where $\hat{\bms}_{TP,k}$ denotes the estimator obtained at the $k$-th step. 
After computing the intermediate estimator $\hat{\bms}_{TP,k+1}^{(0)}$, 
one may further apply the POET procedure to obtain the refined estimator 
$\hat{\bms}_{TP,k+1}$. Repeating these two steps yields an iterative algorithm 
that in principle can be continued until convergence.

However, our simulation studies indicate that this fully iterative scheme 
does not provide substantial improvement over the proposed POET--TME estimator. 
In fact, the estimator obtained after the first POET refinement already achieves 
a level of accuracy comparable to that of the fully iterated sequence. 
This suggests that the primary gain comes from the initial POET correction, 
and additional iterations produce only marginal benefits while incurring extra 
computational cost. Therefore, in practice, performing this replication step once 
is sufficient to obtain a statistically efficient and computationally scalable estimator.
\end{remark}

\begin{remark}
\cite{ding2025sub} develop an idiosyncratic-projected self-normalization 
method for estimating the scatter matrix. 
Intuitively, their approach replaces the initial estimator $\V_S$ in~\eqref{tmeeq} 
with the matrix $\hat{\mathbf{P}}^{\T}\hat{\mathbf{P}}$, where $\hat{\mathbf{P}}$ is a $(d-m)\times d$ 
projection matrix satisfying 
$\hat{\mathbf{P}}\hat{\G}_{KED}=0$ and $\hat{\mathbf{P}}\hat{\mathbf{P}}^{\T}=\I_{d-m}$. 
Here, $\hat{\G}_{KED}$ denotes the $m$ leading eigenvectors of the 
multivariate Kendall's tau matrix. 
In essence, the estimator $\hat{\mathbf{P}}^{\T}\hat{\mathbf{P}}$ is used as a surrogate for 
the precision matrix $\bms^{-1}$. 
However, this construction focuses solely on the leading $m$ principal components 
and consequently discards the information contained in the residual component 
$\bms_u$. 

As shown in Theorems~\ref{thm 2} and~\ref{thm 3}, the POET-based estimator 
constructed from the spatial-sign covariance matrix provides a more accurate 
and stable initial estimator for the scatter matrix, as it retains information 
from both the factor structure and the idiosyncratic component. 
This highlights the importance of choosing a high-quality initial estimator in 
the Tyler-type fixed-point iteration~\eqref{tmeeq}. 
Moreover, the same principle suggests that other well-performing scatter estimators 
may also serve as effective substitutes for $\hat{\V}_S$ in~\eqref{tmeeq}, 
leading to procedures analogous to our proposed method. 
Exploring such alternatives may further broaden the applicability and robustness 
of POET-enhanced Tyler estimators in high-dimensional heavy-tailed settings.

In addition, \cite{ding2025sub} employ the Huber estimator to estimate the 
location parameter $\bmu$. Although the Huber estimator is well known for its 
robustness under heavy-tailed distributions, it is not the most natural choice 
for elliptically symmetric distributions. In such settings, the sample spatial 
median is widely regarded as a more suitable and efficient estimator of location; 
see \citep{oja2010multivariate}. 
The spatial median enjoys affine equivariance and high robustness, and it aligns 
well with the geometric structure of elliptical distributions. 
Therefore, when the underlying model adheres to elliptical symmetry, replacing the 
Huber estimator with the sample spatial median may lead to improved statistical 
efficiency and better compatibility with scatter estimation procedures based on 
self-normalization or spatial-sign transformations.

\end{remark}

Furthermore, we also proposed an POET-TME estimator of the covariance matrix $\bms_{\X}:=\cov(\X_i)$. In detail, note that
\begin{align*}
    \bms_{\X}=d^{-1}\E(\xi_i^2)\bms=d^{-1}\E(r_{T,i}^2)\bms_0,~~\text{where}~~r_{T,i}:=\Vert\V_0^{1/2}(\X_i-\bmu)\Vert_2.
\end{align*}
We estimate the $d^{-1}\E(r_{T,i}^2)$ by the following robust estimator $d^{-1}\widehat{\E(r_{T,i}^2)}$ that solves
\begin{align*}
    \sum_{i=1}^nH_h\left(d^{-1}\hat{r}_{T,i}^2-d^{-1}\widehat{\E(r_{T,i}^2)}\right)=0~~\text{where}~~\hat{r}_{T,i}:=\Vert\hat{\V}_S^{1/2}(\X_i-\hat{\bmu})\Vert_2,
\end{align*}
$H_h(x)=\min\{h,\max(-h,x)\}$ is the Huber function, and $h$ is a tuning parameter. The estimator of $\bms_{\X}$ is then simply
\begin{align*}
    \hat{\bms}_{\X,T}=d^{-1}\widehat{\E(r_{T,i}^2)}\hat{\bms}_T,~~\text{and}~~\hat{\bms}_{\X,TP}=d^{-1}\widehat{\E(r_{T,i}^2)}\hat{\bms}_{TP}.
\end{align*}
The next proposition gives the convergence rate of $d^{-1}\widehat{\E(r_{T,i}^2)}$ and $\hat{\bms}_{\X,T},\hat{\bms}_{\X,TP}$.
\begin{proposition}
    Suppose the conditions in Theorem \ref{thm 7} hold, in addition, $\E(|\xi/\sqrt{d}|^{2+\varepsilon})<\infty$ for some $\varepsilon\in(0,2)$, and $h=cn^{2/(2+\varepsilon)}$ for some constant $c>0$. Then,
    \begin{align*}
        &\left|d^{-1}\widehat{\E(r_{T,i}^2)}-d^{-1}\E(r_{T,i}^2)\right|=O_p(w_n+n^{-\frac{\varepsilon}{2+\varepsilon}}),\\
        &\Vert\hat{\bms}_{\X,T}-\bms_{\X}\Vert_{\max}=O_p(w_n+n^{-\frac{\varepsilon}{2+\varepsilon}}),\\
        &\Vert\hat{\bms}_{\X,TP}-\bms_{\X}\Vert_{\bms_{\X}}=O_p(d^{1/2}w_n^2+w_n^{1-v}m_d+n^{-\frac{\varepsilon}{2+\varepsilon}}),~~\text{and}\\
        &\Vert(\hat{\bms}_{\X,TP})^{-1}-\bms_{\X}^{-1}\Vert_2=O_p(w_n^{1-v}m_d+n^{-\frac{\varepsilon}{2+\varepsilon}}).
    \end{align*}
\end{proposition}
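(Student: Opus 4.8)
The plan is to exploit the multiplicative structure of the estimators. Writing $\theta:=d^{-1}\E(r_{T,i}^2)$ and $\hat\theta:=d^{-1}\widehat{\E(r_{T,i}^2)}$, we have $\bms_{\X}=\theta\bms_0$, $\hat{\bms}_{\X,T}=\hat\theta\hat{\bms}_T$ and $\hat{\bms}_{\X,TP}=\hat\theta\hat{\bms}_{TP}$, so that every matrix bound reduces to combining the scalar error $|\hat\theta-\theta|$ with the matrix errors already delivered by Theorem \ref{thm 7}. The first step is an exact evaluation of the normalizing constant. Using $\X_i-\bmu=\xi_i\A\U_i$ together with $\A^\T\V_0\A=\{\tr(\bms)/d\}\I_d$ (which follows from $\A\A^\T=\bms$, $q=d$, and $\V_0=\bms_0^{-1}=\{\tr(\bms)/d\}\bms^{-1}$), I obtain $g_i:=d^{-1}r_{T,i}^2=d^{-1}(\X_i-\bmu)^\T\V_0(\X_i-\bmu)=\{\tr(\bms)/d\}(\xi_i^2/d)$, so that $\theta=\tr(\bms)/d=O(1)$ and $g_i-\theta=\theta\{(\xi_i/\sqrt d)^2-1\}$. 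The hypothesis $\E(|\xi/\sqrt d|^{2+\varepsilon})<\infty$ then yields $\E|g_i-\theta|^{1+\varepsilon/2}<\infty$, exactly the integrability needed for a Huber mean estimator with $h\asymp n^{2/(2+\varepsilon)}$ to attain the rate $n^{-\varepsilon/(2+\varepsilon)}$.

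Second, I would bound $|\hat\theta-\theta|$ by comparing the feasible estimating equation to an oracle one. Let $\tilde\theta$ solve $\sum_i H_h(g_i-\tilde\theta)=0$; standard robust mean estimation theory under $(1+\varepsilon/2)$ moments gives $|\tilde\theta-\theta|=O_p(n^{-\varepsilon/(2+\varepsilon)})$. The feasible estimator uses $\hat g_i:=d^{-1}(\X_i-\hat\bmu)^\T\hat{\V}_S(\X_i-\hat\bmu)$ in place of $g_i$. Since $H_h$ is $1$-Lipschitz, $\sup_\theta|\frac1n\sum_i\{H_h(\hat g_i-\theta)-H_h(g_i-\theta)\}|\le\frac1n\sum_i|\hat g_i-g_i|$, and splitting $\hat g_i-g_i$ into the scatter part $d^{-1}(\X_i-\bmu)^\T(\hat{\V}_S-\V_0)(\X_i-\bmu)$ and the centering part, I bound the first by $\|\hat{\V}_S-\V_0\|_2\cdot\frac1n\sum_i d^{-1}\|\X_i-\bmu\|_2^2=O_p(w_n)\cdot O_p(1)$ (the sample average of $d^{-1}\|\X_i-\bmu\|_2^2$ has mean $\tr(\bms)/d=O(1)$), and the centering part via the spatial-median rate $\|\hat\bmu-\bmu\|_2=O_p(\sqrt{d/n})$ established en route to Theorem \ref{thm 1}, which contributes $O_p(n^{-1/2})=O_p(w_n)$. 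Hence $\frac1n\sum_i|\hat g_i-g_i|=O_p(w_n)$. Finally, as the two monotone estimating functions are uniformly $O_p(w_n)$-close and the oracle function has slope bounded away from zero near $\theta$ (as $h\to\infty$, $\frac1n\sum_i\mathbbm{1}(|g_i-\theta|\le h)\to1$), a monotonicity/curvature argument yields $|\hat\theta-\tilde\theta|=O_p(w_n)$ and hence $|\hat\theta-\theta|=O_p(w_n+n^{-\varepsilon/(2+\varepsilon)})$, which is the first display.

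Third, the matrix bounds follow by plugging in. For the max norm, $\hat{\bms}_{\X,T}-\bms_{\X}=\hat\theta(\hat{\bms}_T-\bms_0)+(\hat\theta-\theta)\bms_0$ with $\|\hat{\bms}_T-\bms_0\|_{\max}=O_p(w_n)$ from Theorem \ref{thm 7}, $\hat\theta=O_p(1)$ and $\|\bms_0\|_{\max}=O(1)$, giving the stated rate. For the relative-Frobenius and spectral bounds I use that $\bms_{\X}=\theta\bms_0$ implies $\|\M\|_{\bms_{\X}}=\theta^{-1}\|\M\|_{\bms_0}$ and $\|\bms_0\|_{\bms_{\X}}=\theta^{-1}=O(1)$, so the same decomposition with Theorem \ref{thm 7} contributes $O_p(d^{1/2}w_n^2+w_n^{1-v}m_d)$ from the matrix term and $O_p(w_n+n^{-\varepsilon/(2+\varepsilon)})$ from the scalar term; since $d\ge n$ forces $d^{1/2}w_n\ge1$ and hence $w_n\le d^{1/2}w_n^2$, the extra $w_n$ is absorbed. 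The precision bound is analogous, writing $(\hat{\bms}_{\X,TP})^{-1}-\bms_{\X}^{-1}=\hat\theta^{-1}\{(\hat{\bms}_{TP})^{-1}-\bms_0^{-1}\}+(\hat\theta^{-1}-\theta^{-1})\bms_0^{-1}$ and using $\|\bms_0^{-1}\|_2=O(1)$ together with $m_d\gtrsim1$, so that $w_n\le w_n^{1-v}m_d$.

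The main obstacle I anticipate is the perturbation step of the second paragraph: controlling $\frac1n\sum_i|\hat g_i-g_i|$ simultaneously against the plug-in error $\hat{\V}_S-\V_0$, the centering error $\hat\bmu-\bmu$, and the heavy tails of $\xi_i$. The crucial point, and the reason the crude bound $\|\hat{\V}_S-\V_0\|_2\cdot\max_i d^{-1}\|\X_i-\bmu\|_2^2$ must be avoided, is that only the sample average of $d^{-1}\|\X_i-\bmu\|_2^2$ is $O_p(1)$, whereas its maximum diverges because of the spiked eigenvalue of order $d$; routing the argument through the $1$-Lipschitz bound so that only the average appears is what keeps the contribution at $O_p(w_n)$.
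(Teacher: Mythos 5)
Your proposal is correct in substance and rests on the same two technical pillars as the paper's proof: (a) the exact identity $d^{-1}r_{T,i}^2=\{\tr(\bms)/d\}(\xi_i^2/d)$, which converts the moment hypothesis on $\xi/\sqrt d$ into a $(1+\varepsilon/2)$-moment bound and hence a truncation bias of order $h^{-\varepsilon/2}=n^{-\varepsilon/(2+\varepsilon)}$; and (b) the $1$-Lipschitz property of $H_h$, which lets the plug-in error enter only through the sample \emph{average} $\frac1n\sum_i|\hat g_i-g_i|\le\Vert\hat\V_S-\V_0\Vert_2\cdot\frac1n\sum_i d^{-1}r_i^2+\cdots=O_p(w_n)$ rather than through a maximum that the spiked eigenvalue would blow up. Your closing remark correctly identifies this as the crux. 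Where you differ is in the organization of the scalar M-estimation step: the paper never introduces an oracle estimator, but instead uses the one-sided inequality $H_h(x)-H_h(y)\ge(x-y)\mathbbm{1}(x<h,\,y>-h)$ to get $|\hat\theta-\theta|\le\tilde p_h^{-1}\bigl|\frac1n\sum_iH_h(d^{-1}\hat r_{T,i}^2-\theta)\bigr|$ directly, and then devotes its Step II to proving $\tilde p_h=1+o_p(1)$. Your route via an oracle $\tilde\theta$ solving $\sum_iH_h(g_i-\tilde\theta)=0$, a uniform $O_p(w_n)$ comparison of the two estimating functions, and monotonicity is a legitimate alternative that modularizes the argument (the oracle rate is citable from the adaptive-Huber literature), at the cost of one extra object. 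A cosmetic difference: the paper centers the truncation analysis on $III_i=g_i\Vert\z_i\Vert_2^2/d$ rather than on $g_i$ itself, generating an extra $O_p(d^{-1/2})$ term that your parametrization avoids; both have mean $\theta$, so nothing hinges on this.

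The one step you should not leave at the level of ``a monotonicity/curvature argument'' is the localization of $\hat\theta$. The slope of the estimating function is $-\frac1n\sum_i\mathbbm{1}(|\hat g_i-t|<h)$, and to claim it is bounded away from zero \emph{between} $\tilde\theta$ and $\hat\theta$ you must first rule out that $\hat\theta$ escapes to a region where the indicators die (e.g.\ $\hat\theta<0$ or $\hat\theta>h/2$). The paper handles this explicitly by evaluating $\frac1n\sum_iH_h(d^{-1}\hat r_{T,i}^2-t)$ at $t=0$ and $t=h/2$ and showing the sign change occurs inside $(0,h/2)$ with probability tending to one; with monotonicity of $t\mapsto\frac1n\sum_iH_h(\hat g_i-t)$ this is routine but it is a genuine step, not a consequence of local curvature alone. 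Everything else — the exact value $\theta=\tr(\bms)/d\in(c,C)$ under Assumption~2, the decompositions $\hat{\bms}_{\X,T}-\bms_{\X}=\hat\theta(\hat{\bms}_T-\bms_0)+(\hat\theta-\theta)\bms_0$ and its precision-matrix analogue, the identity $\Vert\M\Vert_{\bms_{\X}}=\theta^{-1}\Vert\M\Vert_{\bms_0}$, and the absorptions $w_n\le d^{1/2}w_n^2$ (from $d\ge n$) and $w_n\le w_n^{1-v}m_d$ — checks out; the paper itself dispatches these matrix bounds in one sentence as consequences of Theorem~\ref{thm 7}, so your explicit decompositions are if anything more complete on that front.
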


\section{Simulation}
\subsection{Estimation of scatter matrix and covariance matrix}\label{sec41}
We evaluate the finite-sample performance of the proposed POET-SS and POET-TME estimators under high-dimensional elliptical factor models.
Given \( n \) independent samples \((\f_j, \u_j)_{1\le j\le n}\) of \((\f, \u)\), the observed data are constructed according to the model $ \y_j = \B \f_j + \u_j$.
The population covariance matrix of \( \y \) is therefore given by $\bms_0 = \frac{d(\B\B^{\T} + \mathbf{\Sigma}_u)}{\tr(\B\B^{\T} + \mathbf{\Sigma}_u)}$.
In the non-sparse setting, we consider a loading matrix $\B = (B_{ij})_{1 \leq i \leq d,\, 1 \leq j \leq m}$, where the entries are independently generated as $ B_{ij} \stackrel{\text{i.i.d.}}{\sim} N(0, s_i) $. We set the number of factors to \( m = 3 \) and specify the loading variances as \( s_1 = 1 \), \( s_2 = 0.75^2 \), and \( s_3 = 0.5^2 \). For a fair comparison, we set the number of factors are all known for all methods. To simulate the data, we generate pairs \((\f, \u)\) from the following distributions:
\begin{itemize}
\item[(I)] multivariate normal distribution $N(\bm 0,\mathbf{\Sigma}_{fu})$;
\item[(II)] multivariate $t$-distribution $t_\nu(\bm 0,\mathbf{\Sigma}_{fu})$ with degree of freedom $\nu=4$;
\item[(III)] multivariate $t$-distribution $t_\nu(\bm 0,\mathbf{\Sigma}_{fu})$ with degree of freedom $\nu=2.2$;
\item[(IV)] mixture normal distribution $0.8N(\bm 0,\mathbf{\Sigma}_{fu})+0.2N(\bm 0,10\mathbf{\Sigma}_{fu})$.
\end{itemize} 
Here covariance matrix \( \mathbf{\Sigma}_{fu}=c \cdot \diag(\I_m,\mathbf{\Sigma}_u) \), where the normalization constant is defined as $c = \frac{d}{\tr(\B\B^{\T} + \mathbf{\Sigma}_u)}$.   Here we set $\mathbf{\Sigma}_u=(0.9^{|i-j|})_{1\le i,j\le d}$.

We assess the performance of five scatter matrix estimators: 
\begin{itemize}
\item the proposed POET estimator based on spatial sign covariance, denoted by POET-SS; 
\item the proposed POET estimator based on Tyler's M-estimator, denoted by POET-TME
\item the robust generic POET estimator from \cite{FLW2018generalPOET}, denoted by FLW; 
\item the original POET estimator based on the sample covariance matrix \citep{fan2013large}, denoted by SAMPLE;
\item the POET estimator applied to the regularized Tyler’s M-estimator (TME) from \cite{goes2020robust}, denoted by RegTME.
\end{itemize}

To implement RegTME, we first symmetrize the data to remove the mean via $\tilde{\y}_j = \y_{2j+1} - \y_{2j+2}$, $\text{for } j = 1, \ldots, \left\lfloor \frac{n}{2} \right\rfloor$, and then apply the regularized TME to $\tilde{\y}_j$. Following Theorems 1–2 in \citet{goes2020robust}, the regularization parameter $\alpha$ in equations (1.4) and (1.5) is set as $\alpha = \max\left(0.1,\ 1.1\left(\gamma-1 + \hat{s}_{\max}(1 + \sqrt{\gamma})^2\right)\right)$, where $\gamma = \frac{d}{2n}$ and $\hat{s}_{\max}$ denotes the spectral norm of the sample covariance matrix of $\tilde{\y}_j$. For the FLW and SAMPLE methods, $\hat{\bms}$ and $\hat{\L}_m$ are normalized by $d/\tr(\hat{\bms})$ to obtain estimators of $\bms_0$ and $\L_{m}$. The maximum norm error for estimating $\Gamma_m$ is scaled by $\sqrt{d}$, i.e.,$
\sqrt{d}\Vert\hat{\G}_m - \G_m \Vert_{\max} $. For $\L_{m}$, we report the ratio error: $
\Vert \hat{\L}_{m} (\L_{m})^{-1} - \I_m\Vert_{\max}$.

\begin{table}[htbp]
  \centering
  \caption{Estimation performance of each method for scatter matrix with $(n,d)=(250,500)$. Reported values are the mean and standardized deviation (in parentheses).}
  \label{tab:table1}
  \setlength{\tabcolsep}{2pt} 
  \begin{tabular}{llllll}
    \toprule
 & {\small$\|\hat{\bms}_0-\bms_0\|_{max}$} 
 & {\small $\|\hat{\L}_K\L^{-1}_{K}-\I_d\|_{max}$} 
 & {\small$\sqrt{d}\|\hat{\G}_K-\G_{K}\|_{max}$}  
 & {\small$\|\hat{\bms}_0-\bms_0\|_{\bms_0}$ } 
 & {\small$\|\hat{\bms}_{0u}-\bms_{0u}\|_2$} \\   \hline
\multicolumn{6}{c}{Scenario (I)} \\ \hline
POET-SS   & 0.927 (0.234) & 0.228 (0.041) & 0.589 (0.173) & 3.97 (0.149) & \textbf{3.41} (0.184) \\
POET-TME  & \textbf{0.485} (0.116) & \textbf{0.094} (0.046) & 0.536 (0.155) & 2.97 (0.139) & 4.04 (0.164) \\
IPSN      & 0.512 (0.128) & 0.103 (0.048) & 0.542 (0.159) & 3.11 (0.150) & 4.38 (0.158) \\
FLW       & 0.497 (0.117) & 0.106 (0.044) & 0.547 (0.158) & 4.06 (0.156) & 3.95 (0.172) \\
SAMPLE    & 0.488 (0.121) & 0.095 (0.045) & \textbf{0.534} (0.157) & \textbf{2.46} (0.121) & 3.69 (0.173) \\
RegTME    & 1.01 (0.254) & 0.229 (0.055) & 0.853 (0.289) & 4.34 (0.221) & 3.76 (0.212) \\ \hline

\multicolumn{6}{c}{Scenario (II)} \\ \hline
POET-SS   & 0.950 (0.238) & 0.231 (0.042) & 0.601 (0.179) & 3.97 (0.154)  & \textbf{3.43} (0.191) \\
POET-TME  & \textbf{0.489} (0.126) & \textbf{0.098} (0.051) & \textbf{0.539} (0.154) & \textbf{2.97} (0.146)  & 4.05 (0.171) \\
IPSN      & 0.513 (0.133) & 0.107 (0.053) & 0.548 (0.158) & 3.12 (0.160)  & 4.39 (0.164) \\
FLW       & 0.785 (0.392) & 0.145 (0.055) & 0.589 (0.168) & 6.39 (1.540)   & 4.48 (1.600) \\
SAMPLE    & 0.919 (0.475) & 0.181 (0.112) & 0.990 (0.438) & 2.95 (0.953)  & 4.09 (0.939) \\
RegTME    & 1.030 (0.277)  & 0.230 (0.056) & 0.846 (0.255) & 4.33 (0.227)  & 3.78 (0.220) \\ \hline

\multicolumn{6}{c}{Scenario (III)} \\ \hline
POET-SS & 0.950 (0.233) & 0.231 (0.0415) & 0.600 (0.189) & 3.98 (0.151) & \textbf{3.42} (0.186) \\
POET-TME & \textbf{0.491} (0.124) & \textbf{0.101} (0.052) & \textbf{0.540} (0.166) & \textbf{2.97} (0.141) & 4.04 (0.167) \\
IPSN & 0.519 (0.133) & 0.109 (0.055) & 0.552 (0.178) & 3.14 (0.156) & 4.38 (0.161) \\
FLW & 1.79 (1.20) & 0.214 (0.077) & 0.619 (0.193) & 11.97 (5.35) & 9.41 (7.03) \\
SAMPLE & 2.23 (1.57) & 0.381 (0.296) & 1.96 (0.707) & 6.96 (5.27) & 7.21 (3.91) \\
RegTME & 1.02 (0.264) & 0.232 (0.059) & 0.869 (0.318) & 4.34 (0.217) & 3.76 (0.216) \\ \hline

\multicolumn{6}{c}{Scenario (IV)} \\ \hline
POET-SS   & 0.944 (0.249) & 0.229 (0.040) & 0.597 (0.188) & 3.97 (0.152) & \textbf{3.42} (0.186) \\
POET-TME  & \textbf{0.489} (0.129) & \textbf{0.096} (0.048) & \textbf{0.537} (0.160) & \textbf{2.97} (0.139) & 4.05 (0.165) \\
IPSN      & 0.594 (0.205) & 0.109 (0.055) & 0.594 (0.191) & 3.35 (0.172) & 4.41 (0.163) \\
FLW       & 0.818 (0.218) & 0.183 (0.053) & 0.638 (0.190) & 8.28 (0.400) & 4.46 (0.220) \\
SAMPLE    & 1.020  (0.271) & 0.197 (0.100) & 1.190  (0.405) & 3.60 (0.390) & 4.29 (0.282) \\
RegTME    & 1.020  (0.271) & 0.229 (0.054) & 0.851 (0.292) & 4.34 (0.219) & 3.77 (0.221) \\ \hline
  \end{tabular}
\end{table}

Table~\ref{tab:table1} reports the estimation results for all competing methods under the setting $(n,d)=(250,500)$. Overall, the proposed POET-TME procedure demonstrates the most stable and accurate performance across a wide range of distributions.

For the Gaussian case, both SAMPLE and POET-TME perform comparably in terms of estimation accuracy, and SAMPLE enjoys high computational efficiency. Nevertheless, POET-TME achieves nearly identical accuracy while retaining robustness, indicating that POET-TME does not sacrifice efficiency in well-behaved scenarios.

For the heavy-tailed settings, the superiority of POET-TME becomes much more pronounced. In almost all scenarios, POET-TME yields the smallest estimation errors, particularly in estimating the leading eigenvalues and eigenvectors. This highlights its ability to effectively mitigate the influence of extreme observations and model the underlying low-rank structure even when the data deviates significantly from normality.

Because of the intrinsic bias between the spatial-sign covariance matrix and the true covariance matrix, the POET-SS method exhibits a larger error in estimating $\|\hat{\L}_K\L_K^{-1}-\I_d\|{\max}$ compared to POET-TME and IPSN. However, POET-SS still performs well in estimating the idiosyncratic component and typically attains the smallest $\|\hat{\bms}_{0u}-\bms_{0u}\|_2$.

The RegTME method also suffers from bias in eigenvalue estimation under the nonsparse structure, leading to inferior performance relative to POET-TME. In contrast, POET-TME effectively incorporates thresholding and robust eigenstructure estimation, enabling it to capture both the global low-rank factor structure and the local idiosyncratic variations with high accuracy. As a result, POET-TME consistently achieves a favorable balance between robustness, efficiency, and precision, establishing it as the most reliable estimator among all considered methods.

Next, we evaluate the performance of the covariance matrix estimators. In this part, we focus on four representative methods: POET-TME, SAMPLE, FLW, and IPSN, while keeping all simulation settings identical to those described earlier. The numerical results are summarized in Table~\ref{tab:sigma_err}.

For the multivariate normal distribution, all four estimators exhibit comparable performance across the considered error metrics, indicating that the benefits of robustification are less pronounced under light-tailed settings. In contrast, for the heavy-tailed scenarios (Models (II)–(IV)), substantial differences emerge. POET-TME consistently achieves the smallest estimation errors, demonstrating its superior robustness against tail heaviness. IPSN ranks second and also shows stable performance, followed by FLW, which performs moderately well but is more sensitive to tail behavior. SAMPLE exhibits the largest errors under heavy-tailed distributions, confirming that the sample covariance is unreliable in such settings.

Overall, these results highlight the advantages of robust covariance estimation in heavy-tailed environments and underscore the effectiveness of the POET-TME approach in maintaining accuracy across a wide range of distributional conditions.

\begin{table}[htbp]
\centering
\caption{Estimation performance of each method for covariance matrix with $(n,d)=(250,500)$. Reported values are the mean and standardized deviation (in parentheses).}
\renewcommand{\arraystretch}{1.2}
\begin{tabular}{lccc}
\hline
Method    & $\|\hat{\bms}-\bms\|_{max}$ & $\|\hat{\bms}-\bms\|_{\bms}$ & $\|\hat{\bms}-\bms\|_2$ \\
\hline
\multicolumn{4}{c}{Scenario I}\\ \hline
SAMPLE    & \textbf{0.55}(0.18) & \textbf{2.43}(0.03) & \textbf{24.59}(7.75) \\
POET-TME & 0.56(0.18) & 2.88(0.04) & 24.61(7.33) \\
FLW       & 0.57(0.18) & 4.21(0.12) & 24.96(7.57) \\
IPSN      & 0.57(0.18) & 3.09(0.05) & 25.15(7.98) \\
\hline
\multicolumn{4}{c}{Scenario II}\\ \hline
SAMPLE    & 1.20(1.07) & 3.00(1.94) & 53.68(50.26) \\
POET-TME & \textbf{0.70}(0.29) & \textbf{2.77}(0.33) & \textbf{29.61}(10.58) \\
FLW       & 1.18(1.15) & 7.14(4.05) & 36.19(23.98) \\
IPSN      & 0.72(0.28) & 2.82(0.34) & 29.97(10.63) \\
\hline
\multicolumn{4}{c}{Scenario III}\\ \hline
SAMPLE    & 4.11(11.37) & 6.02(20.91) & 173.08(465.20) \\
POET-TME & \textbf{2.68}(0.68)  & \textbf{0.82}(0.35)  & \textbf{105.35}(19.19)  \\
FLW       & 3.92(11.68) & 12.56(48.41) & 113.01(161.69) \\
IPSN      & 2.79(0.66)  & 0.84(0.30)  & 109.25(17.23)  \\ 
\hline
\multicolumn{4}{c}{Scenario IV}\\ \hline
SAMPLE    & 1.32(0.53) & 3.13(0.30) & 57.73(20.37) \\
POET-TME & \textbf{0.73}(0.33) & \textbf{2.90}(0.43) & \textbf{31.55}(12.58) \\
FLW       & 1.27(0.62) & 9.24(1.40) & 39.95(17.49) \\
IPSN      & 0.89(0.43) & 3.18(0.61) & 36.09(15.28) \\
\hline
\end{tabular}
\label{tab:sigma_err}
\end{table}

\subsection{Estimation of precision matrix}
Next, we turn to the estimation of the precision matrix. The simulation design is identical to that in Section~\ref{sec41}, except that we specify the idiosyncratic precision matrix as
$
\V_u = \left(0.4^{|i-j|}\right)_{1\le i,j\le d}.
$
Table~\ref{tab22} reports various matrix–norm errors for estimating both $\V$ and $\V_u$ across the considered methods. Several clear patterns emerge. First, POET-SS and POET-TME exhibit almost indistinguishable performance under all norms, confirming that incorporating robust scatter estimators into the POET framework yields stable and reliable precision estimation. Second, both methods substantially outperform the alternative approaches when the data are heavy-tailed. In particular, they attain significantly smaller Frobenius, max, and spectral norm errors, highlighting the advantage of leveraging robustness in high-dimensional precision matrix estimation. 

Overall, these results demonstrate that POET-SS and POET-TME provide a powerful and accurate strategy for estimating large precision matrices in the presence of heavy-tailed noise. Moreover, if computational efficiency is also a concern, the POET-SS estimator may be especially appealing due to its lower runtime while maintaining comparable accuracy.

\begin{table}[htbp]\label{tab22}
\centering
  \setlength{\tabcolsep}{2pt} 
\caption{Estimation performance of each method for precision matrix with $(n,d)=(250,500)$. Reported values are the mean and standardized deviation (in parentheses).}
\begin{tabular}{lcccccc}
\hline
Method & $\|\hat{\V}-\V\|_F$ & $\|\hat{\V}_u-\V_u\|_F$ & $\|\hat{\V}-\V\|_{max}$ & $\|\hat{\V}_u-\V_u\|_{max}$ & $\|\hat{\V}-\V\|_2$ & $\|\hat{\V}_u-\V_u\|_2$ \\
\hline
&\multicolumn{6}{c}{Scenario (I)} \\ \hline
POET-SS        
& 2.13 (0.02) & 2.15 (0.02) & \textbf{1.50} (0.15) & \textbf{1.52} (0.15) & 4.27 (0.07) & 4.29 (0.07) \\

POET-TME   
& \textbf{2.11} (0.02) & \textbf{2.12} (0.02) & 1.51 (0.15) & 1.53 (0.15) & \textbf{4.24} (0.07) & \textbf{4.26} (0.07) \\

IPSN        
& 2.39 (0.03) & 2.40 (0.03) & 2.04 (0.17) & 2.06 (0.17) & 4.96 (0.08) & 4.97 (0.08) \\

SAMPLE      
& 3.28 (0.04) & 3.30 (0.04) & 3.01 (0.19) & 3.04 (0.19) & 6.81 (0.09) & 6.83 (0.09) \\

FLW         
& 3.14 (0.03) & 3.16 (0.03) & 2.62 (0.17) & 2.64 (0.17) & 6.49 (0.08) & 6.51 (0.08) \\

RegTME      
& 5.26 (0.05) & 5.28 (0.05) & 5.40 (0.26) & 5.44 (0.25) & 10.3 (0.09) & 10.3 (0.09) \\
\hline
&\multicolumn{6}{c}{Scenario (II)} \\ \hline
POET-SS        
& 2.14 (0.02) & 2.15 (0.02) & \textbf{1.51} (0.14) & \textbf{1.52} (0.15) & 4.28 (0.06) & 4.29 (0.06) \\

POET-TME   
& \textbf{2.11} (0.02) & \textbf{2.12} (0.02) & 1.52 (0.14) & 1.54 (0.15) & \textbf{4.25} (0.06) & \textbf{4.26} (0.06) \\

IPSN        
& 2.39 (0.03) & 2.40 (0.03) & 2.06 (0.15) & 2.08 (0.16) & 4.97 (0.07) & 4.98 (0.07) \\

SAMPLE      
& 4.36 (0.43) & 4.39 (0.43) & 4.52 (0.71) & 4.57 (0.72) & 8.60 (0.65) & 8.62 (0.65) \\

FLW         
& 3.59 (0.15) & 3.62 (0.15) & 3.90 (0.84) & 4.16 (1.04) & 7.48 (0.41) & 7.54 (0.48) \\

RegTME      
& 5.26 (0.05) & 5.28 (0.05) & 5.41 (0.23) & 5.46 (0.24) & 10.3 (0.08) & 10.3 (0.09) \\
\hline
&\multicolumn{6}{c}{Scenario (III)} \\ \hline
POET-SS        
& 2.14 (0.02) & 2.15 (0.02) & \textbf{1.54} (0.15) & \textbf{1.55} (0.15) & 4.29 (0.06) & 4.30 (0.06) \\

POET-TME   
& \textbf{2.11} (0.02) & \textbf{2.13} (0.02) & 1.55 (0.15) & 1.57 (0.15) & \textbf{4.26} (0.06) & \textbf{4.27} (0.06) \\

IPSN        
& 2.40 (0.03) & 2.41 (0.03) & 2.10 (0.17) & 2.12 (0.17) & 4.98 (0.07) & 4.99 (0.07) \\

SAMPLE      
& 7.82 (5.53) & 7.85 (5.55) & 9.05 (7.37) & 9.12 (7.41) & 13.8 (7.93) & 13.9 (7.93) \\

FLW         
& 4.33 (1.05) & 4.36 (1.06) & 10.7 (6.31) & 11.9 (7.07) & 12.8 (5.81) & 13.4 (6.36) \\

RegTME      
& 5.26 (0.05) & 5.28 (0.05) & 5.44 (0.28) & 5.48 (0.29) & 10.3 (0.08) & 10.3 (0.09) \\ 
\hline
&\multicolumn{6}{c}{Scenario (IV)} \\ \hline
POET-SS        
& 2.14 (0.02) & 2.15 (0.02) & \textbf{1.50} (0.15) & \textbf{1.52} (0.15) & 4.28 (0.07) & 4.29 (0.07) \\

POET-TME   
& \textbf{2.11} (0.02) & \textbf{2.12} (0.02) & 1.51 (0.14) & 1.53 (0.14) & \textbf{4.24} (0.07) & \textbf{4.25} (0.07) \\

IPSN        
& 2.40 (0.03) & 2.42 (0.03) & 2.09 (0.16) & 2.11 (0.16) & 4.99 (0.07) & 5.00 (0.07) \\

SAMPLE      
& 9.59 (0.36) & 9.62 (0.36) & 10.3 (0.55) & 10.3 (0.55) & 16.5 (0.62) & 16.5 (0.62) \\

FLW         
& 4.15 (0.10) & 4.17 (0.10) & 4.63 (0.38) & 4.77 (0.39) & 8.45 (0.18) & 8.48 (0.18) \\

RegTME      
& 5.26 (0.05) & 5.28 (0.05) & 5.42 (0.26) & 5.47 (0.27) & 10.3 (0.10) & 10.3 (0.10) \\
\hline
\end{tabular}
\end{table}

\subsection{Estimation of number of factor models}
In this subsection, we evaluate the estimation accuracy of the number of factors. We examine the two proposed methods introduced in subsection~\ref{secnum}. The data-generating settings are identical to those in subsection~\ref{sec41}, except that we now consider a range of dimensionalities, $d = 200, 300, 400, 500$. Table~\ref{tab:num} reports the empirical frequencies of the estimated factor numbers for both the ER and GR methods based on 1000 replications.

From the results, we observe a clear improvement in estimation accuracy as the dimension increases. Both methods exhibit consistency and successfully identify the true number of factors when $d$ becomes sufficiently large. When the dimension is relatively moderate, the GR method demonstrates better stability and higher accuracy compared with the ER method. This indicates that GR is more robust in finite-sample, moderate-dimensional settings. Therefore, in practical applications---especially when the dimension is not extremely large---we recommend using the GR method as a more reliable choice.

\begin{table}[htbp]
	\centering
\caption{The frequency of the estimated factor number of ER and GR methods.}
\vspace{0.5cm}
	\begin{tabular}{l|lllll|lllll} \hline
  &\multicolumn{5}{c}{ER} &\multicolumn{5}{c}{GR}\\ \hline
 $d$ & 1& 2 &3 &4 &5& 1& 2 &3 &4 &5\\ \hline
   &\multicolumn{10}{c}{Scenario (I)}\\ \hline
200&0.015&0.282&0.703&0&0&0&0.131&0.869&0&0\\
300&0&0.039&0.961&0&0&0&0.008&0.992&0&0\\
400&0&0.002&0.998&0&0&0&0&1&0&0\\
500&0&0&1&0&0&0&0&1&0&0\\ \hline
  &\multicolumn{10}{c}{Scenario (II)}\\ \hline
200&0.022&0.275&0.703&0&0&0&0.133&0.867&0&0\\
300&0&0.043&0.957&0&0&0&0.005&0.995&0&0\\
400&0&0.002&0.998&0&0&0&0&1&0&0\\
500&0&0&1&0&0&0&0&1&0&0\\ \hline
 &\multicolumn{10}{c}{Scenario (III)}\\ \hline
200&0.016&0.293&0.691&0&0&0.002&0.134&0.864&0&0\\
300&0&0.053&0.947&0&0&0&0.007&0.993&0&0\\
400&0&0.003&0.997&0&0&0&0&1&0&0\\
500&0&0&1&0&0&0&0&1&0&0\\ \hline
 &\multicolumn{10}{c}{Scenario (IV)}\\ \hline
200&0.02&0.309&0.671&0&0&0&0.145&0.855&0&0\\
300&0&0.035&0.965&0&0&0&0.002&0.998&0&0\\
400&0&0.001&0.999&0&0&0&0&1&0&0\\
500&0&0&1&0&0&0&0&1&0&0\\ \hline
	\end{tabular}
	\label{tab:num}
\end{table}

\section{Real data application}
The concept of the global minimum variance portfolio (MVP) originates from the seminal work of \cite{markowitz1952portfolio}, who established the foundation of modern portfolio theory. Markowitz showed that, under mean–variance preferences, investors should choose portfolios that optimally balance expected return and risk. Within this theory, the MVP represents a special and fundamental portfolio: it is the portfolio on the efficient frontier that achieves the lowest possible variance among all portfolios of risky assets, regardless of expected returns. In practice, the MVP is widely used in risk management, pension fund allocation, and asset allocation strategies that emphasize capital preservation.

The global minimum variance portfolio aims to minimize the variance of portfolio returns subject to a specified expected return level, which solves
\begin{align*}
\min \boldsymbol{w}^{\T} \bms  \boldsymbol{w}, \text{subject to}~~ \boldsymbol{w}^{\T} \boldsymbol{1}=1,
\end{align*}
where $\boldsymbol{w}$ is a length $d$ vector of portfolio weights, and $\boldsymbol{1}$ is a length $d$ vector of ones.
The minimum variance portfolio has the closed-form solution
\[
\mathbf{w}^*
= \frac{1}{\mathbf{1}^{{\T}}\boldsymbol{\Sigma}^{-1}\mathbf{1}}
  \boldsymbol{\Sigma}^{-1}\mathbf{1}.
\]
Under an elliptical model, the same expression remains valid when replacing the covariance matrix by the scatter matrix, that is,
\[
\mathbf{w}^*
= \frac{1}{\mathbf{1}^{{\T}}\boldsymbol{\Sigma}_0^{-1}\mathbf{1}}
  \boldsymbol{\Sigma}_0^{-1}\mathbf{1}.
\]
Therefore, using $\boldsymbol{\Sigma}^{-1}$ or $\boldsymbol{\Sigma}_0^{-1}$ yields identical minimum variance portfolio weights. Given an estimator of the scatter or covariance matrix, denoted by $\hat{\boldsymbol{\Sigma}}_0$ or $\hat{\boldsymbol{\Sigma}}$, the MVP weights are computed as
\begin{align}
\hat{\boldsymbol{w}}^*
= \frac{1}{\mathbf{1}^{\T} \hat{\boldsymbol{\Sigma}}_0^{-1} \mathbf{1}}
  \hat{\boldsymbol{\Sigma}}_0^{-1} \mathbf{1},
\qquad
\text{or}
\qquad
\hat{\boldsymbol{w}}^*
= \frac{1}{\mathbf{1}^{\T} \hat{\boldsymbol{\Sigma}}^{-1} \mathbf{1}}
  \hat{\boldsymbol{\Sigma}}^{-1} \mathbf{1}.
\end{align}
Thus, the performance of a covariance or scatter matrix estimator can be assessed by examining the out-of-sample risk of the resulting MVP \citep{fan2012vast,ledoit2017nonlinear,ding2021,ding2025sub}.

To evaluate the proposed procedures, we construct MVPs using daily returns of S\&P 500 constituent stocks from January 1995 to December 2023. Portfolio weights for each month are obtained from covariance matrices estimated using the daily returns over the preceding 120 months (10 years). Only stocks that have continuously remained in the S\&P 500 during this estimation window are included. Using these monthly weights, we compute daily portfolio returns and aggregate them to obtain the annualized risk for each calendar year from 2005 to 2023, enabling a comprehensive comparison of different estimators.

In addition to the proposed POET-TME and POET-SS methods, we include IPSN, FLW, and SAMPLE as competing estimators. The equal-weight portfolio $(1/d, \ldots, 1/d)^{T}$, denoted EW, is also reported as a benchmark. The number of factors for each method is determined using the GR procedure, applied separately to the covariance or scatter matrix estimated by that method. In other words, for every estimator under comparison, we first compute its corresponding estimated matrix and then use the GR method to obtain the data-driven estimate of the factor dimension subsequently used in portfolio construction.

To formally assess whether the MVP constructed using POET-TME exhibits statistically lower risk than portfolios based on alternative estimators, we conduct the following one-sided hypothesis test:
\begin{align}
\label{eq:hypothesis_sigma}
H_0: \sigma \geq \sigma_0 
\quad \text{vs.} \quad 
H_1: \sigma < \sigma_0,
\end{align}
where $\sigma$ denotes the standard deviation of the portfolio constructed via POET\text{-}TME, and $\sigma_0$ represents that of a competing method. The test is implemented using the methodology of \cite{ledoit2011robust}. The standard errors required for the test statistic are estimated using heteroskedasticity- and autocorrelation-consistent (HAC) estimators.

\begin{table}
\centering
\caption{Out-of-sample portfolio risk. The symbols *, **, and *** indicate statistical significance for testing at the 5\%, 1\%, and 0.1\% levels, respectively.}
\label{tab:table5}
\renewcommand{\arraystretch}{0.6}
\setlength{\tabcolsep}{12pt} 
\begin{tabular}{l l l l l l}
\toprule
 & \multicolumn{5}{c}{Out-of-sample portfolio risk} \\
\cmidrule(lr){2-6}

Period & 2005--2023 & 2005 & 2006 & 2007 & 2008 \\
\midrule
POET-TME    & 0.1219 & 0.0880 & 0.0729 & 0.0942 & 0.2268 \\
POET-SS     & 0.1250\textsuperscript{***} & 0.0889\textsuperscript{*} & 0.0728 & 0.0970\textsuperscript{***} & 0.2383\textsuperscript{***} \\
IPSN   & 0.1221\textsuperscript{*} & 0.0897\textsuperscript{***} & 0.0727 & 0.0930 & 0.2231 \\
FLW         & 0.1339\textsuperscript{***} & 0.0924\textsuperscript{**} & 0.0804\textsuperscript{***} & 0.1011\textsuperscript{***} & 0.2405 \\
SAMPLE      & 0.1248\textsuperscript{*} & 0.0903\textsuperscript{**} & 0.0780\textsuperscript{***} & 0.1017\textsuperscript{***} & 0.2319 \\
RegTME      & 0.1291\textsuperscript{***} & 0.0900 & 0.0750\textsuperscript{*} & 0.1006\textsuperscript{***} & 0.2451\textsuperscript{***} \\
EW          & 0.2075\textsuperscript{***} & 0.1091\textsuperscript{***} & 0.1043\textsuperscript{***} & 0.1609\textsuperscript{***} & 0.4329\textsuperscript{***} \\

\midrule
Period & 2009 & 2010 & 2011 & 2012 & 2013 \\
\midrule
POET-TME    & 0.1460 & 0.0948 & 0.1067 & 0.0706 & 0.0799 \\
POET-SS     & 0.1474 & 0.1015\textsuperscript{***} & 0.1183\textsuperscript{***} & 0.0704 & 0.0786 \\
IPSN   & 0.1471 & 0.0925 & 0.1061 & 0.0725\textsuperscript{*} & 0.0818\textsuperscript{**} \\
FLW         & 0.1887\textsuperscript{***} & 0.0974 & 0.1163\textsuperscript{***} & 0.0933\textsuperscript{***} & 0.0918\textsuperscript{***} \\
SAMPLE      & 0.1477 & 0.0997\textsuperscript{**} & 0.1119\textsuperscript{*} & 0.0794\textsuperscript{***} & 0.0871\textsuperscript{***} \\
RegTME      & 0.1540 & 0.1006\textsuperscript{***} & 0.1217\textsuperscript{***} & 0.0769\textsuperscript{***} & 0.0839\textsuperscript{***} \\
EW          & 0.3255\textsuperscript{***} & 0.1971\textsuperscript{***} & 0.2544\textsuperscript{***} & 0.1371\textsuperscript{***} & 0.1177\textsuperscript{***} \\

\midrule
Period & 2014 & 2015 & 2016 & 2017 & 2018 \\
\midrule
POET-TME    & 0.0804 & 0.1128 & 0.0939 & 0.0635 & 0.1223 \\
POET-SS     & 0.0805 & 0.1146\textsuperscript{**} & 0.0907 & 0.0607 & 0.1208 \\
IPSN   & 0.0801 & 0.1101 & 0.0984\textsuperscript{***} & 0.0667\textsuperscript{***} & 0.1219 \\
FLW         & 0.0945\textsuperscript{***} & 0.1295\textsuperscript{***} & 0.1106\textsuperscript{***} & 0.0845\textsuperscript{***} & 0.1310\textsuperscript{**} \\
SAMPLE      & 0.0883\textsuperscript{***} & 0.1228\textsuperscript{***} & 0.1018\textsuperscript{***} & 0.0693\textsuperscript{***} & 0.1255 \\
RegTME      & 0.0823 & 0.1169\textsuperscript{**} & 0.0991\textsuperscript{*} & 0.0658 & 0.1261\textsuperscript{*} \\
EW          & 0.1138\textsuperscript{***} & 0.1523\textsuperscript{***} & 0.1434\textsuperscript{***} & 0.0702 & 0.1528\textsuperscript{***} \\

\midrule
Period & 2019 & 2020 & 2021 & 2022 & 2023 \\
\midrule
POET-TME    & 0.0903 & 0.2516 & 0.0991 & 0.1347 & 0.0989 \\
POET-SS     & 0.0895 & 0.2636\textsuperscript{**} & 0.0952 & 0.1375 & 0.0946 \\
POET-IPSN   & 0.0943\textsuperscript{**} & 0.2490 & 0.1025\textsuperscript{***} & 0.1359 & 0.1026\textsuperscript{***} \\
FLW         & 0.1130\textsuperscript{***} & 0.2510 & 0.1055\textsuperscript{*} & 0.1448\textsuperscript{**} & 0.1084\textsuperscript{***} \\
SAMPLE      & 0.0909 & 0.2422 & 0.1020 & 0.1363 & 0.0975 \\
RegTME      & 0.0957\textsuperscript{**} & 0.2646\textsuperscript{*} & 0.1005 & 0.1400\textsuperscript{*} & 0.1089\textsuperscript{**} \\
EW          & 0.1245\textsuperscript{***} & 0.3939\textsuperscript{***} & 0.1378\textsuperscript{***} & 0.2161\textsuperscript{***} & 0.1371\textsuperscript{***} \\

\bottomrule
\end{tabular}
\end{table}
Overall, the results in Table \ref{tab:table5} demonstrate that the POET-TME procedure delivers consistently superior out-of-sample performance relative to a wide range of competing covariance and scatter matrix estimators. Across nearly all years from 2005 to 2023, POET-TME attains one of the lowest annualized risks among the minimum variance portfolios, and it exhibits remarkable robustness during both tranquil and turbulent market conditions, including the 2008 financial crisis and the COVID-19 period.

Formally, the one-sided tests based on \citet{ledoit2011robust} reveal that the risk of the POET-TME portfolio is significantly lower than that of portfolios constructed using POET-SS, IPSN, FLW, and SAMPLE in many years, as indicated by the frequent appearance of significance markers in Table \ref{tab:table5}. These results imply that the efficiency gains of POET-TME are not due to random fluctuations but represent systematic improvements in estimating the underlying scatter structure.

Compared with classical principal-component-based methods or robust scatter estimators, POET-TME effectively integrates factor modeling with Tyler’s M-estimator, yielding scatter estimates that remain stable under heavy-tailed returns while still capturing the low-rank factor structure intrinsic to large financial panels. This balance between robustness and structural efficiency appears to be the key reason why POET-TME achieves substantially lower portfolio risk and demonstrates consistently strong performance across market regimes.

Collectively, the empirical evidence shows that POET-TME is a highly reliable and effective tool for minimum variance portfolio construction, outperforming both traditional and modern competing estimators in terms of realized out-of-sample risk.

\section{Conclusion}
In this paper, we proposed two high-dimensional matrix estimation procedures—POET-SS and POET-TME—constructed within the POET framework and motivated by the spatial-sign covariance matrix and Tyler’s M-estimator. The POET-SS estimator serves as an effective and computationally efficient initial estimator for POET-TME, providing stable performance even in challenging high-dimensional settings. Building on this foundation, POET-TME demonstrates clear superiority over several existing methods, particularly under heavy-tailed distributions where robust estimation is essential.

The proposed methodology opens several promising directions for future research. One natural extension is to incorporate these robust estimators into classification frameworks, such as linear discriminant analysis \citep{zhuang2025spatial} and quadratic discriminant analysis \citep{shen2025spatial}. Another avenue lies in developing robust procedures for canonical correlation analysis \citep{qian2025high} under elliptical models. Additionally, the robust scatter estimation techniques explored in this work may offer substantial benefits for high-dimensional hypothesis testing problems \citep{yan2025high}. Investigating these applications could further enhance the understanding and utility of Tyler's M-estimator-based estimators in modern multivariate analysis.

\section{Appendix}

\setcounter{equation}{0}
\renewcommand{\theequation}{A.\arabic{equation}}
\renewcommand{\thetheorem}{A\arabic{theorem}}
\renewcommand{\theassumption}{A\arabic{assumption}}
\renewcommand{\thealgorithm}{A\arabic{algorithm}}

\subsection{Useful Lemmas}
\setcounter{equation}{0}
\renewcommand{\theequation}{A.\arabic{equation}}
\begin{lemma}\label{lemma 1}
    Under Assumption 3, we have $\Vert\hat{\bmu}-\bmu\Vert_2=O_p(\zeta_1^{-1}n^{-1/2})$ and
    \begin{align*}
       \hat{\bmu}-\bmu=\zeta_1^{-1}\frac{1}{n}\sum_{i=1}^n\U_i+o_p(\zeta_1^{-1}n^{-1/2}), 
    \end{align*}
    where $\U_i=(\X_i-\bmu)/\Vert \X_i-\bmu\Vert_2$.
\end{lemma}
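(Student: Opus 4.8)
The plan is to treat the spatial median as an $M$-estimator and extract a Bahadur-type expansion from its first-order condition. Because the $\X_i$ are continuous with $\Pr(\X_i=\bmu)=0$, the convex objective $\sum_i\|\X_i-\bth\|_2$ is differentiable at its minimizer with probability one, so $\hat{\bmu}$ solves the score equation $n^{-1}\sum_{i=1}^n U(\X_i-\hat{\bmu})=\bm 0$. Writing $\Psi_n(\bth)=n^{-1}\sum_i U(\X_i-\bth)$ and using $\partial U(\x)/\partial\x^\T=\|\x\|_2^{-1}(\I_d-U(\x)U(\x)^\T)$, I would Taylor-expand $\Psi_n$ about $\bmu$ to obtain $\bm 0=\Psi_n(\bmu)-\bm B_n(\hat{\bmu}-\bmu)+\bm R_n$, where the score is $\Psi_n(\bmu)=n^{-1}\sum_i\U_i$, the empirical curvature is $\bm B_n=n^{-1}\sum_i r_i^{-1}(\I_d-\U_i\U_i^\T)$, and $\bm R_n$ is a second-order remainder governed by the second derivative of $U$, whose entries scale like $r_i^{-2}$.

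First I would establish a preliminary rate. The population curvature is $\bm B:=\E[r_1^{-1}(\I_d-\U_1\U_1^\T)]$; exploiting the high-dimensional concentration of $\|\A\U\|_2$, which decouples the radius $r_i$ from the direction $\U_i$ in the elliptical model, one shows $\bm B=\zeta_1(\I_d-\bfS)\{1+o(1)\}$, so that Assumption 3(iii) forces $\lambda_{\min}(\bm B)\ge\zeta_1\psi\{1+o(1)\}>0$. A quadratic-growth/convexity argument then controls $\|\hat{\bmu}-\bmu\|_2$ by the ratio of the linear score to the curvature, i.e. by $\|n^{-1}\sum_i\U_i\|_2/\lambda_{\min}(\bm B_n)$. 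Since $\E\|n^{-1}\sum_i\U_i\|_2^2=n^{-1}$ gives $\|n^{-1}\sum_i\U_i\|_2=O_p(n^{-1/2})$ and $\lambda_{\min}(\bm B_n)\asymp\zeta_1$ with high probability, this delivers $\|\hat{\bmu}-\bmu\|_2=O_p(\zeta_1^{-1}n^{-1/2})$.

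With the rate in hand I would control the remainder via its integral form, $\|\bm R_n\|_2\lesssim(n^{-1}\sum_i r_i^{-2})\,\|\hat{\bmu}-\bmu\|_2^2$, where the sub-Gaussianity of $\nu_i=\zeta_1^{-1}r_i^{-1}$ in Assumption 3(ii) rules out $r_i$ being pathologically small along the segment $[\bmu,\hat{\bmu}]$. By Assumption 3(i) the average $n^{-1}\sum_i r_i^{-2}$ concentrates at $\zeta_2$ with $\zeta_2\zeta_1^{-2}<\zeta$, so $\|\bm R_n\|_2=O_p(\zeta_2\zeta_1^{-2}n^{-1})=O_p(n^{-1})=o_p(\zeta_1^{-1}n^{-1/2})$. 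Inverting the linearized equation then yields $\hat{\bmu}-\bmu=\bm B_n^{-1}\{n^{-1}\sum_i\U_i\}+o_p(\zeta_1^{-1}n^{-1/2})$, and replacing $\bm B_n^{-1}$ by its population limit $\zeta_1^{-1}(\I_d-\bfS)^{-1}$ reproduces the claimed leading term $\zeta_1^{-1}n^{-1}\sum_i\U_i$ up to the correction carried by $\bfS$.

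The hard part will be the curvature analysis together with the final identification of the leading coefficient. Because $r_i\asymp\sqrt d$ while the effective radius $\zeta_1^{-1}n^{-1/2}\asymp\sqrt{d/n}$ both diverge with $d$, one cannot appeal to finite-dimensional $M$-estimation theory: the curvature lower bound and the remainder control must be obtained through dimension-free concentration for the self-normalized variables $\nu_i=\zeta_1^{-1}r_i^{-1}$ and for $\|\A\U\|_2$, which is precisely what the moment and sub-Gaussian conditions in Assumption 3(i)--(ii) supply. In particular, justifying that $\E[r_1^{-1}\U_1\U_1^\T]$ factorizes as $\zeta_1\bfS$ up to negligible error, and thereby reducing $\bm B_n^{-1}$ to the scalar normalization $\zeta_1^{-1}$, is the delicate step; it is exactly here that Assumption 3(iii) (ensuring $\I_d-\bfS$ is uniformly invertible) and the radius--direction decoupling of the elliptical model do the essential work.
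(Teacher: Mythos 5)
The paper does not prove this lemma in-house; it simply cites Lemma S2.4 of the reference \cite{ZWF2025SSPCA}, so the comparison below is with your argument on its own merits. Your general framework is the standard and reasonable one: the score equation $n^{-1}\sum_{i}U(\X_i-\hat{\bmu})=\bm 0$, the empirical curvature $\bm B_n=n^{-1}\sum_i r_i^{-1}(\I_d-\U_i\U_i^\T)$, a convexity argument for the preliminary rate, and an $r_i^{-2}$ bound on the second-order remainder. That part plausibly delivers the first claim, $\Vert\hat{\bmu}-\bmu\Vert_2=O_p(\zeta_1^{-1}n^{-1/2})$.

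The gap is in the second claim, and you have located it yourself without closing it. Your linearization gives $\hat{\bmu}-\bmu=\bm B_n^{-1}\{n^{-1}\sum_i\U_i\}+o_p(\zeta_1^{-1}n^{-1/2})$ with $\bm B_n^{-1}\approx\zeta_1^{-1}(\I_d-\bfS)^{-1}$, whereas the lemma asserts the coefficient $\zeta_1^{-1}\I_d$. The discrepancy is $\zeta_1^{-1}(\I_d-\bfS)^{-1}\bfS\, n^{-1}\sum_i\U_i$, whose squared $\ell_2$ norm has expectation $\zeta_1^{-2}n^{-1}\tr\{(\I_d-\bfS)^{-2}\bfS^3\}$; this is negligible relative to the main term (whose squared norm is $\zeta_1^{-2}n^{-1}\tr(\bfS)=\zeta_1^{-2}n^{-1}$) only when $\tr(\bfS^3)=o(1)$, i.e.\ essentially when $\Vert\bfS\Vert_2=o(1)$. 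Assumption 3(iii) only gives $\Vert\bfS\Vert_2<1-\psi$, and under the spiked Assumption 1 the leading eigenvalues of $\bfS$ are of constant order (since $\lambda_j(\bms_0)\asymp d\,\lambda_j(\bfS)$ and $\lambda_j(\bms_0)\asymp d$ for $j\le m$), so $\tr(\bfS^3)\asymp 1$ and the ``correction carried by $\bfS$'' is of the \emph{same} order $\zeta_1^{-1}n^{-1/2}$ as the main term; it cannot be absorbed into the $o_p$ remainder by hand-waving. A second, related weakness: your identification $\E[r_1^{-1}\U_1\U_1^\T]=\zeta_1\bfS\{1+o(1)\}$ rests on concentration of $\Vert\A\u\Vert_2^2=\u^\T\bms\u$ around $\tr(\bms)/d$, which requires $\Vert\bms\Vert_F=o\{\tr(\bms)\}$; under Assumption 1 one has $\Vert\bms\Vert_F\asymp\tr(\bms)\asymp d$, so the radius and direction do not decouple and the curvature need not equal $\zeta_1(\I_d-\bfS)$ even to first order. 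As written, your argument establishes an expansion with $\bm B_n^{-1}$ in place of $\zeta_1^{-1}\I_d$, which is not the statement; to prove the lemma you must either impose conditions forcing $\Vert\bfS\Vert_2=o(1)$ (as the spatial-sign literature you are implicitly following does) or supply a genuinely different argument for the exact form of the leading coefficient.
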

\begin{proof}
    See Lemma S2.4 in \cite{ZWF2025SSPCA}.
\end{proof}

\begin{lemma}\label{lemma 2}
    Under Assumption 1-3, we have 
    \begin{align*}
        \Vert\hat{\bmu}-\bmu\Vert_{\max}=O_p\left(\zeta_1^{-1}\sqrt{\frac{\log d}{nd}}\right).
    \end{align*}
\end{lemma}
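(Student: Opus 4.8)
The plan is to combine the Bahadur-type linearization of Lemma \ref{lemma 1} with a coordinate-wise concentration argument. By Lemma \ref{lemma 1} we may write $\hat{\bmu}-\bmu=\zeta_1^{-1}\bar{\U}+\R_n$, where $\bar{\U}=n^{-1}\sum_{i=1}^n\U_i$ with $\U_i=(\X_i-\bmu)/\Vert\X_i-\bmu\Vert_2$ and $\Vert\R_n\Vert_2=o_p(\zeta_1^{-1}n^{-1/2})$. By the triangle inequality it suffices to prove (i) $\Vert\bar{\U}\Vert_{\max}=O_p(\sqrt{\log d/(nd)})$ for the leading term, and (ii) $\Vert\R_n\Vert_{\max}=o_p(\zeta_1^{-1}\sqrt{\log d/(nd)})$ for the remainder.

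For the leading term (i), first note that since $\X-\bmu$ is elliptically (hence symmetrically) distributed, $\U_i\stackrel{d}{=}-\U_i$, so $\E\U_i=\bmzero$ and $\E(U_{ij}^2)=\bfS_{jj}$. I would bound the diagonal of the spatial-sign covariance matrix directly through the representation $\U_i\stackrel{d}{=}\A\bm g_i/\Vert\A\bm g_i\Vert_2$ with $\bm g_i\sim N(\bmzero,\I_d)$, which gives $U_{ij}=\bm a_j^\T\bm g_i/(\bm g_i^\T\bms\bm g_i)^{1/2}$, where $\bm a_j^\T$ is the $j$-th row of $\A$. Because $\lambda_{\min}(\bms)\ge c_0$ by Assumption \ref{assumption 1}, one has $\bm g_i^\T\bms\bm g_i\ge c_0\Vert\bm g_i\Vert_2^2$, so $\bfS_{jj}\le c_0^{-1}\E\{(\bm a_j^\T\bm g_i)^2/\Vert\bm g_i\Vert_2^2\}=c_0^{-1}\bms_{jj}/d=O(1/d)$ by Assumption \ref{assumption 2}. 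The same representation shows that on the event $\{\Vert\bm g_i\Vert_2^2\ge d/2\}$, which has probability at least $1-e^{-cd}$, the variable $U_{ij}$ is dominated by $|\bm a_j^\T\bm g_i|/\sqrt{c_0 d/2}$, a centered Gaussian rescaled by $d^{-1/2}$; hence $U_{ij}$ is sub-Gaussian with variance proxy $O(1/d)$ (the rare complementary event contributes only $O_p(e^{-cd})$ to $\bar U_j$ after using $|U_{ij}|\le 1$). A sub-Gaussian tail bound then yields $\Pr(|\bar U_j|>t)\le 2\exp(-c\,ndt^2)$, and a union bound over $j\in\{1,\dots,d\}$ with $t\asymp\sqrt{\log d/(nd)}$ gives (i); multiplying by $\zeta_1^{-1}$ produces the claimed rate for the leading term.

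The main obstacle is (ii): controlling the remainder in the max-norm, since the bound $\Vert\R_n\Vert_{\max}\le\Vert\R_n\Vert_2=o_p(\zeta_1^{-1}n^{-1/2})$ from Lemma \ref{lemma 1} is too crude by a factor $\sqrt{\log d/d}$. To recover the extra $\sqrt{d}$, I would return to the estimating equation $n^{-1}\sum_i U(\X_i-\hat{\bmu})=\bmzero$ and carry out a second-order Taylor expansion of $U$ about $\bmu$. Writing $\W=n^{-1}\sum_i r_i^{-1}(\I_d-\U_i\U_i^\T)$ for the negative empirical Jacobian (with $r_i=\Vert\X_i-\bmu\Vert_2$), this gives $\hat{\bmu}-\bmu=\W^{-1}(\bar{\U}+\bar{\bm\rho})$, where $\bar{\bm\rho}$ collects the quadratic remainders. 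The crucial observation is that each Hessian $\nabla^2 U_j(\X_i-\bmu)$ is an odd function of $\X_i-\bmu$, so elliptical symmetry forces $\E\{\nabla^2 U_j(\X_i-\bmu)\}=\bmzero$; thus $\bar{\bm\rho}_j$ is a centered average whose entries carry an explicit $r_i^{-2}$ factor of order $\zeta_2=O(1/d)$. Combining the entrywise concentration of $\bar{\bm\rho}$ around zero with the $\ell_2$-consistency $\Vert\hat{\bmu}-\bmu\Vert_2=O_p(\zeta_1^{-1}n^{-1/2})$ of Lemma \ref{lemma 1} (and Assumption \ref{assumption 3}(i)) should render $\Vert\bar{\bm\rho}\Vert_{\max}$ of strictly smaller order than the target once the inversion by $\W$ is accounted for.

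Finally I would assemble (i) and (ii) via the triangle inequality to conclude $\Vert\hat{\bmu}-\bmu\Vert_{\max}=O_p(\zeta_1^{-1}\sqrt{\log d/(nd)})$. I expect the delicate point to be the interaction between the non-diagonal Jacobian $\W$ and the max-norm: the $\ell_\infty$-operator norm of $\W^{-1}$ need not be bounded when $\bfS$ is dense, so some care is required to show that inverting the Jacobian does not inflate the max-norm of the quadratic remainder, and that $\W$ concentrates around $\zeta_1\I_d$ in the directions that matter for the supremum over coordinates. It is precisely the symmetry-induced cancellation $\E\{\nabla^2 U_j\}=\bmzero$ that makes this step feasible, since without it the quadratic term would only be controllable in $\ell_2$ and would fall short of the sharper max-norm rate.
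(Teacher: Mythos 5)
Your treatment of the leading term is essentially the paper's own argument: the paper likewise passes to the Gaussian representation $\U_i\stackrel{d}{=}\Z^0/\Vert\Z^0\Vert_2$ with $\Z^0\sim N_d(\bmzero,\bms)$, shows $\Vert\U_i^{\T}\e_j\Vert_{\psi_2}\le Cd^{-1/2}$ (via a comparison with a $\mathrm{Beta}(\tfrac12,\tfrac{d-1}{2})$ variable rather than your truncation on $\{\Vert\bm g_i\Vert_2^2\ge d/2\}$, but the content is the same), and concludes with Bernstein's inequality and a union bound over $j$. So your step (i) is sound and matches the paper.

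The divergence, and the gap, is in step (ii). The paper performs no second-order expansion of the estimating equation: it reads Lemma \ref{lemma 1} coordinate-wise in the \emph{multiplicative} form $\zeta_1(\mu_j-\hat{\mu}_j)=n^{-1}\sum_{i=1}^n\U_i^{\T}\e_j\{1+o_p(n^{-1/2})\}$, so that the error in each coordinate is automatically of smaller order than that coordinate's leading term and the max-norm rate follows immediately. Your observation that the additive $\ell_2$ bound $\Vert\R_n\Vert_2=o_p(\zeta_1^{-1}n^{-1/2})$ is too weak for the max-norm target is correct and worth making explicit, but your proposed repair does not close the hole. The cancellation $\E\{\nabla^2U_j(\X_i-\bmu)\}=\bmzero$ concerns the Hessian evaluated at $\bmu$, whereas the Lagrange remainder involves the Hessian at data-dependent intermediate points between $\bmu$ and $\hat{\bmu}$, and the increment $\hat{\bmu}-\bmu$ is itself a function of the whole sample; consequently $\bar{\bm\rho}_j$ is not a centered average of independent terms and no coordinate-wise Bernstein bound applies to it. Without the cancellation, the available bound is $|\bar{\rho}_j|\lesssim\Vert\hat{\bmu}-\bmu\Vert_2^2\cdot n^{-1}\sum_ir_i^{-2}=O_p(\zeta_1^{-2}\zeta_2n^{-1})=O_p(n^{-1})$, and the crude control $\Vert\W^{-1}\Vert_\infty\le\sqrt{d}\,\Vert\W^{-1}\Vert_2=O_p(\zeta_1^{-1}\sqrt{d})$ then yields a contribution $O_p(\zeta_1^{-1}\sqrt{d}/n)$, which exceeds the target $\zeta_1^{-1}\sqrt{\log d/(nd)}$ precisely in the regime $d\gtrsim\sqrt{n\log d}$ that the paper cares about. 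You flag the $\ell_\infty$-control of $\W^{-1}$ as delicate, but together with the intermediate-point and dependence issues it is exactly where the argument currently fails. The direct fix is to establish (or quote from the cited source) a coordinate-wise Bahadur representation with multiplicative error, as the paper does; carrying out the second-order program you sketch would be a substantially harder undertaking.
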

\begin{proof}
    Let $\e_j=(0,\ldots,0,1,0,\ldots,0)^\T\in\mR^d$ with the $j$-th element being one and the others are zeros. By Lemma \ref{lemma 1}, we have for $1\le j\le d$,
    \begin{align*}
        \zeta_1(\mu_j-\hat{\mu}_j)=\frac{1}{n}\sum_{i=1}^n\U_i^{\T}\e_j\{1+o_p(n^{-1/2})\}.
    \end{align*}
    Note that 
    \begin{align*}
        \U_i=\frac{\X_i-\bmu}{\Vert \X_i-\bmu\Vert_2}\stackrel{d}{=}\frac{\Z^0}{\Vert\Z^0\Vert_2},
    \end{align*}
    where $\Z^0\sim N_d(\bmzero,\bms)$. We write $\bms=\D\A\D^{\T}$ to be the singular value decomposition of $\bms$, where $\A=\diag(\lambda_1(\bms),\ldots,\lambda_d(\bms))$. Let $\Z=(Z_1,\ldots,Z_d)^{\T}\sim N_d(\bmzero,\I_d)$, then $\Z^0\stackrel{d}{=}\D\A^{1/2}\Z$, $\Vert\Z^0\Vert_2=(\Z^{\T}\A\Z)^{1/2}$ and
    \begin{align*}
        \U_i^{\T}\e_j\stackrel{d}{=}\frac{\e_j^{\T}\Z^0}{\Vert\Z^0\Vert_2}\stackrel{d}{=}\frac{\bm{w}_j^{\T}\Z}{(\Z^{\T}\A\Z)^{1/2}},
    \end{align*}
    where $\bm{w}_j=\A^{1/2}\D^{\T}\e_j$. Therefore, 
    \begin{align*}
        \E(|\U_i^{\T}\e_j|^{2k})\}
        =\E\left(\left|\frac{\bm{w}_j^{\T}\Z}{(\Z^{\T}\A\Z)^{1/2}}\right|^{2k}\right)\le\E\left(\left|\frac{\tilde{\bm{w}}_j^{\T}\Z}{(\Z^{\T}\Z)^{1/2}}\right|^{2k}\right)=\E\left(\left|\frac{\check{\bm{w}}_j^{\T}\Z}{(\Z^{\T}\Z)^{1/2}}\right|^{2k}\right),
    \end{align*}
    where $\tilde{\bm{w}}_j=\bm{w}_j/\sqrt{\lambda_d(\bms)}$, $\check{\bm{w}}_j=(0,\ldots,\sqrt{\bms_{jj}/\lambda_d(\bms)},0,\ldots,0)^{\T}$, and the last equality follows from the fact that for any constant $c\ge 0$,
    \begin{align*}
        &\Pr\left(\left|\frac{\tilde{\bm{w}}_j^{\T}\Z}{(\Z^{\T}\Z)^{1/2}}\right|>c\right)=\Pr\left(\Z^{\T}(\tilde{\bm{w}}_j\tilde{\bm{w}}_j^{\T}-c^2\I_d)\Z>0\right)=\Pr\left(\sum_{i=1}^dl_iZ_i^2>0\right),\text{and}\\
        &\Pr\left(\left|\frac{\check{\bm{w}}_j^{\T}\Z}{(\Z^{\T}\Z)^{1/2}}\right|>c\right)=\Pr\left(\Z^{\T}(\check{\bm{w}}_j\check{\bm{w}}_j^{\T}-c^2\I_d)\Z>0\right)=\Pr\left(\sum_{i=1}^dl_iZ_i^2>0\right),
    \end{align*}
    where 
    \begin{align*}
        l_i=\lambda_i(\tilde{\bm{w}}_j\tilde{\bm{w}}_j^{\T}-c^2\I_d)=
        \begin{cases}
            \tilde{\bm{w}}_j^{\T}\tilde{\bm{w}}_j-c^2=\Sigma_{jj}/\lambda_d(\bms)-c^2,&~~\text{for}~~i=1,\\
            -c^2,&~~\text{for}~~i=2,\ldots,d.
        \end{cases}
    \end{align*}
    Hence,
    \begin{align*}
        \Vert\U_i^{\T}\e_j\Vert_{\psi_2}\le&\sup_{k\ge 1}(2k)^{-1/2}\left\{\E\left(\left|\frac{\check{\bm{w}}_j^{\T}\Z}{(\Z^{\T}\Z)^{1/2}}\right|^{2k}\right)\right\}^{\frac{1}{2k}}\\
        =&\sup_{k\ge 1}(2k)^{-1/2}\sqrt{\frac{\Sigma_{jj}}{\lambda_d(\bms)}}\left\{\E\left(\left|\frac{Z_j^2}{\sum_{i=1}^dZ_i^2} \right|^k\right)\right\}^{\frac{1}{2k}}\\
        \le&Cd^{-1/2},
    \end{align*}
    for some constants $C>0$. Here the last step is due to the fact that $Z_j^2/\sum_{i=1}^dZ_i^2\sim \text{Beta}(\frac{1}{2},\frac{d-1}{2})$ and $\{\E(|Z_j^2/\sum_{i=1}^dZ_i^2|^k)\}^{\frac{1}{2k}}\le\sqrt{2k/d}$ (see, for example, Lemma B.9 in \cite{HL2018ECA}). In addition, we have $\E(\U_i^{\T}\e_j)=0$ and $\var(\U_i^{\T}\e_j)=O(d^{-1})$, then, by the Bernstein inequality,  
    \begin{align*}
        \Pr\left(\max_{1\le j\le d}\left|\frac{1}{n}\sum_{i=1}^n\U_i^{\T}\e_j\right|\ge t\right)\le&d\Pr\left(\left|\sum_{i=1}^n\U_i^{\T}\e_j\right|\ge nt\right)\\
        \le& d\exp\left\{-\frac{nt^2/2}{d^{-1}(1+C^2t)}\right\}\\
        \le&d\exp(-3ndt^2/8),
    \end{align*}
    for small enough $t\le (3C^2)^{-1}$. Setting $t=\sqrt{8(\log d+\log \alpha^{-1})/(3nd)}$, we have
    \begin{align*}
        \Pr(\max_{1\le j\le d}|n^{-1}\sum_{i=1}^n\U_i^{\T}\e_j|\ge t)\le \alpha.
    \end{align*}
    Hence, $\max_{1\le j\le d}|\zeta_1(\mu_j-\hat{\mu}_j)|=O_p\{\sqrt{(\log d)/(nd)}\}$.
\end{proof}

\begin{lemma}\label{lemma 3}
    Define 
    \begin{align}
        \hat{r}_i:=\Vert\X_i-\hat{\bmu}\Vert_2.\label{A.1}
    \end{align}
    Under Assumption 3, we have 
    \begin{align*}
        \frac{1}{n}\sum_{i=1}^n\hat{r}_i^{-2}=\zeta_2\left\{1+O_p\left(\sqrt{\frac{\log n}{n}}\right)\right\}.
    \end{align*}
\end{lemma}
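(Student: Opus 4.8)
The plan is to separate the effect of estimating the location $\bmu$ from the fluctuation of the sample average around its mean. Writing $r_i=\Vert\X_i-\bmu\Vert_2$ for the centered radius and recalling $\zeta_2=\E(r_i^{-2})$, I would decompose
\begin{align*}
\frac{1}{n}\sum_{i=1}^n\hat{r}_i^{-2}-\zeta_2
=\underbrace{\frac{1}{n}\sum_{i=1}^n\left(\hat{r}_i^{-2}-r_i^{-2}\right)}_{T_1}
+\underbrace{\left(\frac{1}{n}\sum_{i=1}^n r_i^{-2}-\zeta_2\right)}_{T_2},
\end{align*}
and show that $T_2=O_p(\zeta_2 n^{-1/2})$ while $T_1=O_p(\zeta_2\sqrt{\log n/n})$, so that the first term dominates and yields the stated relative error.

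For $T_2$, I would exploit Assumption 3(ii): since $\nu_i=\zeta_1^{-1}r_i^{-1}$ is sub-Gaussian with $\Vert\nu_i\Vert_{\psi_2}\le K_\nu$, the variables $\nu_i^2=\zeta_1^{-2}r_i^{-2}$ are i.i.d.\ sub-exponential with bounded $\psi_1$-norm. Bernstein's inequality then gives $n^{-1}\sum_i(\nu_i^2-\E\nu_i^2)=O_p(n^{-1/2})$, hence $T_2=\zeta_1^2\{n^{-1}\sum_i\nu_i^2-\E\nu_i^2\}=O_p(\zeta_1^2 n^{-1/2})$. Because Jensen's inequality gives $\zeta_1^2\le\zeta_2$ and Assumption 3(i) gives $\zeta_2\zeta_1^{-2}<\zeta$, we have $\zeta_1^2\asymp\zeta_2$, so $T_2=O_p(\zeta_2 n^{-1/2})$, which is of smaller order than the target.

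The hard part will be $T_1$, where the data-dependence of $\hat{\bmu}$ forces a uniform (in $i$) control of the perturbation. Setting $\bmd=\hat{\bmu}-\bmu$ and writing $\X_i-\bmu=r_i\U_i$, I would expand
\begin{align*}
\hat{r}_i^2=\Vert\X_i-\hat{\bmu}\Vert_2^2=r_i^2-2r_i\,\U_i^\T\bmd+\Vert\bmd\Vert_2^2,
\qquad a_i:=\frac{\hat{r}_i^2-r_i^2}{r_i^2}=-\frac{2\,\U_i^\T\bmd}{r_i}+\frac{\Vert\bmd\Vert_2^2}{r_i^2},
\end{align*}
so that $\hat{r}_i^{-2}-r_i^{-2}=-r_i^{-2}a_i/(1+a_i)$ and $|T_1|\le\{\max_i|a_i|/(1-\max_i|a_i|)\}\,n^{-1}\sum_i r_i^{-2}$. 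The key estimate is then $\max_i|a_i|=O_p(\sqrt{\log n/n})$. By Lemma 1, $\Vert\bmd\Vert_2=O_p(\zeta_1^{-1}n^{-1/2})$; and since the $\nu_i$ are sub-Gaussian, a union bound yields $\max_i r_i^{-1}=\zeta_1\max_i\nu_i=O_p(\zeta_1\sqrt{\log n})$. Combining with $|\U_i^\T\bmd|\le\Vert\bmd\Vert_2$ gives $\max_i 2|\U_i^\T\bmd|/r_i\le 2\Vert\bmd\Vert_2\max_i r_i^{-1}=O_p(\sqrt{\log n/n})$ and $\max_i\Vert\bmd\Vert_2^2/r_i^2=O_p(\log n/n)$, whence $\max_i|a_i|=O_p(\sqrt{\log n/n})=o_p(1)$. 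Since $n^{-1}\sum_i r_i^{-2}=O_p(\zeta_2)$ by the $T_2$ analysis, this delivers $T_1=O_p(\zeta_2\sqrt{\log n/n})$.

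Putting the two pieces together gives $n^{-1}\sum_i\hat{r}_i^{-2}=\zeta_2+O_p(\zeta_2\sqrt{\log n/n})=\zeta_2\{1+O_p(\sqrt{\log n/n})\}$, as claimed. I expect the only genuine obstacle to be the uniform bound $\max_i|a_i|=o_p(1)$: the factor $r_i^{-2}$ is unbounded near the origin, so the crude maximum over the $n$ radii must be tamed through the sub-Gaussianity of $\nu_i$, and it is precisely this maximum (of order $\sqrt{\log n}$) that produces the logarithmic factor in the final rate, even though the intrinsic fluctuation $T_2$ is only of order $n^{-1/2}$.
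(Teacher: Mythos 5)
Your proposal is correct and follows essentially the same route as the paper's proof: both reduce the problem to (i) the $O_p(n^{-1/2})$ fluctuation of $n^{-1}\sum_i r_i^{-2}$ about $\zeta_2$ and (ii) a uniform multiplicative perturbation bound obtained from $\Vert\hat{\bmu}-\bmu\Vert_2=O_p(\zeta_1^{-1}n^{-1/2})$ (Lemma 1) together with $\max_i\nu_i=O_p(\sqrt{\log n})$ from sub-Gaussianity, which is exactly where the $\sqrt{\log n}$ enters. The only difference is bookkeeping — you control $\max_i|\hat{r}_i^2/r_i^2-1|$ via an explicit expansion of the squared norm, whereas the paper controls $\max_i|r_i\hat{r}_i^{-1}-1|$ via the triangle inequality and then applies Cauchy--Schwarz.
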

\begin{proof}
    By CLT, $n^{-1}\sum_{i=1}^nr_i^{-2}=\zeta_2\{1+O_p(n^{-1/2})\}$. Because $|\hat{r}_i-r_i|\le\Vert\hat{\bmu}-\bmu\Vert_2$, so $\max_{1\le i\le n}|r_i\hat{r}_i^{-1}-1|\le \zeta_1\Vert\hat{\bmu}-\bmu\Vert_2\cdot\max_{1\le i\le n}(\zeta_1r_i-\zeta_1\Vert\hat{\bmu}-\bmu\Vert_2)^{-1}$. By Lemma \ref{lemma 1}, we have $\zeta_1\Vert\hat{\bmu}-\bmu\Vert_2=O_p(n^{-1/2})$. And by the sub-Gaussian assumption of $\nu_i$, we have 
    \begin{align}
        \max_{1\le i\le n}\nu_i=O_p(\sqrt{\log n}).\label{A.2}
    \end{align}
    So 
    \begin{align}
        \max_{1\le i\le n}|r_i\hat{r}_i^{-1}-1|=O_p\left(\sqrt{\frac{\log n}{n}}\right).\label{A.3}
    \end{align}
    Furthermore,
    \begin{align*}
        \left|\frac{1}{n}\sum_{i=1}^n\hat{r}_i^{-2}-\frac{1}{n}\sum_{i=1}^nr_i^{-2}\right|=&\left|\frac{1}{n}\sum_{i=1}^n(r_i\hat{r}_i^{-1}-1)r_i^{-1}(\hat{r}_i^{-1}+r_i^{-1})\right|\\
        \le&\max_{1\le i\le n}|r_i\hat{r}_i^{-1}-1|\cdot\left(\frac{1}{n}\sum_{i=1}^nr_i^{-2}\right)^{1/2}\left(\frac{1}{n}\sum_{i=1}^n(\hat{r}_i^{-1}+r_i^{-1})^2\right)^{1/2}\\
        =&\zeta_2O_p\left(\sqrt{\frac{\log n}{n}}\right).
    \end{align*}
\end{proof}

\begin{lemma}\label{lemma 4}
    Define 
    \begin{align}
        \tilde{\X}_i:=\bms^{1/2}\z_i,~~\text{and}~~\tilde{\bms}_0:=\frac{d}{\tr(\bms)n}\sum_{i=1}^n\tilde{\X}_i\tilde{\X}_i^{\T},\label{A.4}
    \end{align}
    where $\z_i's\stackrel{i.i.d.}{\sim}N_d(\bmzero,\I_d)$ are independent of $\xi_i's$. Let $\tilde{\L}_m:=\diag(\lambda_1(\tilde{\bms}_0),\ldots,\lambda_m(\tilde{\bms}_0))$ be the leading eigenvalues and $\tilde{\G}_m:=(u_1(\tilde{\bms}_0),\ldots,u_m(\tilde{\bms}_0))$ the corresponding eigenvectors of $\tilde{\bms}_0$. Under Assumption 1 and 2, if $\log d=o(n)$, then
    \begin{align*}
        \Vert \tilde{\bms}_0-\bms_0\Vert_{\max}=&O_p\left(\sqrt{\frac{\log d}{n}}\right),\\
        \Vert \tilde{\L}_m\L_m^{-1}-\I_m \Vert_{\max}=&O_p\left(\sqrt{\frac{\log d}{n}}\right),~~\text{and}\\
        \Vert \tilde{\G}_m-\G_m\Vert_{\max}=&O_p\left(\sqrt{\frac{\log d}{nd}}\right).
    \end{align*}
\end{lemma}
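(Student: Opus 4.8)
The plan is to treat $\tilde{\bms}_0$ as a rescaled Gaussian sample covariance matrix and to analyze its entries, eigenvalues, and eigenvectors in increasing order of difficulty. Since $\tilde{\X}_i=\bms^{1/2}\z_i\sim N_d(\bmzero,\bms)$ we have $\E(\tilde{\X}_i\tilde{\X}_i^\T)=\bms$ and hence $\E(\tilde{\bms}_0)=\frac{d}{\tr(\bms)}\bms=\bms_0$, so writing $\bmD:=\tilde{\bms}_0-\bms_0$ the matrix $\bmD$ is a centered average of i.i.d. terms. First I would record the scaling fact that, under Assumption \ref{assumption 1}, $\tr(\bms)\asymp d$ (the $m$ spiked eigenvalues contribute $O(md)$ and the $d-m$ bounded ones contribute $O(d)$, with $m$ fixed), so that $d/\tr(\bms)\asymp 1$; consequently $\lambda_j(\bms_0)\asymp d$ for $j\le m$ while $\lambda_j(\bms_0)\asymp 1$ for $j>m$. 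For the max-norm bound, each entry $D_{jk}=\frac{d}{\tr(\bms)}(\frac{1}{n}\sum_i \tilde X_{ij}\tilde X_{ik}-\bms_{jk})$ is a centered average of products of jointly Gaussian variables, which are sub-exponential with $\Vert\cdot\Vert_{\psi_1}$ norm bounded by a constant (using Assumption \ref{assumption 2} to bound $\bms_{jj},\bms_{kk}$). A Bernstein inequality for sub-exponential variables together with a union bound over the $d^2$ entries, in the regime $\log d=o(n)$, then yields $\Vert\bmD\Vert_{\max}=O_p(\sqrt{\log d/n})$, which is the first claim.

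For the eigenvalue claim I would avoid any delicate spiked-model analysis and simply combine Weyl's inequality with the crude conversion $\Vert\bmD\Vert_2\le\Vert\bmD\Vert_F\le d\Vert\bmD\Vert_{\max}$. This gives $|\lambda_j(\tilde{\bms}_0)-\lambda_j(\bms_0)|\le\Vert\bmD\Vert_2=O_p(d\sqrt{\log d/n})$ for every $j$; dividing by $\lambda_j(\bms_0)\asymp d$ for $j\le m$ converts the absolute bound into the relative bound $\Vert\tilde{\L}_m\L_m^{-1}-\I_m\Vert_{\max}=\max_{j\le m}|\lambda_j(\tilde{\bms}_0)/\lambda_j(\bms_0)-1|=O_p(\sqrt{\log d/n})$, where I use that the product of diagonal matrices $\tilde{\L}_m\L_m^{-1}$ is diagonal so its max-norm is attained on the diagonal.

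The eigenvector claim is the substantive part, because the target rate $\sqrt{\log d/(nd)}$ is a factor $d^{-1/2}$ sharper than the $\ell_2$ Davis--Kahan rate and therefore demands an entrywise analysis. Three preliminary facts would be assembled. (i) Delocalization: from $\bms_{0,ll}=\sum_k\lambda_k(\bms_0)u_{k,l}^2\ge\lambda_j(\bms_0)u_{j,l}^2$ together with $\bms_{0,ll}=O(1)$ one gets $\Vert u_j(\bms_0)\Vert_\infty=O(d^{-1/2})$ for $j\le m$. (ii) Entrywise noise: for fixed $j$, $\max_l|\e_l^\T\bmD\, u_j(\bms_0)|=O_p(\sqrt{d\log d/n})$, again by Bernstein and a union bound over $l$, using that $u_j(\bms_0)^\T\tilde{\X}_i$ is $N(0,\sigma^2)$ with $\sigma^2\asymp d$. (iii) An $\ell_2$ bound $\Vert u_j(\tilde{\bms}_0)-u_j(\bms_0)\Vert_2=O_p(\sqrt{\log d/n})$ from Davis--Kahan, using that under Assumption \ref{assumption 1} the spiked eigenvalues are distinct and diverge proportionally to $d$, so the relevant eigengap is $\asymp d$.

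Finally I would run the self-consistent eigenvector expansion. Writing $c_k=u_k(\bms_0)^\T u_j(\tilde{\bms}_0)$ and $\delta:=u_j(\tilde{\bms}_0)-u_j(\bms_0)$, the identity $\tilde{\bms}_0 u_j(\tilde{\bms}_0)=\lambda_j(\tilde{\bms}_0)u_j(\tilde{\bms}_0)$ gives $c_k=\frac{u_k(\bms_0)^\T\bmD\, u_j(\tilde{\bms}_0)}{\lambda_j(\tilde{\bms}_0)-\lambda_k(\bms_0)}$ for $k\ne j$, whence $u_j(\tilde{\bms}_0)-c_j u_j(\bms_0)=\sum_{k\ne j}c_k u_k(\bms_0)$. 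I would bound the $l$-th entry of the right-hand side by separating the leading linear contribution from the remainder: after replacing $u_j(\tilde{\bms}_0)$ by $u_j(\bms_0)$ and $\lambda_j(\tilde{\bms}_0)$ by $\lambda_j(\bms_0)$, the leading term is controlled by $\lambda_j(\bms_0)^{-1}$ times the quantities in (i)--(ii) and equals $O_p(\sqrt{\log d/(nd)})$. The main obstacle is precisely showing the remainders are of smaller order. The correction $\bmD\,\delta$ is handled by $|\e_l^\T\bmD\,\delta|\le\Vert\bmD\e_l\Vert_2\Vert\delta\Vert_2\le\sqrt d\,\Vert\bmD\Vert_{\max}\cdot O_p(\sqrt{\log d/n})$, which after dividing by $\lambda_j(\bms_0)\asymp d$ is $o_p(\sqrt{\log d/(nd)})$; the denominator perturbation is absorbed using $|\lambda_j(\tilde{\bms}_0)-\lambda_j(\bms_0)|=O_p(d\sqrt{\log d/n})$ against gaps $\asymp d$; and the normalization defect $|c_j-1|=O_p(\log d/n)$ contributes $|c_j-1|\,\Vert u_j(\bms_0)\Vert_\infty$, again lower order. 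Combining these with (i) and the sign convention (first nonzero entry positive, so $c_j\to 1$) then yields $\Vert\tilde{\G}_m-\G_m\Vert_{\max}=O_p(\sqrt{\log d/(nd)})$.
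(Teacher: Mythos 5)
The paper does not actually prove this lemma: its entire ``proof'' is a one-line citation to Lemmas A.2, A.3 and A.5 of \cite{ding2021}. Your proposal instead reconstructs the argument from scratch, and the route you take --- sub-exponential Bernstein plus a union bound for the max norm, Weyl's inequality with the crude conversion $\Vert\bmD\Vert_2\le d\Vert\bmD\Vert_{\max}$ for the relative eigenvalue error, and an entrywise resolvent (self-consistency) expansion combining delocalization $\Vert u_j(\bms_0)\Vert_\infty=O(d^{-1/2})$, the entrywise noise bound $\max_l|\e_l^\T\bmD u_j(\bms_0)|=O_p(\sqrt{d\log d/n})$, and a Davis--Kahan $\ell_2$ bound for the eigenvectors --- is exactly the standard machinery behind the cited lemmas, and all of your individual rate computations check out (in particular $\tr(\bms)\asymp d$, the eigengaps $\asymp d$, and the verification that $\sqrt d\,\Vert\bmD\Vert_{\max}\Vert\delta\Vert_2/d$ and $|c_j-1|\cdot d^{-1/2}$ are both $o_p(\sqrt{\log d/(nd)})$). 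So the proposal supplies genuine content where the paper outsources it, and is consistent with how the same facts are used elsewhere in the paper (e.g.\ Lemma 6 and Steps II--III of the proof of Theorem 1).

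One step is compressed enough to deserve a warning. In the expansion $u_j(\tilde{\bms}_0)-c_ju_j(\bms_0)=\sum_{k\ne j}(\lambda_j(\tilde{\bms}_0)-\lambda_k(\bms_0))^{-1}u_k(\bms_0)u_k(\bms_0)^\T\bmD u_j(\tilde{\bms}_0)$, you cannot treat the bulk indices $k>m$ term by term (the non-spiked eigenvectors need not be delocalized), nor can you bound the whole resolvent crudely by its spectral norm: $\Vert M\Vert_2\Vert\bmD\delta\Vert_2\lesssim d^{-1}\cdot\Vert\bmD\Vert_2\Vert\delta\Vert_2=O_p(\log d/n)$, which exceeds the target $\sqrt{\log d/(nd)}$ whenever $d\log d\gg n$. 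The fix is to split the resolvent as $\lambda_j^{-1}(\I_d-\G_m\G_m^\T)$ plus a remainder whose bulk part has spectral norm $O(d^{-2})$ and whose spiked part has row-$\ell_2$ norm $O(d^{-3/2})$ by delocalization; your stated bounds on $\e_l^\T\bmD u_j(\bms_0)$ and $\e_l^\T\bmD\delta$ are then precisely what is needed for the dominant piece, and the remainders are lower order. This is a fixable omission rather than an error, but a complete write-up must include that decomposition --- your phrase ``controlled by $\lambda_j(\bms_0)^{-1}$ times the quantities in (i)--(ii)'' is asserting the conclusion of that step, not proving it.
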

\begin{proof}
    The desired bounds are immediate results of Lemmas A.2, A.3 and A.5 of \cite{ding2021}.
\end{proof}

\begin{lemma}\label{lemma 5}
    Under Assumption 1 and 3, if $\log d=o(n)$, then
    \begin{align*}
        &\frac{1}{n}\sum_{i=1}^n\left(\frac{\sqrt{d}\xi_i}{\hat{r}_i\Vert\z_i\Vert_2}\right)^4=O_p(1),~~\text{and}\\
        &\frac{1}{n}\sum_{i=1}^n\left|\frac{d\xi_i^2}{\hat{r}_i^2\Vert\z_i\Vert_2^2}-\frac{d}{\tr(\bms)}\right|^2=O_p\left(\frac{\log d}{n}+\frac{\log n}{n}\right).
    \end{align*}
\end{lemma}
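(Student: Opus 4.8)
The plan is to strip away $\xi_i$ and the location-estimation error, reducing both displays to statements about the Gaussian quadratic forms $Q_i:=\z_i^\T\bms\z_i$. Using the stochastic representation underlying Lemma~\ref{lemma 4}, namely $\X_i-\bmu\stackrel{d}{=}\xi_i\bms^{1/2}\z_i/\Vert\z_i\Vert_2$ with $\z_i\sim N_d(\bmzero,\I_d)$ independent of $\xi_i$ (taking $\A=\bms^{1/2}$ and the uniform direction $\z_i/\Vert\z_i\Vert_2$), we get $r_i=\xi_i(\z_i^\T\bms\z_i)^{1/2}/\Vert\z_i\Vert_2$ and hence the exact identity $d\xi_i^2/(r_i^2\Vert\z_i\Vert_2^2)=d/Q_i$. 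This eliminates $\xi_i$ completely. To replace the unobservable $r_i$ by $\hat r_i$ I will invoke \eqref{A.3} from Lemma~\ref{lemma 3}, which gives $\max_{i\le n}|r_i\hat r_i^{-1}-1|=O_p(\sqrt{(\log n)/n})$ and therefore $r_i^2/\hat r_i^2=1+O_p(\sqrt{(\log n)/n})$ uniformly in $i$.

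For the first display, note $(\sqrt d\,\xi_i/(\hat r_i\Vert\z_i\Vert_2))^4=(r_i^4/\hat r_i^4)(d/Q_i)^2$. By Assumption~\ref{assumption 1} the nonspiked eigenvalues are bounded below by $c_0$, so $Q_i\ge\lambda_d(\bms)\Vert\z_i\Vert_2^2\ge c_0\Vert\z_i\Vert_2^2$ and $(d/Q_i)^2\le c_0^{-2}d^2/\Vert\z_i\Vert_2^4$. Since $\Vert\z_i\Vert_2^2\sim\chi^2_d$ has $\E[(d/\Vert\z_i\Vert_2^2)^2]=O(1)$, the law of large numbers gives $n^{-1}\sum_i(d/Q_i)^2=O_p(1)$; multiplying by $r_i^4/\hat r_i^4=1+o_p(1)$ yields the first claim. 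Crucially, this argument needs only the eigenvalue lower bound and the concentration of $\Vert\z_i\Vert_2^2$, not that of $Q_i$.

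For the second display I will decompose
\begin{align*}
\frac{d\xi_i^2}{\hat r_i^2\Vert\z_i\Vert_2^2}-\frac{d}{\tr(\bms)}=\left(\frac{r_i^2}{\hat r_i^2}-1\right)\frac{d}{Q_i}+\left(\frac{d}{Q_i}-\frac{d}{\tr(\bms)}\right)
\end{align*}
and bound the averaged square via $(a+b)^2\le 2a^2+2b^2$. The cross term contributes at most $2\max_i(r_i^2/\hat r_i^2-1)^2\cdot n^{-1}\sum_i(d/Q_i)^2=O_p((\log n)/n)$ by \eqref{A.3} and the first display. Writing $d/Q_i-d/\tr(\bms)=d(\tr(\bms)-Q_i)/(Q_i\tr(\bms))$ and using $Q_i\ge c_0\Vert\z_i\Vert_2^2$ together with $\tr(\bms)\ge c_0(d-m)$, the denominator is of order $d^2$ with high probability, so this term reduces to $O(d^{-2})\cdot n^{-1}\sum_i(\tr(\bms)-Q_i)^2$.

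The main obstacle is precisely this remaining term: showing $n^{-1}\sum_i(d/Q_i-d/\tr(\bms))^2=O_p((\log d)/n)$ requires the quadratic forms $Q_i=\z_i^\T\bms\z_i$ to concentrate around $\tr(\bms)$ at the required uniform-in-$i$ rate. The natural route is the Hanson--Wright inequality, $\Pr(|Q_i-\tr(\bms)|\ge t)\le 2\exp\{-c\min(t^2/\Vert\bms\Vert_F^2,\,t/\Vert\bms\Vert_2)\}$, combined with a maximal inequality over $i\le n$. The delicate part is that the spiked structure of Assumption~\ref{assumption 1} makes both $\Vert\bms\Vert_F$ and $\Vert\bms\Vert_2$ large relative to what naive concentration would tolerate, so the argument must weigh these norms carefully against $\tr(\bms)$; this control of the spiked quadratic form is the crux of the lemma and the step I expect to demand the most care.
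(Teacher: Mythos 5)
Your treatment of the first display is essentially the paper's own argument: both reduce $(\sqrt{d}\,\xi_i/(r_i\Vert\z_i\Vert_2))^{4}$ to $(d/(\z_i^{\T}\bms\z_i))^{2}\le\lambda_d^{-2}(\bms)\,(d/\Vert\z_i\Vert_2^2)^{2}$, invoke chi-squared moments, and patch in $\hat r_i$ via the uniform ratio bound from Lemma \ref{lemma 3}. No issue there.

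The second display is where the proposal has a genuine gap, and you have located it yourself without closing it: you never prove that $n^{-1}\sum_i\left(d/Q_i-d/\tr(\bms)\right)^2=O_p((\log d)/n)$ with $Q_i:=\z_i^{\T}\bms\z_i$; you only announce that Hanson--Wright plus a maximal inequality over $i\le n$ should do it, while conceding the spiked norms make this delicate. That route cannot be completed. Under Assumption \ref{assumption 1} the leading eigenvalues satisfy $\lambda_j(\bms)\asymp d$, so $\var(Q_i/d)=2\tr(\bms^2)/d^2\asymp 1$: in the eigenbasis, $Q_i/d=\sum_{j\le m}(\lambda_j/d)z_{ij}^2+d^{-1}\sum_{j>m}\lambda_j z_{ij}^2$, and the first block is a weighted sum of finitely many $\chi^2_1$ variables with weights of constant order, which does not concentrate at all. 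Consequently each summand $|d/Q_i-d/\tr(\bms)|$ is of exact order $O_p(1)$, and your own reduction $O(d^{-2})\cdot n^{-1}\sum_i(\tr(\bms)-Q_i)^2$ evaluates to $O(d^{-2})\cdot\tr(\bms^2)\asymp 1$, not $O_p((\log d)/n)$; no per-observation concentration inequality can push the average of these squares below constant order. The paper's proof does not bound the terms individually: it rewrites the sum of squares in terms of the square of the sample mean of $\z_i^{\T}\bms\z_i/d-\tr(\bms)/d$ (which is tied to $\tr(\tilde\bms_0)/d-\tr(\bms_0)/d$ and controlled by Lemma \ref{lemma 4}) plus an off-diagonal cross term handled by a second-moment/U-statistic computation using $\E(|\z_i^{\T}\bms\z_i-\tr(\bms)|^{2k})=O\{\tr^k(\bms^2)\}$ --- i.e., the mechanism is cancellation upon averaging over $i$, not concentration of each $Q_i$. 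So the step you flag as the crux is not merely unfinished; the strategy you propose for it is structurally incapable of delivering the stated rate, and an argument of the paper's averaging type (or some other cancellation mechanism) is required.
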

\begin{proof}
    Using the similar techniques as those used in the proof of Lemma \ref{lemma 3}, we obtain $\max_{1\le i\le n}|r_i^4\hat{r}_i^{-4}-1|=o_p(1)$.Furthermore, 
    \begin{align*}
        \frac{1}{n}\sum_{i=1}^n\left(\frac{\sqrt{d}\xi_i}{\hat{r}_i\Vert\z_i\Vert_2}\right)^4\le\frac{1}{n}\sum_{i=1}^n\left(\frac{\sqrt{d}\xi_i}{r_i\Vert\z_i\Vert_2}\right)^4\cdot\left(1+\max_{1\le i\le n}|r_i^4\hat{r}_i^{-4}-1|\right)=\frac{1}{n}\sum_{i=1}^n\left(\frac{\sqrt{d}\xi_i}{r_i\Vert\z_i\Vert_2}\right)^4\{1+o_p(1)\}.
    \end{align*}
    For $\X_i's\stackrel{i.i.d.}{\sim}EC_d(\bmu,\bms,\xi)$, we can write 
    \begin{align}
        \X_i=\bmu+\frac{\xi_i\bms^{1/2}\z_i}{\Vert\z_i\Vert_2},\label{A.5}
    \end{align}
    where $\z_i's\stackrel{i.i.d.}{\sim}N_d(\bmzero,\I_d)$, and $\z_i$ and $\xi_i$ are independent. Then,
    \begin{align*}
        \left(\frac{\sqrt{d}\xi_i}{r_i\Vert\z_i\Vert_2}\right)^{2k}=\left(\frac{d}{\z_i^{\T}\bms\z_i}\right)^k=\left(\frac{d}{\sum_{j=1}^d\lambda_j(\bms)z_{ij}^2}\right)^k\le \frac{1}{\lambda_d^k(\bms)}\left|\frac{d}{\Vert\z_i\Vert_2^2}\right|^k.
    \end{align*}
    Note that
    \begin{align*}
        \E\left(\left|\frac{d}{\Vert\z_i\Vert_2^2}\right|^{2k}\right)=\left(\frac{d}{2}\right)^{2k}\frac{\Gamma(d/2-2k)}{\Gamma(d/2)}=O(1),~~\text{for}~~1\le k\le d/4,
    \end{align*}
    which by Chebyshev's inequality, imply 
    \begin{align}
        \frac{1}{n}\sum_{i=1}^n\left(\frac{\sqrt{d}\xi_i}{r_i\Vert\z_i\Vert_2}\right)^{2k}\le \frac{1}{\lambda_d^k(\bms)}\frac{1}{n}\sum_{i=1}^n\left|\frac{d}{\Vert\z_i\Vert_2^2}\right|^k=O_p(1). \label{A.6}
    \end{align}
    Accordingly, we have
    \begin{align*}
        \frac{1}{n}\sum_{i=1}^n\left|\frac{d\xi_i^2}{r_i^2\Vert\z_i\Vert_2^2}-\frac{d}{\tr(\bms)}\right|^2\le&\left\{\frac{1}{n}\sum_{i=1}^n\left|\frac{d\xi_i^2}{r_i^2\Vert\z_i\Vert_2^2}-\frac{d}{\tr(\bms)}\right|\right\}^2\\
        =&\left\{\frac{1}{n}\sum_{i=1}^n\left|\frac{r_i^2\Vert\z_i\Vert_2^2}{d\xi_i^2}-\frac{\tr(\bms)}{d}\right|\cdot\left|\frac{r_i^2\Vert\z_i\Vert_2^2}{d\xi_i^2}\frac{\tr(\bms)}{d}\right|^{-1}\right\}^2\\
        \le&\frac{1}{n}\sum_{i=1}^n\left|\frac{r_i^2\Vert\z_i\Vert_2^2}{d\xi_i^2}-\frac{\tr(\bms)}{d}\right|^2\cdot \frac{d^2}{\tr^2(\bms)}\frac{1}{n}\sum_{i=1}^n\left(\frac{\sqrt{d}\xi_i}{r_i\Vert\z_i\Vert_2}\right)^4\\
        =&O_p\left\{\frac{1}{n}\sum_{i=1}^n\left|\frac{r_i^2\Vert\z_i\Vert_2^2}{d\xi_i^2}-\frac{\tr(\bms)}{d}\right|^2\right\}.
    \end{align*}
    To proceed,
    \begin{align*}
        &\frac{1}{n}\sum_{i=1}^n\left|\frac{r_i^2\Vert\z_i\Vert_2^2}{d\xi_i^2}-\frac{\tr(\bms)}{d}\right|^2\\
        =&\frac{1}{n}\sum_{i=1}^n\left|\frac{\z_i^{\T}\bms\z_i}{d}-\frac{\tr(\bms)}{d}\right|^2\\
        =&\left[\frac{1}{n}\sum_{i=1}^n\left\{\frac{\z_i^{\T}\bms\z_i}{d}-\frac{\tr(\bms)}{d}\right\} \right]^2-\frac{1}{n^2}\sum_{1\le i\ne j\le n}\left\{\frac{\z_i^{\T}\bms\z_i}{d}-\frac{\tr(\bms)}{d}\right\}\left\{\frac{\z_j^{\T}\bms\z_j}{d}-\frac{\tr(\bms)}{d}\right\}\\
        =&:\uppercase\expandafter{\romannumeral1}+\uppercase\expandafter{\romannumeral2}.
    \end{align*}
    By the first result in Lemma \ref{lemma 4}, we have
    \begin{align*}
        \uppercase\expandafter{\romannumeral1}^{1/2}=\frac{1}{d}\tr\left(\frac{1}{n}\sum_{i=1}^n\tilde{\X}_i\tilde{\X}_i^{\T}\right)-\frac{\tr(\bms)}{d}=\frac{\tr(\bms)}{d}\left\{\frac{\tr(\tilde{\bms}_0)}{d}-\frac{\tr(\bms_0)}{d}\right\}=O_p\left(\sqrt{\frac{\log d}{n}}\right).
    \end{align*}
    Because (see for example, Lemma 2 in \cite{ding2025sub}),
    \begin{align}
        \E(|\z_i^{\T}\bms\z_i-\tr(\bms)|^{2k})=O\{\tr^k(\bms^2)\},\label{A.7}
    \end{align}
     we have $\E(\uppercase\expandafter{\romannumeral2})=0$ and $\E(\uppercase\expandafter{\romannumeral2}^2)=O\{n^{-2}d^{-4}\tr^2(\bms^2)\}=O(n^{-2})$, which lead to $\uppercase\expandafter{\romannumeral2}=O_p(n^{-1})$. Hence,
    \begin{align*}
        \frac{1}{n}\sum_{i=1}^n\left|\frac{d\xi_i^2}{r_i^2\Vert\z_i\Vert_2^2}-\frac{d}{\tr(\bms)}\right|^2=O_p(\uppercase\expandafter{\romannumeral1}+\uppercase\expandafter{\romannumeral2})=O_p\left(\frac{\log d}{n}\right).
    \end{align*}
    Using the similar techniques as those used in the proof of \eqref{A.3}, we obtain $\max_{1\le i\le n}|r_i^2\hat{r}_i^{-2}-1|=O_p\{\sqrt{(\log n)/n}\}$. Then,
    \begin{align*}
        &\left|\frac{1}{n}\sum_{i=1}^n\left|\frac{d\xi_i^2}{r_i^2\Vert\z_i\Vert_2^2}-\frac{d}{\tr(\bms)}\right|^2-\frac{1}{n}\sum_{i=1}^n\left|\frac{d\xi_i^2}{\hat{r}_i^2\Vert\z_i\Vert_2^2}-\frac{d}{\tr(\bms)}\right|^2\right|\\
        =&\left|\frac{2}{n}\sum_{i=1}^n\left\{\frac{d\xi_i^2}{r_i^2\Vert\z_i\Vert_2^2}-\frac{d}{\tr(\bms)}\right\}\frac{d\xi_i^2}{r_i^2\Vert\z_i\Vert_2^2}(1-r_i^2\hat{r}_i^{-2})+\frac{1}{n}\sum_{i=1}^n\left(\frac{\sqrt{d}\xi_i}{r_i\Vert\z_i\Vert_2}\right)^4(1-r_i^2\hat{r}_i^{-2})^2\right|\\
        \le&2\left\{\frac{1}{n}\sum_{i=1}^n\left|\frac{d\xi_i^2}{r_i^2\Vert\z_i\Vert_2^2}-\frac{d}{\tr(\bms)}\right|^2\right\}^{1/2}\cdot\left\{\frac{1}{n}\sum_{i=1}^n\left(\frac{\sqrt{d}\xi_i}{r_i\Vert\z_i\Vert_2}\right)^4\right\}^{1/2}\cdot \max_{1\le i\le n}|1-r_i^2\hat{r}_i^{-2}|\\
        &+\frac{1}{n}\sum_{i=1}^n\left(\frac{\sqrt{d}\xi_i}{r_i\Vert\z_i\Vert_2}\right)^4\cdot \max_{1\le i\le n}|1-r_i^2\hat{r}_i^{-2}|^2\\
        =&O_p\left(\sqrt{\frac{\log d}{n}}\sqrt{\frac{\log n}{n}}+\frac{\log n}{n}\right).
    \end{align*}
    Therefore, we have
    \begin{align*}
        \frac{1}{n}\sum_{i=1}^n\left|\frac{d\xi_i^2}{\hat{r}_i^2\Vert\z_i\Vert_2^2}-\frac{d}{\tr(\bms)}\right|^2=O_p\left(\sqrt{\frac{\log d}{n}}\sqrt{\frac{\log n}{n}}+\frac{\log n}{n}+\frac{\log d}{n}\right).
    \end{align*}
\end{proof}

\begin{lemma}\label{lemma 6}
    Under Assumptions 1 and 2, if the estimators $\hat{\G}_m, \hat{\L}_m$ satisfy 
    \begin{align*}
        &\Vert \hat{\L}_m\L_m^{-1}-\I_m \Vert_{\max}=O_p\left(\sqrt{\frac{\log d}{n}}+\sqrt{\frac{\log n}{n}} \right),~~\text{and}\\
        &\Vert \hat{\G}_m-\G_m\Vert_{\max}=O_p\left(\sqrt{\frac{\log d}{nd}}+\sqrt{\frac{\log n}{nd}} \right),
    \end{align*}
    then 
    \begin{align*}
        &\Vert\hat{\G}_m\hat{\L}_m\hat{\G}_m^{\T}-\G_m\L_m\G_m^{\T}\Vert_{\max}=O_p\left(\sqrt{\frac{\log d}{n}}+\sqrt{\frac{\log n}{n}} \right).
    \end{align*}
\end{lemma}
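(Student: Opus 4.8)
The plan is to reduce the product to a standard three-term perturbation decomposition and then exploit two structural facts: the leading eigenvalues diverge like $d$, while the leading eigenvectors are delocalized with entries of order $d^{-1/2}$. The delocalization is the crucial ingredient that keeps the diverging eigenvalues harmless in the max norm.

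First I would record the relevant magnitudes. Since $\L_m$ and $\hat{\L}_m$ are diagonal, $\Vert\hat{\L}_m\L_m^{-1}-\I_m\Vert_{\max}=\max_{j\le m}|\lambda_j(\hat{\bms}_0)/\lambda_j(\bms_0)-1|$, so the hypothesis gives $\max_{j\le m}|\lambda_j(\hat{\bms}_0)-\lambda_j(\bms_0)|=O_p(w_n)\max_{j\le m}\lambda_j(\bms_0)$, where $w_n=\sqrt{\log d/n}+\sqrt{\log n/n}$. By Assumption \ref{assumption 1} (and $\tr(\bms)\asymp d$, since the $m$ spiked eigenvalues are of order $d$ and the remaining $d-m$ are bounded), $\lambda_j(\bms_0)\asymp d$ for $j\le m$, so this eigenvalue error is $O_p(dw_n)$.

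Second, and this is the key step, I would establish the incoherence bound $\Vert\G_m\Vert_{\max}=O(d^{-1/2})$. From the eigen-equation $\bms_0 u_j(\bms_0)=\lambda_j(\bms_0)u_j(\bms_0)$ one has, for each coordinate $i$, $\lambda_j(\bms_0)|u_{ji}|\le\Vert(\bms_0)_{i\cdot}\Vert_2$, where $(\bms_0)_{i\cdot}$ is the $i$-th row. The row norm satisfies $\Vert(\bms_0)_{i\cdot}\Vert_2^2=(\bms_0^2)_{ii}=\e_i^{\T}\bms_0^2\e_i\le\lambda_1(\bms_0)\,\e_i^{\T}\bms_0\e_i=\lambda_1(\bms_0)(\bms_0)_{ii}$, and $(\bms_0)_{ii}=d\bms_{ii}/\tr(\bms)=O(1)$ by Assumption \ref{assumption 2}; hence $\Vert(\bms_0)_{i\cdot}\Vert_2=O(\sqrt d)$, and dividing by $\lambda_j(\bms_0)\asymp d$ yields $|u_{ji}|=O(d^{-1/2})$. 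Combined with the hypothesis $\Vert\hat{\G}_m-\G_m\Vert_{\max}=O_p(w_n d^{-1/2})$ and $w_n=o(1)$, this also gives $\Vert\hat{\G}_m\Vert_{\max}=O_p(d^{-1/2})$. I would then decompose
\begin{align*}
\hat{\G}_m\hat{\L}_m\hat{\G}_m^{\T}-\G_m\L_m\G_m^{\T}
=\hat{\G}_m(\hat{\L}_m-\L_m)\hat{\G}_m^{\T}
+(\hat{\G}_m-\G_m)\L_m\hat{\G}_m^{\T}
+\G_m\L_m(\hat{\G}_m-\G_m)^{\T},
\end{align*}
and bound the $(i,k)$ entry of each summand. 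Since $m$ is a fixed constant, each entry is a finite sum of scalar products: the first term is $\sum_{j\le m}|\lambda_j(\hat{\bms}_0)-\lambda_j(\bms_0)|\,|\hat{u}_{ji}|\,|\hat{u}_{jk}|=O_p(dw_n)\cdot O_p(d^{-1/2})\cdot O_p(d^{-1/2})=O_p(w_n)$; the second is $\sum_{j\le m}\lambda_j(\bms_0)\,|\hat{u}_{ji}-u_{ji}|\,|\hat{u}_{jk}|=O(d)\cdot O_p(w_n d^{-1/2})\cdot O_p(d^{-1/2})=O_p(w_n)$; and the third is symmetric. Adding the three bounds via the triangle inequality gives $\Vert\hat{\G}_m\hat{\L}_m\hat{\G}_m^{\T}-\G_m\L_m\G_m^{\T}\Vert_{\max}=O_p(w_n)$.

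The hard part is the delocalization bound $\Vert\G_m\Vert_{\max}=O(d^{-1/2})$ in the second step: it is precisely this that cancels the diverging factor $\lambda_j(\bms_0)\asymp d$ against $|u_{ji}u_{jk}|=O(d^{-1})$, so the product error stays at the $w_n$ scale instead of inflating by a factor of $d$. Once that bound is secured, the remainder is elementary triangle-inequality bookkeeping that relies only on $m$ being fixed.
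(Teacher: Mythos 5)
Your proof is correct and follows essentially the same route as the paper's: the key step in both is the delocalization bound $\Vert\G_m\Vert_{\max}=O(d^{-1/2})$ obtained from the eigen-equation $u_j(\bms_0)=\{\lambda_j(\bms_0)\}^{-1}\bms_0u_j(\bms_0)$, which cancels the order-$d$ eigenvalues in an entry-wise telescoping decomposition of the product difference. Your three-term decomposition and row-norm bound via $(\bms_0^2)_{ii}\le\lambda_1(\bms_0)(\bms_0)_{ii}$ are only cosmetic variations on the paper's four-term split and its use of $\Vert\M\x\Vert_{\max}\le\sqrt{d}\Vert\M\Vert_{\max}\Vert\x\Vert_2$.
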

\begin{proof}
    Since $u_j(\bms_0)=\{\lambda_j(\bms_0)\}^{-1}\bms_0u_j(\bms_0)$, using the fact that $\Vert\M\x\Vert_{\max}\le \sqrt{d}\Vert\M\Vert_{\max}\Vert\x\Vert_2$ for $\M\in\mR^{d\times d}$ and $\x\in\mR^d$, we have 
    \begin{align*}
        \Vert\G_m\Vert_{\max}=\max_{1\le j\le m}\Vert u_j(\bms_0)\Vert_{\max}\le \max_{1\le j\le m}\{\lambda_j(\bms_0)\}^{-1}\sqrt{d}\Vert\bms_0\Vert_{\max}\cdot\Vert u_j(\bms)\Vert_2=O(d^{-1/2}),
    \end{align*}
    which leads to $\Vert\hat{\G}_m\Vert_{\max}=O_p(d^{-1/2})$. Furthermore, because $\Vert\hat{\L}_m-\L_m\Vert_{\max}\le \Vert \hat{\L}_m\L_m^{-1}-\I_m \Vert_{\max}\cdot \sqrt{m}\Vert\L_m\Vert_2$, we have
    \begin{align*}
        \Vert\hat{\G}_m\hat{\L}_m\hat{\G}_m^{\T}-\G_m\L_m\G_m^{\T}\Vert_{\max}=&\Vert\hat{\G}_m(\hat{\L}_m-\L_m)\hat{\G}_m^{\T}+(\hat{\G}_m-\G_m)\L_m(\hat{\G}_m-\G_m)^{\T}\\
        &+\G_m\L_m(\hat{\G}_m-\G_m)^{\T}+(\hat{\G}_m-\G_m)\L_m\G_m^{\T}\Vert_{\max}\\
        \le&\Vert\hat{\G}_m\Vert_{\max}^2\cdot\Vert\hat{\L}_m-\L_m\Vert_{\max}+\Vert\hat{\G}_m-\G_m\Vert_{\max}^2\cdot\Vert\L_m\Vert_{\max}\\
        &+2\Vert\G_m\Vert_{\max}\cdot\Vert\L_m\Vert_{\max}\cdot\Vert\hat{\G}_m-\G_m\Vert_{\max}\\
        =&O_p\left(\sqrt{\frac{\log d}{n}}+\sqrt{\frac{\log n}{n}} \right).
    \end{align*}
\end{proof}

\begin{lemma}\label{lemma 7}
    Let $\hat{\V}_{S}$ be an estimator of $\V_0$. Recalling the definition that $r_i=\Vert\X_i-\bmu\Vert_2$, define
    \begin{align}
        \hat{r}_{T,i}:=\Vert\hat{\V}_S^{1/2}(\X_i-\hat{\bmu})\Vert_2~~\text{and}~~r_{T,i}:=\Vert\V_0^{1/2}(\X_i-\bmu)\Vert_2.\label{A.8}
    \end{align}
    Under Assumptions 1-3, if the estimator $\hat{\V}_{S}$ satisfies
    \begin{align*}
        &\Vert \hat{\V}_{S}-\V_0 \Vert_2=O_p(a_n)=o_p(1),
    \end{align*}
    then 
    \begin{align*}
        &\max_{1\le i\le n}|r_i^{-2}\hat{r}_{T,i}^2-r_i^{-2}r_{T,i}^2|=O_p\left(a_n+\sqrt{\frac{\log n}{n}}\right),\\
        &\frac{1}{n}\sum_{i=1}^n|r_i^{-2}\hat{r}_{T,i}^2-r_i^{-2}r_{T,i}^2|^k=O_p\{(a_n+n^{-1/2})^k\},~~\text{for}~~k\ge 1,\\
        &\frac{1}{n}\sum_{i=1}^n|1-r_{T,i}^2\hat{r}_{T,i}^{-2}|^k=O_p\{(a_n+n^{-1/2})^k\},~~\text{for}~~k\ge 1,\\
        &\frac{1}{n}\sum_{i=1}^n|r_i^2r_{T,i}^{-2}|^k=O_p(1),~~\text{for}~~k\ge 1,~~\text{and}\\
        &\frac{1}{n}\sum_{i=1}^nr_i^2\hat{r}_{T,i}^{-2}=O_p(1).
    \end{align*}
\end{lemma}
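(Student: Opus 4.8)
The plan is to reduce all five displays to the stochastic representation \eqref{A.5}. Writing $\x_i=\X_i-\bmu$, $\hat{\x}_i=\X_i-\hat{\bmu}$, $\bmd=\bmu-\hat{\bmu}$ and $\bmD=\hat{\V}_S-\V_0$, and using $\bms^{1/2}\V_0\bms^{1/2}=d^{-1}\tr(\bms)\I_d$, one obtains the exact identities
\begin{align*}
r_{T,i}^2=\x_i^\T\V_0\x_i=\frac{\tr(\bms)}{d}\xi_i^2,\qquad \frac{r_i^2}{r_{T,i}^2}=\frac{d}{\tr(\bms)}\frac{\z_i^\T\bms\z_i}{\Vert\z_i\Vert_2^2},
\end{align*}
the latter being free of $\xi_i$ and of $\hat{\V}_S$. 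Everything is then driven by the decomposition $\hat{r}_{T,i}^2-r_{T,i}^2=\hat{\x}_i^\T\bmD\hat{\x}_i+2\bmd^\T\V_0\x_i+\bmd^\T\V_0\bmd$, combined with the Rayleigh bound $|\hat{\x}_i^\T\bmD\hat{\x}_i|\le\Vert\bmD\Vert_2\Vert\hat{\x}_i\Vert_2^2$, which holds for every realisation and thereby sidesteps the dependence between the sample and $\hat{\V}_S$.

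Dividing this identity by $r_i^2$ yields the first two displays. The leading term satisfies $r_i^{-2}|\hat{\x}_i^\T\bmD\hat{\x}_i|\le\Vert\bmD\Vert_2\,\hat{r}_i^2 r_i^{-2}=O_p(a_n)$ uniformly, since $\max_i|\hat{r}_i^2r_i^{-2}-1|=o_p(1)$ by \eqref{A.3}; for the location terms I would use $\Vert\bmd\Vert_2=O_p(\zeta_1^{-1}n^{-1/2})$ (Lemma \ref{lemma 1}), $\Vert\V_0\Vert_2=O(1)$ (from Assumption \ref{assumption 1}) and $|\bmd^\T\V_0\x_i|\le\Vert\V_0^{1/2}\bmd\Vert_2\,r_{T,i}$. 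The uniform bound then follows from $\max_i\nu_i=O_p(\sqrt{\log n})$ in \eqref{A.2} and the deterministic inequality $r_{T,i}^2 r_i^{-2}\le C$, while the averaged bound follows from $n^{-1}\sum_i\nu_i^k=O_p(1)$ (sub-Gaussianity of $\nu_i$), giving the first display at rate $a_n+\sqrt{(\log n)/n}$ and the second at rate $(a_n+n^{-1/2})^k$.

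The fourth display is a moment bound. Diagonalising $\bms$ writes $r_i^2r_{T,i}^{-2}=d\,\tr(\bms)^{-1}\sum_{j=1}^d\lambda_j(\bms)\beta_{ij}$, where $(\beta_{ij})_j$ are the squared coordinates of a uniform unit vector and each $\beta_{ij}\sim\mathrm{Beta}(1/2,(d-1)/2)$; since the spiked $\lambda_j(\bms)\asymp d$ while $\E(\beta_{ij}^k)\asymp d^{-k}$, each spiked term has bounded $k$-th moment and the $d-m$ non-spiked terms contribute at most $C_0$, so $\E\{(r_i^2r_{T,i}^{-2})^k\}=O(1)$ and Markov's inequality gives the claim. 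The same expansion shows $r_i^2r_{T,i}^{-2}$ is sub-exponential with a $d$-free norm---its tail is governed by the single spiked coordinate $d\beta_{ij}$, which has an $e^{-ct}$ tail---whence the auxiliary bound $\max_{1\le i\le n}r_i^2r_{T,i}^{-2}=O_p(\log n)$.

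The third and fifth displays are the crux. With $D_i=r_i^{-2}(\hat{r}_{T,i}^2-r_{T,i}^2)$ and $\rho_i=D_i\,r_i^2r_{T,i}^{-2}$ one has $1-r_{T,i}^2\hat{r}_{T,i}^{-2}=\rho_i/(1+\rho_i)$, and the main obstacle is that $1+\rho_i=\hat{r}_{T,i}^2 r_{T,i}^{-2}$ has no uniform lower bound (the smallest eigenvalue of $\V_0$ is $\asymp d^{-1}$), so an index with anomalously small $\hat{r}_{T,i}^2$ could inflate the ratio. I would overcome this by combining the first display with the auxiliary max bound: $\max_i|\rho_i|\le(\max_i|D_i|)(\max_i r_i^2r_{T,i}^{-2})=O_p\{(a_n+\sqrt{(\log n)/n})\log n\}=o_p(1)$, so on an event of probability tending to one $1+\rho_i\ge 1/2$ for all $i$ and hence $|1-r_{T,i}^2\hat{r}_{T,i}^{-2}|\le 2|\rho_i|$. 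A Cauchy--Schwarz split then gives
\begin{align*}
\frac1n\sum_{i=1}^n|1-r_{T,i}^2\hat{r}_{T,i}^{-2}|^k\le 2^k\Big(\frac1n\sum_{i=1}^n|D_i|^{2k}\Big)^{1/2}\Big(\frac1n\sum_{i=1}^n(r_i^2r_{T,i}^{-2})^{2k}\Big)^{1/2}=O_p\{(a_n+n^{-1/2})^k\},
\end{align*}
using the second display at exponent $2k$ and the fourth at exponent $2k$. Finally, writing $r_i^2\hat{r}_{T,i}^{-2}=(r_i^2r_{T,i}^{-2})\{1-(1-r_{T,i}^2\hat{r}_{T,i}^{-2})\}$ and applying Cauchy--Schwarz once more with the fourth and third displays yields the fifth, $n^{-1}\sum_i r_i^2\hat{r}_{T,i}^{-2}=O_p(1)$.
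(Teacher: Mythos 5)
Your proof is correct and follows essentially the same route as the paper's: the same exact quadratic-form decomposition of $\hat{r}_{T,i}^2-r_{T,i}^2$ bounded via $\Vert\hat{\V}_S-\V_0\Vert_2$ and the spatial-median rate with $\max_i\nu_i=O_p(\sqrt{\log n})$, Gaussian/Beta moment bounds for $r_i^2r_{T,i}^{-2}$ (you compute the Beta moments directly where the paper invokes its \eqref{A.6}--\eqref{A.7}, but it is the same moment calculation), and Cauchy--Schwarz to convert the second and fourth displays into the third and fifth. The one point worth flagging is that your good-event step needs $\max_i|\rho_i|=O_p\{(a_n+\sqrt{(\log n)/n})\log n\}=o_p(1)$, i.e.\ $a_n\log n\to 0$, which does not follow from the hypothesis $a_n=o(1)$ alone; however, the paper's own bound $|1-r_{T,i}^2\hat{r}_{T,i}^{-2}|\le|D_i|\,\bigl(r_i^{-2}r_{T,i}^2-\max_j|D_j|\bigr)^{-1}$ silently requires the same control of the denominator (since $\min_{i\le n}r_i^{-2}r_{T,i}^2\asymp 1/\log n$), so your argument is no weaker than the paper's and the condition holds in the applications where $a_n=O(w_n)$.
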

\begin{proof}
    By the fact that $|\u^{\T}\M\v|\le \Vert\M\Vert_2\Vert\u\Vert_2\Vert\v\Vert_2$ for $\M\in\mR^{d\times d}$ and $\u,\v\in\mR^d$, we have
    \begin{align*}
        &r_i^{-2}|\hat{r}_{T,i}^2-r_{T,i}^2|\\
        =&r_i^{-2}|(\X_i-\bmu)^{\T}(\hat{\V}_S-\V_0)(\X_i-\bmu)+(\X_i-\bmu)^{\T}\hat{\V}_S(\bmu-\hat{\bmu})+(\bmu-\hat{\bmu})^{\T}\hat{\V}_S(\X_i-\bmu)|\\
        \le&\Vert\hat{\V}_S-\V_0\Vert_2+2\Vert\hat{\V}_S\Vert_2\cdot\zeta_1\Vert\hat{\bmu}-\bmu\Vert_2\cdot\nu_i,
    \end{align*}
    which, combining with Lemma \ref{lemma 1} and \eqref{A.2}, leads to
    \begin{align*}
        \max_{1\le i\le n}|r_i^{-2}\hat{r}_{T,i}^2-r_i^{-2}r_{T,i}^2|\le&\Vert \hat{\V}_{S}-\V_0 \Vert_2+2\Vert \hat{\V}_{S}\Vert_2\cdot\zeta_1\Vert\hat{\bmu}-\bmu\Vert_2\cdot\max_{1\le i\le n}\nu_i\\
        =&O_p\left(a_n+\sqrt{\frac{\log n}{n}}\right),
    \end{align*}
    and
    \begin{align*}
        \frac{1}{n}\sum_{i=1}^n|r_i^{-2}\hat{r}_{T,i}^2-r_i^{-2}r_{T,i}^2|^k\le&\frac{1}{n}\sum_{i=1}^n(\Vert\hat{\V}_S-\V_0\Vert_2+2\Vert\hat{\V}_S\Vert_2\cdot\zeta_1\Vert\hat{\bmu}-\bmu\Vert_2\cdot\nu_i)^k\\
        =&\sum_{l=0}^k\binom{k}{l}\Vert\hat{\V}_S-\V_0\Vert_2^l\cdot(2\Vert\hat{\V}_S\Vert_2\cdot\zeta_1\Vert\hat{\bmu}-\bmu\Vert_2)^{k-l}\cdot\frac{1}{n}\sum_{i=1}^n\nu_i^{k-l}\\
        =&O_p\{(a_n+n^{-1/2})^k\}.
    \end{align*}
    By \eqref{A.5}, we have
    \begin{align*}
        r_i^2r_{T,i}^{-2}=\frac{(\X_i-\bmu)^{\T}(\X_i-\bmu)}{(\X_i-\bmu)^{\T}\V_0(\X_i-\bmu)}=\frac{d}{\tr(\bms)}\frac{\z_i^{\T}\bms\z_i}{\z_i^{\T}\z_i},
    \end{align*}
    which, by Cauchy Schwarz inequality, leads to
    \begin{align*}
        \frac{1}{n}\sum_{i=1}^n|r_i^2r_{T,i}^{-2}|^k\le&\left\{\frac{d}{\tr(\bms)}\right\}^k\left(\frac{1}{n}\sum_{i=1}^n\left|\frac{\z_i^{\T}\bms\z_i}{d}\right|^{2k}\right)^{1/2}\left(\frac{1}{n}\sum_{i=1}^n\left|\frac{d}{\z_i^{\T}\z_i}\right|^{2k}\right)^{1/2}=O_p(1),
    \end{align*}
    where the last step follows from \eqref{A.6} and \eqref{A.7}. Further, since
    \begin{align*}
        |1-r_{T,i}^2\hat{r}_{T,i}^{-2}|=\left|1-\frac{r_i^{-2}r_{T,i}^2}{r_i^{-2}\hat{r}_{T,i}^2}\right|\le|r_i^{-2}\hat{r}_{T,i}^2-r_i^{-2}r_{T,i}^2|\cdot\left|r_i^{-2}r_{T,i}^2-\max_{1\le i\le n}|r_i^{-2}\hat{r}_{T,i}^2-r_i^{-2}r_{T,i}^2|\right|^{-1},
    \end{align*}
    we have
    \begin{align*}
        &\frac{1}{n}\sum_{i=1}^n|1-r_{T,i}^2\hat{r}_{T,i}^{-2}|^k\\
        \le&\left(\frac{1}{n}\sum_{i=1}^n|r_i^{-2}\hat{r}_{T,i}^2-r_i^{-2}r_{T,i}^2|^{2k}\right)^{1/2}\left(\frac{1}{n}\sum_{i=1}^n\left|r_i^{-2}r_{T,i}^2-\max_{1\le i\le n}|r_i^{-2}\hat{r}_{T,i}^2-r_i^{-2}r_{T,i}^2|\right|^{-2k}\right)^{1/2}\\
        =&O_p\{(a_n+n^{-1/2})^k\}.
    \end{align*}
    To proceed, since
    \begin{align*}
        r_i^2r_{T,i}^{-2}-r_i^2\hat{r}_{T,i}^{-2}=(r_i^{-2}\hat{r}_{T,i}^2-r_i^{-2}r_{T,i}^2)(r_i^2r_{T,i}^{-2})^2-(r_i^{-2}\hat{r}_{T,i}^2-r_i^{-2}r_{T,i}^2)(r_i^2r_{T,i}^{-2})^2(1-r_{T,i}^2\hat{r}_{T,i}^{-2}),
    \end{align*}
    we have
    \begin{align}
        &\left|\frac{1}{n}\sum_{i=1}^nr_i^2r_{T,i}^{-2}-\frac{1}{n}\sum_{i=1}^nr_i^2\hat{r}_{T,i}^{-2}\right|\nonumber\\
        \le&\left(\frac{1}{n}\sum_{i=1}^n|r_i^{-2}\hat{r}_{T,i}^2-r_i^{-2}r_{T,i}^2|^2\right)^{1/2}\left(\frac{1}{n}\sum_{i=1}^n|r_i^2r_{T,i}^{-2}|^4\right)^{1/2}\nonumber\\
        &+\left(\frac{1}{n}\sum_{i=1}^n|r_i^{-2}\hat{r}_{T,i}^2-r_i^{-2}r_{T,i}^2|^2\right)^{1/2}\left(\frac{1}{n}\sum_{i=1}^n|r_i^2r_{T,i}^{-2}|^8\right)^{1/4}\left(\frac{1}{n}\sum_{i=1}^n|1-r_{T,i}^2\hat{r}_{T,i}^{-2}|^4\right)^{1/4}\nonumber\\
        =&O_p(a_n+n^{-1/2})=o_p(1),\label{A.9}
    \end{align}
    and
    \begin{align*}
        \frac{1}{n}\sum_{i=1}^nr_i^2\hat{r}_{T,i}^{-2}=O_p(1).
    \end{align*}
\end{proof}

\begin{lemma}\label{lemma 8}
    Recalling the definitions of $\z_i$ given in \eqref{A.5}, $r_i$ and $\hat{r}_{T,i},r_{T,i}$ given in \eqref{A.8}, suppose the conditions in Lemma \ref{lemma 7} hold, then
    \begin{align*}
        &\frac{1}{n}\sum_{i=1}^n\left|\frac{\sqrt{d}\xi_i}{\hat{r}_{T,i}^2\Vert\z_i\Vert_2}\right|^2=O_p(\zeta_4^{1/2}),\\
        &\frac{1}{n}\sum_{i=1}^n\left|\frac{d\xi_i^2}{\hat{r}_{T,i}^2\Vert\z_i\Vert_2^2}-\frac{d}{\tr(\bms)}\right|^2=O_p\{(a_n+n^{-1/2}+d^{-1/2})^2\}.
    \end{align*}
\end{lemma}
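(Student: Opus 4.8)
The plan is to exploit the exact identity $r_{T,i}^2=(\X_i-\bmu)^\T\V_0(\X_i-\bmu)=\tr(\bms)d^{-1}\xi_i^2$, which follows from $\V_0=\tr(\bms)d^{-1}\bms^{-1}$ together with the representation \eqref{A.5}; combined with $\tr(\bms)\asymp d$ under Assumption 1, this turns every weight into a ratio of chi-square--type quantities. I would first establish the second display, since it is the crux, and then deduce the first from it by a single Cauchy--Schwarz split.

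For the second display I would decompose $\frac{d\xi_i^2}{\hat r_{T,i}^2\|\z_i\|_2^2}-\frac{d}{\tr(\bms)}=A_i+B_i$, where $B_i=\frac{d\xi_i^2}{r_{T,i}^2\|\z_i\|_2^2}-\frac{d}{\tr(\bms)}$ replaces $\hat r_{T,i}$ by its population analogue $r_{T,i}$ and $A_i$ collects the estimation error. Using the identity above, $B_i=\tr(\bms)^{-1}d\,(d\|\z_i\|_2^{-2}-1)$, so $n^{-1}\sum_iB_i^2=O_p(d^{-1})$ because $\|\z_i\|_2^2\sim\chi_d^2$ gives $\E(d\|\z_i\|_2^{-2}-1)^2=O(d^{-1})$ and $d/\tr(\bms)=O(1)$. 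For $A_i=\frac{d^2}{\tr(\bms)\|\z_i\|_2^2}(r_{T,i}^2\hat r_{T,i}^{-2}-1)$, I would apply Cauchy--Schwarz to separate the chi-square factor $n^{-1}\sum_i\|\z_i\|_2^{-8}=O_p(d^{-4})$ (inverse chi-square moments) from $n^{-1}\sum_i(1-r_{T,i}^2\hat r_{T,i}^{-2})^4=O_p\{(a_n+n^{-1/2})^4\}$ supplied by Lemma \ref{lemma 7}; after multiplying by the prefactor $d^4/\tr(\bms)^2=O(d^2)$ this yields $n^{-1}\sum_iA_i^2=O_p\{(a_n+n^{-1/2})^2\}$. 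Combining through $(A_i+B_i)^2\le2A_i^2+2B_i^2$ gives the claimed $O_p\{(a_n+n^{-1/2}+d^{-1/2})^2\}$.

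For the first display I would rewrite the summand using $\xi_i^2\|\z_i\|_2^{-2}=r_i^2(\z_i^\T\bms\z_i)^{-1}$ as $\frac{d\xi_i^2}{\hat r_{T,i}^4\|\z_i\|_2^2}=\frac{d\,r_i^2}{\hat r_{T,i}^4\,\z_i^\T\bms\z_i}$, and then invoke $\z_i^\T\bms\z_i\ge\lambda_d(\bms)\|\z_i\|_2^2\ge c_0\|\z_i\|_2^2$ (Assumption 1) to get $\frac{d\xi_i^2}{\hat r_{T,i}^4\|\z_i\|_2^2}\le c_0^{-1}(d\|\z_i\|_2^{-2})\,r_i^{-2}(r_i^2\hat r_{T,i}^{-2})^2$. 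The factor $\max_{1\le i\le n}d\|\z_i\|_2^{-2}=O_p(1)$ follows from lower-tail chi-square concentration and a union bound, since $\log n\le\log d=o(d)$ under Assumption 1; pulling it out and applying Cauchy--Schwarz leaves $n^{-1}\sum_ir_i^{-2}(r_i^2\hat r_{T,i}^{-2})^2\le\{n^{-1}\sum_ir_i^{-4}\}^{1/2}\{n^{-1}\sum_i(r_i^2\hat r_{T,i}^{-2})^4\}^{1/2}$. Here $n^{-1}\sum_ir_i^{-4}=O_p(\zeta_4)$ by Markov's inequality (Assumption 3(\romannumeral1) ensures $\zeta_4<\infty$), while $n^{-1}\sum_i(r_i^2\hat r_{T,i}^{-2})^4=O_p(1)$ after factoring $r_i^2\hat r_{T,i}^{-2}=(r_i^2r_{T,i}^{-2})(r_{T,i}^2\hat r_{T,i}^{-2})$ and using the all-order averaged bounds of Lemma \ref{lemma 7}; the product is $O_p(\zeta_4^{1/2})$ as claimed.

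The main obstacle will be controlling the inverse radii $\hat r_{T,i}^{-2}$ and $\hat r_{T,i}^{-4}$, which can be individually large when some $\hat r_{T,i}$ is atypically small and which entangle the estimation errors of $\hat{\V}_S$ and $\hat{\bmu}$ with the random radius. The resolution is never to bound these quantities uniformly, but only in averaged $L^2/L^4$ senses, always routing the comparison of $\hat r_{T,i}$ with $r_{T,i}$ and $r_i$ through Lemma \ref{lemma 7} and peeling off the chi-square moment factors by Cauchy--Schwarz. A secondary point of care is the dependence between $r_i$ (or $r_{T,i}$) and $\|\z_i\|_2$: because $\z_i^\T\bms\z_i$ may be inflated along the spiked directions of $\bms$, I would avoid factorizing their joint moments and instead rely on the exact identity $r_{T,i}^2\propto\xi_i^2$ and on the uniform bound for $d\|\z_i\|_2^{-2}$ to keep every remaining chi-square factor harmless.
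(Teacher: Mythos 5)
Your proposal is correct and follows essentially the same route as the paper's proof: both rest on the exact identity $r_{T,i}^2=\tr(\bms)d^{-1}\xi_i^2$, the averaged bounds of Lemma \ref{lemma 7}, Cauchy--Schwarz, and inverse chi-square moment bounds for $\Vert\z_i\Vert_2^{-2k}$. The only differences are cosmetic: for the second display you use the additive split $(A_i+B_i)^2\le 2A_i^2+2B_i^2$ where the paper instead expands the difference of the two sums of squares via $a^2-b^2=(a+b)(a-b)$, and for the first display you pull out $\max_{1\le i\le n} d\Vert\z_i\Vert_2^{-2}=O_p(1)$ by a union bound where the paper keeps averaged moments of $d/(\z_i^{\T}\bms\z_i)$ throughout --- both yield the stated rates.
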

\begin{proof}
    By the fact that $n^{-1}\sum_{i=1}^nr_i^{-4}=\zeta_4\{1+O_p(n^{-1/2})\}$ and \eqref{A.6}, we have
    \begin{align*}
        &\left|\frac{1}{n}\sum_{i=1}^n\left|\frac{\sqrt{d}\xi_i}{r_{T,i}^2\Vert\z_i\Vert_2}\right|^2-\frac{1}{n}\sum_{i=1}^n\left|\frac{\sqrt{d}\xi_i}{\hat{r}_{T,i}^2\Vert\z_i\Vert_2}\right|^2\right|\\
        =&\left|\frac{1}{n}\sum_{i=1}^nr_i^{-2}\left(\frac{\sqrt{d}\xi_i}{r_i\Vert\z_i\Vert_2}\right)^2r_i^4(r_{T,i}^{-4}-\hat{r}_{T,i}^{-4})\right|\\
        \le&\left(\frac{1}{n}\sum_{i=1}^nr_i^{-4}\right)^{1/2}\left(\frac{1}{n}\sum_{i=1}^n\left|\frac{\sqrt{d}\xi_i}{r_i\Vert\z_i\Vert_2}\right|^4\cdot r_i^8|r_{T,i}^{-4}-\hat{r}_{T,i}^{-4}|^2\right)^{1/2}\\
        \le&\left(\frac{1}{n}\sum_{i=1}^nr_i^{-4}\right)^{1/2}\left(\frac{1}{n}\sum_{i=1}^n\left|\frac{\sqrt{d}\xi_i}{r_i\Vert\z_i\Vert_2}\right|^8\right)^{1/4}\left(\frac{1}{n}\sum_{i=1}^n|r_i^{4}r_{T,i}^{-4}-r_i^{4}\hat{r}_{T,i}^{-4}|^4\right)^{1/4}\\
        =&O_p\left\{\zeta_4^{1/2}\left(\frac{1}{n}\sum_{i=1}^n|r_i^{4}r_{T,i}^{-4}-r_i^{4}\hat{r}_{T,i}^{-4}|^4\right)^{1/4}\right\}.
    \end{align*}
    Using the similar argument as those used in the proof of \eqref{A.9}, we have
    \begin{align*}
        \frac{1}{n}\sum_{i=1}^n|r_i^{4}r_{T,i}^{-4}-r_i^{4}\hat{r}_{T,i}^{-4}|^4=o_p(1).
    \end{align*}
    Hence,
    \begin{align*}
        \left|\frac{1}{n}\sum_{i=1}^n\left|\frac{\sqrt{d}\xi_i}{r_{i,T}^2\Vert\z_i\Vert_2}\right|^2-\frac{1}{n}\sum_{i=1}^n\left|\frac{\sqrt{d}\xi_i}{\hat{r}_{i,T}^2\Vert\z_i\Vert_2}\right|^2\right|=o_p(\zeta_4^{1/2}).
    \end{align*}
    Furthermore, by the fourth conclusion in Lemma \ref{lemma 7} and \eqref{A.6}, we have
    \begin{align*}
        \frac{1}{n}\sum_{i=1}^n\left|\frac{\sqrt{d}\xi_i}{r_{i,T}^2\Vert\z_i\Vert_2}\right|^2=&\frac{1}{n}\sum_{i=1}^nr_i^{-2}|r_i^2r_{i,T}^{-2}|^2\left|\frac{\sqrt{d}\xi_i}{r_i\Vert\z_i\Vert_2}\right|^2\\
        \le&\left(\frac{1}{n}\sum_{i=1}^nr_i^{-4}\right)^{1/2}\left(\frac{1}{n}\sum_{i=1}^n|r_i^2r_{i,T}^{-2}|^8\right)^{1/4}\left(\frac{1}{n}\sum_{i=1}^n\left|\frac{\sqrt{d}\xi_i}{r_i\Vert\z_i\Vert_2}\right|^8\right)^{1/4}\\
        =&O_p(\zeta_4^{1/2}).
    \end{align*}
    Accordingly,
    \begin{align*}
        \frac{1}{n}\sum_{i=1}^n\left|\frac{\sqrt{d}\xi_i}{\hat{r}_{i,T}^2\Vert\z_i\Vert_2}\right|^2=O_p(\zeta_4^{1/2}).
    \end{align*}

    Next, we turn to show the second result. Note that
    \begin{align*}
        \frac{1}{n}\sum_{i=1}^n\left|\frac{d\xi_i^2}{r_{i,T}^2\Vert\z_i\Vert_2^2}-\frac{d}{\tr(\bms)}\right|^2\le&\left\{\frac{1}{n}\sum_{i=1}^n\left|\frac{d\xi_i^2}{r_{i,T}^2\Vert\z_i\Vert_2^2}-\frac{d}{\tr(\bms)}\right|\right\}^2\\
        =&\left\{\frac{1}{n}\sum_{i=1}^n\left|\frac{r_{i,T}^2\Vert\z_i\Vert_2^2}{d\xi_i^2}-\frac{\tr(\bms)}{d}\right|\cdot\left|\frac{d\xi_i^2}{r_{i,T}^2\Vert\z_i\Vert_2^2}\frac{d}{\tr(\bms)}\right|\right\}^2\\
        \le&\left\{\frac{1}{n}\sum_{i=1}^n\left|\frac{r_{i,T}^2\Vert\z_i\Vert_2^2}{d\xi_i^2}-\frac{\tr(\bms)}{d}\right|^2\right\}\cdot\left\{\frac{1}{n}\sum_{i=1}^n\left|\frac{d\xi_i^2}{r_{i,T}^2\Vert\z_i\Vert_2^2}\frac{d}{\tr(\bms)}\right|^2\right\}.
    \end{align*}
    By \eqref{A.5} and the fact that 
    \begin{align}
        \E\left\{\left(\frac{\Vert\z_i\Vert_2^2}{d}-1\right)^{2k}\right\}=\frac{1}{d^{2k}}\sum_{j=0}^{2k}\binom{2k}{j}(-d)^{2k-j}2^j\frac{\Gamma(d/2+j)}{\Gamma(d/2)},\label{A.10}
    \end{align}
    we have
    \begin{align*}
        \frac{1}{n}\sum_{i=1}^n\left|\frac{r_{i,T}^2\Vert\z_i\Vert_2^2}{d\xi_i^2}-\frac{\tr(\bms)}{d}\right|^2=\left\{\frac{\tr(\bms)}{d}\right\}^2\frac{1}{n}\sum_{i=1}^n\left(\frac{\Vert\z_i\Vert_2^2}{d}-1\right)^2=O_p(d^{-1}).
    \end{align*}
    By the fourth conclusion in Lemma \ref{lemma 7} and \eqref{A.6}, we have
    \begin{align*}
        \frac{1}{n}\sum_{i=1}^n\left|\frac{d\xi_i^2}{r_{i,T}^2\Vert\z_i\Vert_2^2}\frac{d}{\tr(\bms)}\right|^2=&\left\{\frac{d}{\tr(\bms)}\right\}^2\frac{1}{n}\sum_{i=1}^n|r_i^2r_{i,T}^{-2}|^2\cdot\left|\frac{\sqrt{d}\xi_i}{r_i\Vert\z_i\Vert_2}\right|^4\\
        \le&\left\{\frac{d}{\tr(\bms)}\right\}^2\left(\frac{1}{n}\sum_{i=1}^n|r_i^2r_{i,T}^{-2}|^4\right)^{1/2}\left(\frac{1}{n}\sum_{i=1}^n\left|\frac{\sqrt{d}\xi_i}{r_i\Vert\z_i\Vert_2}\right|^8\right)^{1/2}\\
        =&O_p(1).
    \end{align*}
    Hence,
    \begin{align*}
        \frac{1}{n}\sum_{i=1}^n\left|\frac{d\xi_i^2}{r_{i,T}^2\Vert\z_i\Vert_2^2}-\frac{d}{\tr(\bms)}\right|^2=O_p(d^{-1}),
    \end{align*}
    which, combining with the third conclusion in Lemma \ref{lemma 7} and \eqref{A.6}, yields 
    \begin{align*}
        &\left|\frac{1}{n}\sum_{i=1}^n\left|\frac{d\xi_i^2}{r_{i,T}^2\Vert\z_i\Vert_2^2}-\frac{d}{\tr(\bms)}\right|^2-\frac{1}{n}\sum_{i=1}^n\left|\frac{d\xi_i^2}{\hat{r}_{i,T}^2\Vert\z_i\Vert_2^2}-\frac{d}{\tr(\bms)}\right|^2\right|\\
        =&\left|\frac{1}{n}\sum_{i=1}^n\left\{\frac{d\xi_i^2}{r_{i,T}^2\Vert\z_i\Vert_2^2}+\frac{d\xi_i^2}{\hat{r}_{i,T}^2\Vert\z_i\Vert_2^2}-\frac{2d}{\tr(\bms)}\right\}\cdot\left(\frac{d\xi_i^2}{r_{i,T}^2\Vert\z_i\Vert_2^2}-\frac{d\xi_i^2}{\hat{r}_{i,T}^2\Vert\z_i\Vert_2^2}\right)\right|\\
        =&\left|\frac{2}{n}\sum_{i=1}^n\left\{\frac{d\xi_i^2}{r_{i,T}^2\Vert\z_i\Vert_2^2}-\frac{d}{\tr(\bms)}\right\}\frac{d\xi_i^2}{r_{i,T}^2\Vert\z_i\Vert_2^2}\left(1-\frac{r_{i,T}^2}{\hat{r}_{i,T}^2}\right)-\frac{1}{n}\sum_{i=1}^n\left|\frac{d\xi_i^2}{r_{i,T}^2\Vert\z_i\Vert_2^2}\right|^2\left(1-\frac{r_{i,T}^2}{\hat{r}_{i,T}^2}\right)^2\right|\\
        \le&2\left\{\frac{1}{n}\sum_{i=1}^n\left|\frac{d\xi_i^2}{r_{i,T}^2\Vert\z_i\Vert_2^2}-\frac{d}{\tr(\bms)}\right|^2\right\}^{1/2}\left(\frac{1}{n}\sum_{i=1}^n\left|\frac{d\xi_i^2}{r_{i,T}^2\Vert\z_i\Vert_2^2}\right|^4\right)^{1/4}\left(\frac{1}{n}\sum_{i=1}^n\left|1-\frac{r_{i,T}^2}{\hat{r}_{i,T}^2}\right|^4\right)^{1/4}\\
        &+\left(\frac{1}{n}\sum_{i=1}^n\left|\frac{d\xi_i^2}{r_{i,T}^2\Vert\z_i\Vert_2^2}\right|^4\right)^{1/2}\left(\frac{1}{n}\sum_{i=1}^n\left|1-\frac{r_{i,T}^2}{\hat{r}_{i,T}^2}\right|^4\right)^{1/2}\\
        =&O_p(d^{-1/2}(a_n+n^{-1/2})+(a_n+n^{-1/2})^2)=O_p\{(a_n+n^{-1/2}+d^{-1/2})^2\}.
    \end{align*}
    Therefore,
    \begin{align*}
        \frac{1}{n}\sum_{i=1}^n\left|\frac{d\xi_i^2}{\hat{r}_{i,T}^2\Vert\z_i\Vert_2^2}-\frac{d}{\tr(\bms)}\right|^2=O_p\{(a_n+n^{-1/2}+d^{-1/2})^2\}.
    \end{align*}
\end{proof}

\subsection{Proof of Theorem 1}
\setcounter{equation}{0}
\renewcommand{\theequation}{B.\arabic{equation}}
\begin{proof}
    \textbf{\emph{Step \uppercase\expandafter{\romannumeral1}. Show that $\Vert\hat{\bms}_0-\bms_0\Vert_{\max}=O_p\left(\sqrt{\frac{\log d}{n}}+\sqrt{\frac{\log n}{n}}\right)$.}} \\
    Recalling the definitions of $\hat{r}_i$, $\z_i$ and $\tilde{\X}_i$ in \eqref{A.1} and \eqref{A.4}, let 
    \begin{align*}
        \hat{\X}_i:=\sqrt{d}U(\X_i-\hat{\bmu})=\frac{\sqrt{d}}{\hat{r}_i}\left(\bmu-\hat{\bmu}+\frac{\xi_i}{\Vert\z_i\Vert_2}\tilde{\X}_i\right),
    \end{align*}
    then $\hat{\bms}_0=n^{-1}\sum_{i=1}^n\hat{\X}_i\hat{\X}_i^{\T}$ and 
    \begin{align*}
        \hat{\X}_i\hat{\X}_i^{\T}=&\frac{d}{\hat{r}_i^2}(\bmu-\hat{\bmu})(\bmu-\hat{\bmu})^{\T}\\
        &+\frac{d\xi_i}{\hat{r}_i^2\Vert\z_i\Vert_2}(\bmu-\hat{\bmu})\tilde{\X}_i^{\T}+\frac{d\xi_i}{\hat{r}_i^2\Vert\z_i\Vert_2}\tilde{\X}_i(\bmu-\hat{\bmu})^{\T}\\
        &+\left(\frac{d\xi_i^2}{\hat{r}_i^2\Vert\z_i\Vert_2^2}-\frac{d}{\tr(\bms)}\right)\tilde{\X}_i\tilde{\X}_i^{\T}+\frac{d}{\tr(\bms)}\tilde{\X}_i\tilde{\X}_i^{\T}.
    \end{align*}
    Thus, for $1\le j,k\le d$,
    \begin{align*}
        \hat{\Sigma}_{0,jk}-\tilde{\Sigma}_{0,jk}=&\frac{1}{n}\sum_{i=1}^n\frac{d}{\hat{r}_i^2}(\mu_j-\hat{\mu}_j)(\mu_k-\hat{\mu}_k)\\
        &+\frac{1}{n}\sum_{i=1}^n\frac{d\xi_i}{\hat{r}_i^2\Vert\z_i\Vert_2}(\mu_j-\hat{\mu}_j)\tilde{X}_{ik}+\frac{1}{n}\sum_{i=1}^n\frac{d\xi_i}{\hat{r}_i^2\Vert\z_i\Vert_2}\tilde{X}_{ij}(\mu_k-\hat{\mu}_k)\\
        &+\frac{1}{n}\sum_{i=1}^n\left(\frac{d\xi_i^2}{\hat{r}_i^2\Vert\z_i\Vert_2^2}-\frac{d}{\tr(\bms)}\right)\tilde{X}_{ij}\tilde{X}_{ik}\\
        =&:\uppercase\expandafter{\romannumeral1}_{jk}+\uppercase\expandafter{\romannumeral2}_{jk}+\uppercase\expandafter{\romannumeral3}_{jk}+\uppercase\expandafter{\romannumeral4}_{jk}.
    \end{align*}
    By Lemma \ref{lemma 2} and \ref{lemma 3}, we have
    \begin{align*}
        \max_{1\le j,k\le d}|\uppercase\expandafter{\romannumeral1}_{jk}|\le d\Vert\hat{\bmu}-\bmu\Vert_{\max}^2\cdot\frac{1}{n}\sum_{i=1}^n\hat{r}_i^{-2}=O_p\left(\zeta_2\zeta_1^{-2}\frac{\log d}{n}\right)=O_p\left(\frac{\log d}{n}\right).
    \end{align*}
    By the first result in Lemma \ref{lemma 4}, we have 
    \begin{align}
        \max_{1\le j\le d}\frac{1}{n}\sum_{i=1}^n\tilde{X}_{ij}^2=\bms_{jj}+O_p\left(\sqrt{\frac{\log d}{n}}\right)=O_p(1),\label{B.1}
    \end{align}
    which, combining with the first result in Lemma \ref{lemma 5} and the Cauchy-Schwarz inequality, yields
    \begin{align*}
        &\max_{1\le j,k\le d}\max(|\uppercase\expandafter{\romannumeral2}_{jk}|,|\uppercase\expandafter{\romannumeral3}_{jk}|)\\
        \le&\sqrt{d}\Vert\hat{\bmu}-\bmu\Vert_{\max}\left(\max_{1\le j\le d}\frac{1}{n}\sum_{i=1}^n\tilde{X}_{ij}^2\right)^{1/2}\left(\frac{1}{n}\sum_{i=1}^n\hat{r}_i^{-4}\right)^{1/4}\left(\frac{1}{n}\sum_{i=1}^n\left(\frac{\sqrt{d}\xi_i}{\hat{r}_i\Vert\z_i\Vert_2}\right)^4\right)^{1/4}\\
        =&O_p\left(\zeta_1^{-1}\zeta_4^{1/4}\sqrt{\frac{\log d}{n}}\right)=O_p\left(\sqrt{\frac{\log d}{n}}\right).
    \end{align*}
    By the Bernstein inequality, we have 
    \begin{align}
        \max_{1\le j\le d}\frac{1}{n}\sum_{i=1}^n\tilde{X}_{ij}^4=O_p(1),\label{B.2}
    \end{align}
    which, combining with the second result in Lemma \ref{lemma 5} and the Cauchy-Schwarz inequality, yields
    \begin{align*}
        \max_{1\le j,k\le d}|\uppercase\expandafter{\romannumeral4}_{jk}|\le&\max_{1\le j,k\le d}\left(\frac{1}{n}\sum_{i=1}^n\tilde{X}_{ij}^2\tilde{X}_{ik}^2\right)^{1/2}\left\{\frac{1}{n}\sum_{i=1}^n\left|\frac{d\xi_i^2}{\hat{r}_i^2\Vert\z_i\Vert_2^2}-\frac{d}{\tr(\bms)}\right|^2\right\}^{1/2}\\
        \le&\max_{1\le j\le d}\left(\frac{1}{n}\sum_{i=1}^n\tilde{X}_{ij}^4\right)^{1/2}\left\{\frac{1}{n}\sum_{i=1}^n\left|\frac{d\xi_i^2}{\hat{r}_i^2\Vert\z_i\Vert_2^2}-\frac{d}{\tr(\bms)}\right|^2\right\}^{1/2}\\
        =&O_p\left(\sqrt{\frac{\log d}{n}}+\sqrt{\frac{\log n}{n}}\right).
    \end{align*}
    In summary, 
    \begin{align*}
        \Vert\hat{\bms}_0-\tilde{\bms}_0\Vert_{\max}=O_p\left(\sqrt{\frac{\log d}{n}}+\sqrt{\frac{\log n}{n}}\right),
    \end{align*}
    which, combining with the first result in Lemma \ref{lemma 4}, leads to
    \begin{align*}
        \Vert\hat{\bms}_0-\bms_0\Vert_{\max}=O_p\left(\sqrt{\frac{\log d}{n}}+\sqrt{\frac{\log n}{n}}\right).
    \end{align*}
    
    \noindent\textbf{\emph{Step \uppercase\expandafter{\romannumeral2}. Show that $\Vert \hat{\L}_m\L_m^{-1}-\I_m \Vert_{\max}=O_p\left(\sqrt{\frac{\log d}{n}}+\sqrt{\frac{\log n}{n}}\right)$.}} \\
    By the Wely's theorem and the fact the $\Vert \M\Vert_2\le d\Vert\M\Vert_{\max}$ for $\M\in\mR^{d\times d}$, we have
    \begin{align*}
        \Vert \hat{\L}_m\L_m^{-1}-\I_m \Vert_{\max}=\max_{1\le j\le m}\left|\frac{\lambda_j(\hat{\bms}_0)}{\lambda_j(\bms_0)}-1\right|\le\frac{\Vert\hat{\bms}_0-\bms_0\Vert_2}{\min_{1\le j\le m}\lambda_j(\bms_0)}=O_p\left(\sqrt{\frac{\log d}{n}}+\sqrt{\frac{\log n}{n}}\right).
    \end{align*}
    
    \noindent\textbf{\emph{Step \uppercase\expandafter{\romannumeral3} Show that $\Vert\hat{\G}_m-\G_m\Vert_{\max}=O_p\left(\sqrt{\frac{\log d}{nd}}+\sqrt{\frac{\log n}{nd}}\right)$.}}\\ 
    By the $\sin\theta$ theorem, for $1\le j\le m$,
    \begin{align*}
        \Vert u_j(\hat{\bms}_0)-u_j(\bms_0)\Vert_2\le&\frac{\sqrt{2}\Vert\hat{\bms}_0-\bms_0\Vert_2}{\min\{|\lambda_{j-1}(\hat{\bms}_0)-\lambda_j(\bms_0)|,|\lambda_j(\bms_0)-\lambda_{j+1}(\hat{\bms}_0)|\}}\\
        =&O_p\left(\sqrt{\frac{\log d}{n}}+\sqrt{\frac{\log n}{n}}\right).
    \end{align*}
    Note that $u_j(\hat{\bms}_0)=\{\lambda_j(\hat{\bms}_0)\}^{-1}\hat{\bms}_0u_j(\hat{\bms}_0)$ and $u_j(\bms_0)=\{\lambda_j(\bms_0)\}^{-1}\bms_0u_j(\bms_0)$, using the fact that $\Vert\M\x\Vert_{\max}\le \sqrt{d}\Vert\M\Vert_{\max}\Vert\x\Vert_2$ for $\M\in\mR^{d\times d}$ and $\x\in\mR^d$, we have for $1\le j\le m$,
    \begin{align*}
        \Vert u_j(\hat{\bms}_0)-u_j(\bms_0)\Vert_{\max}\le&\left|\frac{\lambda_j(\hat{\bms}_0)}{\lambda_j(\bms_0)}-1\right|\frac{1}{\lambda_j(\hat{\bms}_0)}\Vert\hat{\bms}_0u_j(\hat{\bms}_0)\Vert_{\max}\\
        &+\frac{1}{\lambda_j(\bms_0)}\Vert(\hat{\bms}_0-\bms_0)u_j(\hat{\bms}_0)\Vert_{\max}\\
        &+\frac{1}{\lambda_j(\bms_0)}\Vert\bms_0\{u_j(\hat{\bms}_0)-u_j(\bms_0)\}\Vert_{\max}\\
        =&O_p\left(\sqrt{\frac{\log d}{nd}}+\sqrt{\frac{\log n}{nd}}\right).
    \end{align*}
    Hence,
    \begin{align*}
        \Vert\hat{\G}_m-\G_m\Vert_{\max}=\max_{1\le j\le m}\Vert u_j(\hat{\bms}_0)-u_j(\bms_0)\Vert_{\max}=O_p\left(\sqrt{\frac{\log d}{nd}}+\sqrt{\frac{\log n}{nd}}\right).
    \end{align*}
\end{proof}

\subsection{Proof of Theorem 2}
\begin{proof}
    Based on the conclusions of Theorem 1, Theorem 2 is a directly result of Theorem 2.1 in \cite{FLW2018generalPOET}. So we omit the detailed proof here.
\end{proof}

\subsection{Proof of Theorem 3}
\setcounter{equation}{0}
\renewcommand{\theequation}{C.\arabic{equation}}
\begin{proof}
    For CLIME method, based on the conclusions of Theorem 1, the desired results directly follow from Theorem 2.2 in \cite{FLW2018generalPOET}.

    Now we turn to analyze GLASSO method. According to the proof of Theorem 2.2 in \cite{FLW2018generalPOET}, if we have $\Vert\hat{\V}_{0u}-\V_{0u}\Vert_{\max}=O_p(w_n)$, then the remaining desired results follow. To show that $\Vert\hat{\V}_{0u}-\V_{0u}\Vert_{\max}=O_p(w_n)$, from a high-level view, we want to first show that $\Vert\tilde{\V}_{0u}-\V_{0u}\Vert_{\max}=O_p(w_n)$, where 
    \begin{align}
        \tilde{V}_{0u}:=\mathop{\arg\min}\limits_{\V=\V^{\T},\V\succ 0,\V_{\mathcal{S}^c}=\bmzero}\{\tr(\hat{\bms}_{0u}\V)-\log\det(\V)+\tau\Vert\V\Vert_{1,1}\}.\label{C.1}
    \end{align}
    We then show $\tilde{\V}_{0u}=\V_{0u}$. Before introducing the main proof, we first define
    \begin{align}
        \bmD:=\tilde{\V}_{0u}-\V_{0u},~~\W:=\hat{\bms}_{0u}-\bms_{0u}~~\text{and}~~R(\bmD):=(\V_{0u}+\bmD)^{-1}-\V_{0u}^{-1}+\V_{0u}^{-1}\bmD\V_{0u}^{-1}.\label{C.2}
    \end{align}
    \noindent\textbf{\emph{Step \uppercase\expandafter{\romannumeral1}. Show that $\Vert\tilde{\V}_{0u}-\V_{0u}\Vert_{\max}=O_p(w_n)$.}}\\ 
    Consider
    \begin{align*}
        \nabla_{\V}^2\{\tr(\hat{\bms}_{0u}\V)-\log\det(\V)\}=\V^{-1}\otimes\V^{-1}\succ 0,
    \end{align*}
    then 
    \begin{align*}
        \nabla_{\V_{\mathcal{S}}}^2\{\tr(\hat{\bms}_{0u}\V)-\log\det(\V)\}=(\V^{-1}\otimes\V^{-1})_{\mathcal{S}\mathcal{S}}\succ 0.
    \end{align*}
    By Lagrangian duality, we have 
    \begin{align*}
        \tilde{V}_{0u}=\mathop{\arg\min}\limits_{\V=\V^{\T},\V\succ 0,\V_{\mathcal{S}^c}=\bmzero,\Vert\V\Vert_{1,1}\le C(\tau)}\{\tr(\hat{\bms}_{0u}\V)-\log\det(\V)\}.
    \end{align*}
    Hence the strict convexity of the objective function in $\V_{\mathcal{S}}$ implies the uniqueness of $\tilde{\V}_{0u}$. Besides, we know that
    \begin{align}
        \V_{\mathcal{S}}=\tilde{\V}_{0u,\mathcal{S}}\Longleftrightarrow G(\V_{\mathcal{S}}):= \hat{\bms}_{0u,\mathcal{S}}-(\V^{-1})_{\mathcal{S}}+\tau\Z_{\mathcal{S}}=\bmzero~~\text{with}~~\Z_{\mathcal{S}}\in\partial\Vert\V_{\mathcal{S}}\Vert_{1,1}.\label{C.3}
    \end{align}
    For future reference, we note that $\partial\Vert\V\Vert_{1,1}$, the sub-differential of the norm $\Vert\cdot\Vert_{1,1}$ evaluated at some $\V\in\mR^{d\times d}$, consists the set of all symmetric matrices $\M\in\mR^{d\times d}$ such that
    \begin{align}
        M_{ij}=
        \begin{cases}
            \text{sign}(V_{ij}), &\text{if}~~V_{ij}\ne 0,\\
            \in[-1,1], &\text{if}~~V_{ij}=0.\label{C.4}
        \end{cases}
    \end{align}
    Define the map $F:\mR^{|\mathcal{S}|}\to\mR^{|\mathcal{S}|}$ such that 
    \begin{align*}
        F\{\vec(\bmD_{\mathcal{S}}')\}:=-(\bms_{0u,\mathcal{S}\mathcal{S}}^*)^{-1}\vec\{G(\V_{0u,\mathcal{S}}+\bmD_{\mathcal{S}}')\}+\vec(\bmD_{\mathcal{S}}').
    \end{align*}
    By the definition, we know that
    \begin{align*}
        F\{\vec(\bmD_{\mathcal{S}}')\}=\vec(\bmD_{\mathcal{S}}')\Longleftrightarrow \vec\{G(\V_{0u,\mathcal{S}}+\bmD_{\mathcal{S}}')\}=\bmzero\Longleftrightarrow \bmD_{\mathcal{S}}'=\bmD_{\mathcal{S}}.
    \end{align*}
    Define $r:=2\max(\Vert(\bms_{0u,\mathcal{S}\mathcal{S}}^*)^{-1}\Vert_{\infty},\Vert\bms_{0u}\Vert_{\infty})(\Vert\W\Vert_{\max}+\tau)$ and $B(r):=\{\vec(\bmD_{\mathcal{S}}'):\Vert\bmD_{\mathcal{S}}'\Vert_{\max}\le r\}$. We claim that 
    \begin{align}
        F\{B(r)\}\subseteq B(r).\label{C.5}
    \end{align}
    Since $B(r)$ is a nonempty compact convex set, then by Brouwer fixed-point theorem, we can know that $\vec(\bmD_{\mathcal{S}})\in B(r)$, and hence
    \begin{align*}
        \Vert\bmD\Vert_{\max}=\Vert\bmD_{\mathcal{S}}\Vert_{\max}\le r.
    \end{align*}
    By Lemma \ref{lemma 6}, we have $\Vert\W\Vert_{\max}=O_p(w_n)$. Then by setting $\tau\asymp w_n$, we obtain
    \begin{align*}
        \Vert\tilde{\V}_{0u}-\V_{0u}\Vert_{\max}=O_p(w_n).
    \end{align*}

    Now, let us turn to the proof of the claim \eqref{C.5}. For each $\vec(\bmD_{\mathcal{S}}')\in B(r)$, we have
    \begin{align*}
        F\{\vec(\bmD_{\mathcal{S}}')\}=&-(\bms_{0u,\mathcal{S}\mathcal{S}}^*)^{-1}\vec\{G(\V_{0u,\mathcal{S}}+\bmD_{\mathcal{S}}')\}+\vec(\bmD_{\mathcal{S}}')\\
        =&-(\bms_{0u,\mathcal{S}\mathcal{S}}^*)^{-1}\vec\{\W_{\mathcal{S}}+(\V_{0u}^{-1})_{\mathcal{S}}-\{(\V_{0u}+\bmD')^{-1}\}_{\mathcal{S}}+\tau\Z_{\mathcal{S}}-\bms_{0u,\mathcal{S}\mathcal{S}}^*\vec(\bmD_{\mathcal{S}}')\}\\
        =&-(\bms_{0u,\mathcal{S}\mathcal{S}}^*)^{-1}\vec\{\W_{\mathcal{S}}+\tau\Z_{\mathcal{S}}-R(\bmD')_{\mathcal{S}}\},
    \end{align*}
    where the last step follows from the fact that $\vec(\B\A\B)=(\B\otimes\B)\vec(\A)$ for $\A\in\mR^{d\times d}$ and symmetric $\B\in\mR^{d\times d}$. Hence,
    \begin{align*}
        \Vert F\{\vec(\bmD_{\mathcal{S}}')\}\Vert_{\max}\le\Vert (\bms_{0u,\mathcal{S}\mathcal{S}}^*)^{-1}\Vert_{\infty}\{\Vert\W\Vert_{\max}+\tau+\Vert R(\bmD')\Vert_{\max}\}.
    \end{align*}
    Since
    \begin{align*}
        R(\bmD')=&(\V_{0u}+\bmD')^{-1}-\V_{0u}^{-1}+\V_{0u}^{-1}\bmD'\V_{0u}^{-1}\\
        =&(\I_d+\V_{0u}^{-1}\bmD')^{-1}\V_{0u}^{-1}-\V_{0u}^{-1}+\V_{0u}^{-1}\bmD'\V_{0u}^{-1}\\
        =&\sum_{k=2}^\infty(-1)^k(\V_{0u}^{-1}\bmD')^k\V_{0u}^{-1}\\
        =&(\V_{0u}^{-1}\bmD')^2\M\V_{0u}^{-1},
    \end{align*}
    where $\M:=\sum_{k=0}^\infty(-1)^k(\V_{0u}^{-1}\bmD')^k$, then 
    \begin{align*}
        \Vert R(\bmD')\Vert_{\max}=&\max_{1\le i,j,\le d}|\e_i^{\T}(\V_{0u}^{-1}\bmD')^2\M\V_{0u}^{-1}\e_j|\\
        \le&\max_{1\le i\le d}\Vert\e_i^{\T}\V_{0u}^{-1}\bmD'\Vert_{\max}\cdot \max_{1\le j\le d}\Vert\V_{0u}^{-1}\bmD'\M\V_{0u}^{-1}\e_j\Vert_1\\
        \le&\max_{1\le i\le d}\Vert\e_i^{\T}\V_{0u}^{-1}\Vert_1\cdot\Vert\bmD'\Vert_{\max}\cdot\Vert\V_{0u}^{-1}\M^{\T}\bmD'\V_{0u}^{-1}\Vert_{\infty}\\
        \le&\kappa\Vert\V_{0u}^{-1}\Vert_{\infty}^3\cdot\Vert\bmD'\Vert_{\max}^2\cdot\Vert\M^{\T}\Vert_{\infty}\\
        \le&\kappa\Vert\V_{0u}^{-1}\Vert_{\infty}^3\cdot\Vert\bmD'\Vert_{\max}^2\cdot(1-\kappa\Vert\bmD'\Vert_{\max}\cdot\Vert\V_{0u}^{-1}\Vert_{\infty})^{-1}
    \end{align*}
    where the last second step follows from the fact that $\Vert\A\B\Vert_{\infty}\le\Vert\A\Vert_{\infty}\Vert\B\Vert_{\infty}$ for $\A,\B\in\mR^{d\times d}$, and the last step follows from the fact that $\Vert\M^{\T}\Vert_{\infty}\le \sum_{k=0}^\infty(-1)^k\Vert\bmD'\V_{0u}^{-1}\Vert_{\infty}^k\le (1-\Vert\bmD'\Vert_{\infty}\cdot\Vert\V_{0u}^{-1}\Vert_{\infty})^{-1}\le (1-\kappa\Vert\bmD'\Vert_{\max}\cdot\Vert\V_{0u}^{-1}\Vert_{\infty})^{-1}$. Thus, we have
    \begin{align}
        \Vert R(\bmD')\Vert_{\max}\le&\kappa r^2\Vert\V_{0u}^{-1}\Vert_{\infty}^3(1-\kappa r\Vert\V_{0u}^{-1}\Vert_{\infty})^{-1},\label{C.6}
    \end{align}
    which, using the fact that $r\le(3\kappa)^{-1}\min(\Vert(\bms_{0u,\mathcal{S}\mathcal{S}}^*)^{-1}\Vert_{\infty}^{-1}, \Vert\V_{0u}^{-1}\Vert_{\infty}^{-1})$, implies 
    \begin{align*}
        \Vert F\{\vec(\bmD_{\mathcal{S}}')\}\Vert_{\max}\le\frac{r}{2}+\Vert (\bms_{0u,\mathcal{S}\mathcal{S}}^*)^{-1}\Vert_{\infty}\kappa r^2\Vert\V_{0u}^{-1}\Vert_{\infty}^3(1-\kappa r\Vert\V_{0u}^{-1}\Vert_{\infty})^{-1}\le r.
    \end{align*}
    That is $F\{\vec(\bmD_{\mathcal{S}}')\}\in B(r)$. Above all, we show the rate of $\Vert\tilde{\V}_{0u}-\V_{0u}\Vert_{\max}$.

    \noindent\textbf{\emph{Step \uppercase\expandafter{\romannumeral2} Show that $\tilde{\V}_{0u}=\hat{\V}_{0u}$.}}\\
    Similar as the above argument for $\tilde{\V}_{0u}$, we know the uniqueness of $\hat{\V}_{0u}$ and
    \begin{align}
        \V=\hat{\V}_{0u}\Longleftrightarrow  \hat{\bms}_{0u}-\V^{-1}+\tau\Z=\bmzero~~\text{with}~~\Z\in\partial\Vert\V\Vert_{1,1}.\label{C.7}
    \end{align}
    By the definition of $\tilde{\V}_{0u}$ and \eqref{C.3}, we have
    \begin{align*}
        \hat{\bms}_{0u,\mathcal{S}}-(\tilde{\V}_{0u}^{-1})_{\mathcal{S}}+\tau\tilde{\Z}_{\mathcal{S}}=\bmzero,
    \end{align*}
    for some $\tilde{\Z}_{\mathcal{S}}\in\partial\Vert\tilde{\V}_{0u,\mathcal{S}}\Vert_{1,1}$. Setting $\tilde{\Z}_{\mathcal{S}^c}=\tau^{-1}\{-\hat{\bms}_{0u,\mathcal{S}^c}+(\tilde{\V}_{0u}^{-1})_{\mathcal{S}^c}\}$, noting that $\tilde{\V}_{0u,\mathcal{S}^c}=\bmzero$, we have
    \begin{align*}
        \hat{\bms}_{0u}-\tilde{\V}_{0u}^{-1}+\tau\tilde{\Z}=\bmzero,
    \end{align*}
    which, by \eqref{C.7}, yields that to show $\tilde{\V}_{0u}=\hat{\V}_{0u}$, it suffices to show $\tilde{\Z}\in\partial\Vert\tilde{\V}_{0u}\Vert_{1,1}$, i.e., $\Vert\tilde{\Z}_{\mathcal{S}^c}\Vert_{\max}\le 1$.

    Recalling the definitions of $\bmD,\W,R(\bmD)$ in \eqref{C.2}, we have
    \begin{align*}
        \V_{0u}^{-1}\bmD\V_{0u}^{-1}+\W-R(\bmD)+\tau\tilde{\Z}=\bmzero.
    \end{align*}
    Vectorize the equation, we have
    \begin{align*}
        \bms_{0u}^*\vec(\bmD)+\vec(\W)-\vec\{R(\bmD)\}+\tau\vec(\tilde{\Z})=\bmzero,
    \end{align*}
    which, due to $\bmD_{\mathcal{S}^c}=\bmzero$, leads to
    \begin{align*}
        &\bms_{0u,\mathcal{S}\mathcal{S}}^*\vec(\bmD)_{\mathcal{S}}+\vec(\W)_{\mathcal{S}}-\vec\{R(\bmD)\}_{\mathcal{S}}+\tau\vec(\tilde{\Z})_{\mathcal{S}}=\bmzero,~~\text{and}\\
        &\bms_{0u,\mathcal{S}^c\mathcal{S}}^*\vec(\bmD)_{\mathcal{S}^c}+\vec(\W)_{\mathcal{S}^c}-\vec\{R(\bmD)\}_{\mathcal{S}^c}+\tau\vec(\tilde{\Z})_{\mathcal{S}^c}=\bmzero.
    \end{align*}
    Furthermore, we get
    \begin{align*}
        \vec(\bmD)_{\mathcal{S}}=&-(\bms_{0u,\mathcal{S}\mathcal{S}}^*)^{-1}\{\vec(\W)_{\mathcal{S}}-\vec\{R(\bmD)\}_{\mathcal{S}}+\tau\vec(\tilde{\Z})_{\mathcal{S}}\},~~\text{and}\\
        \vec(\tilde{\Z})_{\mathcal{S}^c}=&-\tau^{-1}\{\bms_{0u,\mathcal{S}^c\mathcal{S}}^*\vec(\bmD)_{\mathcal{S}^c}+\vec(\W)_{\mathcal{S}^c}-\vec\{R(\bmD)\}_{\mathcal{S}^c}\}\\
        =&\tau^{-1}\bms_{0u,\mathcal{S}^c\mathcal{S}}^*(\bms_{0u,\mathcal{S}\mathcal{S}}^*)^{-1}\{\vec(\W)_{\mathcal{S}}-\vec\{R(\bmD)\}_{\mathcal{S}}+\tau\vec(\tilde{\Z})_{\mathcal{S}}\}\\
        &-\tau^{-1}\{\vec(\W)_{\mathcal{S}^c}-\vec\{R(\bmD)\}_{\mathcal{S}^c}\}.
    \end{align*}
    Hence, since $r\le(3\kappa)^{-1}\Vert\V_{0u}^{-1}\Vert_{\infty}^{-4}(1+8/\alpha)^{-1}$, by \eqref{C.6}, we have
    \begin{align*}
        \Vert\vec(\tilde{\Z})_{\mathcal{S}^c}\Vert_{\max}\le&\tau^{-1}\Vert\bms_{0u,\mathcal{S}^c\mathcal{S}}^*(\bms_{0u,\mathcal{S}\mathcal{S}}^*)^{-1}\Vert_{\infty}\{\Vert\W\Vert_{\max}+\Vert R(\bmD)\Vert_{\max}+\tau\Vert\tilde{\Z}_{\mathcal{S}}\Vert_{\max}\}\\
        &+\tau^{-1}\{\Vert\W\Vert_{\max}+\Vert R(\bmD)\Vert_{\max}\}\\
        \le&\tau^{-1}(1-\alpha)\alpha\tau/4+1-\alpha\le 1-\alpha/2\le 1. 
    \end{align*}
    Therefore, we prove that $\hat{\V}_{0u}=\tilde{\V}_{0u}$. 

    In summary, we obtain $\Vert\hat{\V}_{0u}-\V_{0u}\Vert_{\max}=O_p(w_n)$. 
    
\end{proof}


\subsection{Proof of Theorem 4}
\begin{proof}
    By Theorem 1 and Wely's Theorem, we have
\begin{align*}
    \lambda_j(\hat{\bms}_0)=
    \begin{cases}
        c_jd,&~~\text{if}~~1\le j\le m,\\
        c_jdw_n,&~~\text{if}~~m+1\le j\le d,
    \end{cases}
\end{align*}
with probability tending to one, where $c_j$'s are some positive constants. It is easy to check for $j<m$ and $j>m$, $\lambda_j(\hat{\bms}_0)/\lambda_{j+1}(\hat{\bms}_0)\asymp 1$,  while for $j=m$, $\lambda_m(\hat{\bms}_0)/\lambda_{m+1}(\hat{\bms}_0)\asymp w_n^{-1}\to\infty$, then $\hat{m}_{ER}$ is consistent. 

For $\hat{m}_{GR}$, apply the inequality $c/(1+c)<\log(1+c)<c$ with $c>0$, then for $j<m$ or $j>m$,
\begin{align*}
    \frac{\ln\{1+\lambda_j(\hat{\bms}_0)/V_{j-1}\}}{\ln\{1+\lambda_{j+1}(\hat{\bms}_0)/V_j\}}<&\frac{\lambda_j(\hat{\bms}_0)}{V_{j-1}}\cdot\frac{1+\lambda_{j+1}(\hat{\bms}_0)/V_j}{\lambda_{j+1}(\hat{\bms}_0)/V_j}\\
    =&\frac{\lambda_j(\hat{\bms}_0)}{\lambda_{j+1}(\hat{\bms}_0)}\cdot\frac{V_j+\lambda_{j+1}(\hat{\bms}_0)}{V_j+\lambda_j(\hat{\bms}_0)}\\
    \le&\frac{\lambda_j(\hat{\bms}_0)}{\lambda_{j+1}(\hat{\bms}_0)}=O_p(1).
\end{align*}
On the other hand, for $j=m$,
\begin{align*}
    \frac{\ln\{1+\lambda_m(\hat{\bms}_0)/V_{m-1}\}}{\ln\{1+\lambda_{m+1}(\hat{\bms}_0)/V_m\}}
    >\frac{\lambda_m(\hat{\bms}_0)}{\lambda_{m+1}(\hat{\bms}_0)}\cdot\frac{V_m}{V_{m-1}+\lambda_m(\hat{\bms}_0)}\asymp \frac{1}{w_n}\frac{\min(n,d)dw_n}{\min(n,d)dw_n+d}\to\infty.
\end{align*}
Thus, $\hat{m}_{GR}$ is consistent.
\end{proof}

\subsection{Proof of Theorem 5}
\begin{proof}
    According to Theorem 2 and 3, we have
    \begin{align*}
        \Vert\hat{\V}_{S}-\V_0\Vert_2=O_p(a_n)~~\text{with}~~a_n=
        \begin{cases}
            w_n^{1-v}m_d,&~~\text{if}~~\hat{\V}_{S}=(\hat{\bms}_{0}^{\tau})^{-1}~~\text{or}~~(\hat{\bms}_{0,s}^{\tau})^{-1},\\
            w_n^{1-v}M_d,&~~\text{if}~~\hat{\V}_{S}=\hat{\V}_0~~\text{or}~~\hat{\V}_{0,s}.
        \end{cases}
    \end{align*}
    Recalling the definitions of $\tilde{\X}_i,\z_i$ given in \eqref{A.4}, $r_i$ and $\hat{r}_{T,i},r_{T,i}$ given in \eqref{A.8}, define
    \begin{align*}
        \hat{\X}_{T,i}:=\frac{\sqrt{d}(\X_i-\hat{\bmu})}{\hat{r}_{T,i}}=\frac{\sqrt{d}}{\hat{r}_{T,i}}\left(\bmu-\hat{\bmu}+\frac{\xi_i}{\Vert\z_i\Vert_2}\tilde{\X}_i\right).
    \end{align*}
    Then $\hat{\bms}_T=n^{-1}\sum_{i=1}^n\hat{\X}_{T,i}\hat{\X}_{T,i}^{\T}$ and 
    \begin{align*}
        \hat{\X}_{T,i}\hat{\X}_{T,i}^{\T}=&\frac{d}{\hat{r}_{T,i}^2}(\bmu-\hat{\bmu})(\bmu-\hat{\bmu})^{\T}\\
        &+\frac{d\xi_i}{\hat{r}_{T,i}^2\Vert\z_i\Vert_2}(\bmu-\hat{\bmu})\tilde{\X}_i^{\T}+\frac{d\xi_i}{\hat{r}_{T,i}^2\Vert\z_i\Vert_2}\tilde{\X}_i(\bmu-\hat{\bmu})^{\T}\\
        &+\left(\frac{d\xi_i^2}{\hat{r}_{T,i}^2\Vert\z_i\Vert_2^2}-\frac{d}{\tr(\bms)}\right)\tilde{\X}_i\tilde{\X}_i^{\T}+\frac{d}{\tr(\bms)}\tilde{\X}_i\tilde{\X}_i^{\T}.
    \end{align*}
    Thus, for $1\le j,k\le d$,
    \begin{align*}
        \hat{\Sigma}_{T,jk}-\tilde{\Sigma}_{0,jk}=&\frac{1}{n}\sum_{i=1}^n\frac{d}{\hat{r}_{T,i}^2}(\mu_j-\hat{\mu}_j)(\mu_k-\hat{\mu}_k)\\
        &+\frac{1}{n}\sum_{i=1}^n\frac{d\xi_i}{\hat{r}_{T,i}^2\Vert\z_i\Vert_2}(\mu_j-\hat{\mu}_j)\tilde{X}_{ik}+\frac{1}{n}\sum_{i=1}^n\frac{d\xi_i}{\hat{r}_{T,i}^2\Vert\z_i\Vert_2}\tilde{X}_{ij}(\mu_k-\hat{\mu}_k)\\
        &+\frac{1}{n}\sum_{i=1}^n\left(\frac{d\xi_i^2}{\hat{r}_{T,i}^2\Vert\z_i\Vert_2^2}-\frac{d}{\tr(\bms)}\right)\tilde{X}_{ij}\tilde{X}_{ik}\\
        =&:\uppercase\expandafter{\romannumeral1}_{T,jk}+\uppercase\expandafter{\romannumeral2}_{T,jk}+\uppercase\expandafter{\romannumeral3}_{T,jk}+\uppercase\expandafter{\romannumeral4}_{T,jk}.
    \end{align*}
    By Lemma \ref{lemma 2}, \eqref{A.2} and the last result in Lemma \ref{lemma 7}, we have
    \begin{align*}
        \max_{1\le j,k\le d}|\uppercase\expandafter{\romannumeral1}_{T,jk}|\le d\Vert\hat{\bmu}-\bmu\Vert_{\max}^2\cdot\zeta_1^2\left(\max_{1\le  i\le n}\nu_i^2\right)\cdot\frac{1}{n}\sum_{i=1}^nr_i^2\hat{r}_{T,i}^{-2}=O_p\left(\frac{\log d\cdot\log n}{n}\right).
    \end{align*}
    By Lemma \ref{lemma 2}, \eqref{B.1} and the first result in Lemma \ref{lemma 8}, we have
    \begin{align*}
        &\max_{1\le j,k\le d}\max(|\uppercase\expandafter{\romannumeral2}_{T,jk}|,|\uppercase\expandafter{\romannumeral3}_{T,jk}|)\\
        \le&\sqrt{d}\Vert\hat{\bmu}-\bmu\Vert_{\max}\cdot\left(\max_{1\le j\le d}\frac{1}{n}\sum_{i=1}^n\tilde{X}_{ij}^2\right)^{1/2}\cdot\left(\frac{1}{n}\sum_{i=1}^n\left|\frac{\sqrt{d}\xi_i}{\hat{r}_{T,i}^2\Vert\z_i\Vert_2}\right|^2\right)^{1/2}\\
        =&O_p\left(\zeta_1^{-1}\zeta_4^{1/4}\sqrt{\frac{\log d}{n}}\right)=O_p\left(\sqrt{\frac{\log d}{n}}\right).
    \end{align*}
    By \eqref{B.2} and the second result in Lemma \ref{lemma 8}, we have
    \begin{align*}
        \max_{1\le j,k\le d}|\uppercase\expandafter{\romannumeral4}_{T,jk}|\le&\max_{1\le j,k\le d}\left(\frac{1}{n}\sum_{i=1}^n\tilde{X}_{ij}^2\tilde{X}_{ik}^2\right)^{1/2}\left\{\frac{1}{n}\sum_{i=1}^n\left|\frac{d\xi_i^2}{\hat{r}_{T,i}^2\Vert\z_i\Vert_2^2}-\frac{d}{\tr(\bms)}\right|^2\right\}^{1/2}\\
        \le&\max_{1\le j\le d}\left(\frac{1}{n}\sum_{i=1}^n\tilde{X}_{ij}^4\right)^{1/2}\left\{\frac{1}{n}\sum_{i=1}^n\left|\frac{d\xi_i^2}{\hat{r}_{T,i}^2\Vert\z_i\Vert_2^2}-\frac{d}{\tr(\bms)}\right|^2\right\}^{1/2}\\
        =&O_p(a_n+n^{-1/2}+d^{-1/2}).
    \end{align*}
    In summary, 
    \begin{align*}
        \Vert\hat{\bms}_T-\tilde{\bms}_0\Vert_{\max}=O_p\left(a_n+\sqrt{\frac{\log d}{n}}\right),
    \end{align*}
    which, combining with the first result in Lemma \ref{lemma 4}, assuming that $a_n=O\left(\sqrt{\frac{\log d}{n}}+\sqrt{\frac{\log n}{n}}\right)$, leads to
    \begin{align*}
        \Vert\hat{\bms}_T-\bms_0\Vert_{\max}=O_p\left(a_n+\sqrt{\frac{\log d}{n}}+\sqrt{\frac{\log n}{n}}\right)=O_p\left(\sqrt{\frac{\log d}{n}}+\sqrt{\frac{\log n}{n}}\right).
    \end{align*}
    Accordingly, using the same techniques as those used in the proof of Theorem 1, 2 and 3, we get the remaining results.
\end{proof}

\subsection{Proof of Proposition 6}
\setcounter{equation}{0}
\renewcommand{\theequation}{D.\arabic{equation}}
\begin{proof}
    According to Theorem 5, we only need to show 
    \begin{align*}
        \left|d^{-1}\widehat{\E(r_{T,i}^2)}-d^{-1}\E(r_{T,i}^2)\right|=O_p(w_n+n^{-\frac{\varepsilon}{2+\varepsilon}}).
    \end{align*}
    Note that the Huber function $H_h(\cdot)$ satisfies 
    \begin{align}
        &H_h(x)-H_h(y)\ge (x-y)\mathbbm{1}(x<h,y>-h),\forall x\ge y,\label{E.1}\\
        &|H_h(a)-H_h(b)|\le|a-b|,\forall a,b.\label{E.2}
    \end{align}
    By \eqref{E.1} and the definition of $d^{-1}\widehat{\E(r_{T,i}^2)}$, we have
    \begin{align*}
        \left|d^{-1}\widehat{\E(r_{T,i}^2)}-d^{-1}\E(r_{T,i}^2)\right|\le \frac{1}{\tilde{p}_h}\left|\frac{1}{n}\sum_{i=1}^nH_h\left(d^{-1}\hat{r}_{T,i}^2-d^{-1}\E(r_{T,i}^2)\right)\right|,
    \end{align*}
    where $\tilde{p}_h=n^{-1}\sum_{i=1}^n\gamma_i$ for $\gamma_i$ defined as follows
    \begin{align*}
        \gamma_i=
        \begin{cases}
            \mathbbm{1}\left(d^{-1}\hat{r}_{T,i}^2-d^{-1}\widehat{\E(r_{T,i}^2)}<h,d^{-1}\hat{r}_{T,i}^2-d^{-1}\E(r_{T,i}^2)>-h\right), &~~\text{if}~~\widehat{\E(r_{T,i}^2)}\le \E(r_{T,i}^2),\\
            \mathbbm{1}\left(d^{-1}\hat{r}_{T,i}^2-d^{-1}\widehat{\E(r_{T,i}^2)}>-h,d^{-1}\hat{r}_{T,i}^2-d^{-1}\E(r_{T,i}^2)<h\right), &~~\text{if}~~\widehat{\E(r_{T,i}^2)}\ge \E(r_{T,i}^2).
        \end{cases}
    \end{align*}   

    \noindent\emph{\textbf{Step \uppercase\expandafter{\romannumeral1}. Analyze $\left|\frac{1}{n}\sum_{i=1}^nH_h\left(d^{-1}\hat{r}_{T,i}^2-d^{-1}\E(r_{T,i}^2)\right)\right|$.}}\\
    Write $d^{-1}\hat{r}_{T,i}^2$ as 
    \begin{align*}
        d^{-1}\hat{r}_{T,i}^2=d^{-1}(\hat{r}_{T,i}^2-r_{T,i}^2)+d^{-1}r_{T,i}^2(1-d^{-1}\Vert\z_i\Vert_2^2)+d^{-2}r_{T,i}^2\Vert\z_i\Vert_2^2=:\uppercase\expandafter{\romannumeral1}_i+\uppercase\expandafter{\romannumeral2}_i+\uppercase\expandafter{\romannumeral3}_i.
    \end{align*}
    Then, by \eqref{E.2}, we have
    \begin{align*}
        \left|\frac{1}{n}\sum_{i=1}^nH_h\left(d^{-1}\hat{r}_{T,i}^2-d^{-1}\E(r_{T,i}^2)\right)\right|\le\left|\frac{1}{n}\sum_{i=1}^nH_h\left(\uppercase\expandafter{\romannumeral3}_i-d^{-1}\E(r_{T,i}^2)\right)\right|+\frac{1}{n}\sum_{i=1}^n|\uppercase\expandafter{\romannumeral1}_i|+\frac{1}{n}\sum_{i=1}^n|\uppercase\expandafter{\romannumeral2}_i|.
    \end{align*}
    By \eqref{A.5} and the assumption that $\E(|\xi/\sqrt{d}|^{2+\varepsilon})<\infty$, we have 
    \begin{align}
        \uppercase\expandafter{\romannumeral3}_i=\frac{\tr(\bms)}{d}\frac{\xi_i^2}{d}\frac{\Vert\z_i\Vert_2^2}{d},~~\E(\uppercase\expandafter{\romannumeral3}_i)=d^{-1}\E(r_{T,i}^2)~~\text{and}~~\E(\uppercase\expandafter{\romannumeral3}_i^{1+\varepsilon/2})<\infty,\label{E.3}
    \end{align}
    which imply that
    \begin{align*}
        \E\left\{H_h\left(\uppercase\expandafter{\romannumeral3}_i-d^{-1}\E(r_{T,i}^2)\right)\right\}=&\E\left\{\left(\uppercase\expandafter{\romannumeral3}_i-d^{-1}\E(r_{T,i}^2)\right)\mathbbm{1}\left(\left|\uppercase\expandafter{\romannumeral3}_i-d^{-1}\E(r_{T,i}^2)\right|\le h\right)\right\}\\
        &+h\E\left\{\sgn\left(\uppercase\expandafter{\romannumeral3}_i-d^{-1}\E(r_{T,i}^2)\right)\mathbbm{1}\left(\left|\uppercase\expandafter{\romannumeral3}_i-d^{-1}\E(r_{T,i}^2)\right|> h\right)\right\}\\
        =&-\E\left\{\left(\uppercase\expandafter{\romannumeral3}_i-d^{-1}\E(r_{T,i}^2)\right)\mathbbm{1}\left(\left|\uppercase\expandafter{\romannumeral3}_i-d^{-1}\E(r_{T,i}^2)\right|> h\right)\right\}\\
        &+h\E\left\{\sgn\left(\uppercase\expandafter{\romannumeral3}_i-d^{-1}\E(r_{T,i}^2)\right)\mathbbm{1}\left(\left|\uppercase\expandafter{\romannumeral3}_i-d^{-1}\E(r_{T,i}^2)\right|> h\right)\right\}\\
        \le&2\E\left\{\left|\uppercase\expandafter{\romannumeral3}_i-d^{-1}\E(r_{T,i}^2)\right|\mathbbm{1}\left(\left|\uppercase\expandafter{\romannumeral3}_i-d^{-1}\E(r_{T,i}^2)\right|> h\right)\right\}\\
        \le&2\E\left(\left|\uppercase\expandafter{\romannumeral3}_i-d^{-1}\E(r_{T,i}^2)\right|^{1+\varepsilon/2}h^{-\varepsilon/2}\right)=O(h^{-\varepsilon/2}).
    \end{align*}
    Furthermore, by Bernstein's inequality and the assumption that $h=cn^{2/(2+\varepsilon)}$ for some constant $c>0$, we obtain
    \begin{align*}
        \left|\frac{1}{n}\sum_{i=1}^nH_h\left(\uppercase\expandafter{\romannumeral3}_i-d^{-1}\E(r_{T,i}^2)\right)\right|=O_p(h^{-\varepsilon/2})=O_p(n^{-\varepsilon/(2+\varepsilon)}).
    \end{align*}
    By the fact that $|\u^{\T}\M\v|\le \Vert\M\Vert_2\Vert\u\Vert_2\Vert\v\Vert_2$ for $\M\in\mR^{d\times d}$ and $\u,\v\in\mR^d$, we have
    \begin{align*}
        |\uppercase\expandafter{\romannumeral1}_i|=&d^{-1}|\hat{r}_{T,i}^2-r_{T,i}^2|\\
        =&d^{-1}|(\X_i-\bmu)^{\T}(\hat{\V}_S-\V_0)(\X_i-\bmu)+(\X_i-\bmu)^{\T}\hat{\V}_S(\bmu-\hat{\bmu})+(\bmu-\hat{\bmu})^{\T}\hat{\V}_S(\X_i-\bmu)|\\
        \le&\Vert\hat{\V}_S-\V_0\Vert_2\cdot d^{-1}r_i^2+2\Vert\hat{\V}_S\Vert_2\cdot\zeta_1\Vert\hat{\bmu}-\bmu\Vert_2\cdot\nu_i\cdot d^{-1}r_i^2.
    \end{align*}
    By H\"{o}lder inequality, \eqref{A.6}, \eqref{A.7} and the assumption that $\E(|\xi/\sqrt{d}|^{2+\varepsilon})<\infty$, we have for $0<\varepsilon'<\varepsilon$,
    \begin{align*}
        &\frac{1}{n}\sum_{i=1}^nd^{-1}r_i^2\\
        =&\frac{1}{n}\sum_{i=1}^n\frac{\xi_i^2}{d}\cdot\frac{\z_i^{\T}\bms\z_i}{d}\cdot\frac{d}{\z_i^{\T}\z_i}\\
        \le&\left\{\frac{1}{n}\sum_{i=1}^n\left(\frac{\xi_i^2}{d}\right)^{\frac{2+\varepsilon'}{2}}\right\}^{\frac{2}{2+\varepsilon'}}\cdot\left\{\frac{1}{n}\sum_{i=1}^n\left(\frac{\z_i^{\T}\bms\z_i}{d}\right)^{\frac{2(2+\varepsilon')}{\varepsilon'}}\right\}^{\frac{\varepsilon'}{2(2+\varepsilon')}}\cdot\left\{\frac{1}{n}\sum_{i=1}^n\left(\frac{d}{\z_i^{\T}\z_i}\right)^{\frac{2(2+\varepsilon')}{\varepsilon'}}\right\}^{\frac{\varepsilon'}{2(2+\varepsilon')}}\\
        =&O_p(1).
    \end{align*}
    Hence,
    \begin{align*}
        \frac{1}{n}\sum_{i=1}^n|\uppercase\expandafter{\romannumeral1}_i|\le&\Vert\hat{\V}_S-\V_0\Vert_2\cdot \frac{1}{n}\sum_{i=1}^nd^{-1}r_i^2+2\Vert\hat{\V}_S\Vert_2\cdot\zeta_1\Vert\hat{\bmu}-\bmu\Vert_2\cdot\max_{1\le i\le n}\nu_i\cdot\frac{1}{n}\sum_{i=1}^n d^{-1}r_i^2\\
        =&O_p\left(a_n+\sqrt{\frac{\log n}{n}}\right).
    \end{align*}
    Similarly, by \eqref{A.10}, we have 
    \begin{align*}
        \frac{1}{n}\sum_{i=1}^n|\uppercase\expandafter{\romannumeral2}_i|=&\frac{\tr(\bms)}{d}\frac{1}{n}\sum_{i=1}^n\frac{\xi_i^2}{d}\cdot\left|1-\frac{\Vert\z_i\Vert_2^2}{d}\right|\\
        \le&\left\{\frac{1}{n}\sum_{i=1}^n\left(\frac{\xi_i^2}{d}\right)^{\frac{2+\varepsilon'}{2}}\right\}^{\frac{2}{2+\varepsilon'}}\cdot\left\{\frac{1}{n}\sum_{i=1}^n\left|1-\frac{\Vert\z_i\Vert_2^2}{d}\right|^{\frac{2+\varepsilon'}{\varepsilon'}}\right\}^{\frac{\varepsilon'}{2+\varepsilon'}}=O_p(d^{-1/2}).
    \end{align*}
    Therefore, we obtain
    \begin{align}
        \left|\frac{1}{n}\sum_{i=1}^nH_h\left(d^{-1}\hat{r}_{T,i}^2-d^{-1}\E(r_{T,i}^2)\right)\right|=O_p\left(n^{-\varepsilon/(2+\varepsilon)}+a_n+\sqrt{\frac{\log n}{n}}+d^{-1/2}\right).\label{E.4}
    \end{align}

    \noindent\emph{\textbf{Step \uppercase\expandafter{\romannumeral1}. Analyze $\tilde{p}_h$.}}\\
    When $\widehat{\E(r_{T,i}^2)}\le \E(r_{T,i}^2)$, we have
    \begin{align*}
        \gamma_i=&\mathbbm{1}\left(d^{-1}\hat{r}_{T,i}^2-d^{-1}\widehat{\E(r_{T,i}^2)}<h,d^{-1}\hat{r}_{T,i}^2-d^{-1}\E(r_{T,i}^2)>-h\right)\\
        \ge&\mathbbm{1}\left(\left|d^{-1}\hat{r}_{T,i}^2-d^{-1}\E(r_{T,i}^2)\right|\le\frac{h}{2},d^{-1}\widehat{\E(r_{T,i}^2)}\ge0\right)\\
        \ge&1-\mathbbm{1}\left(\left|d^{-1}\hat{r}_{T,i}^2-d^{-1}\E(r_{T,i}^2)\right|>\frac{h}{2}\right)-\mathbbm{1}\left(d^{-1}\widehat{\E(r_{T,i}^2)}<0\right),
    \end{align*}
    where the first inequality follows from the fact that $h=cn^{2/(2+\varepsilon)}\to\infty$ and $d^{-1}\E(r_{T,i}^2)<h/2$. Similarly, when $\widehat{\E(r_{T,i}^2)}\ge \E(r_{T,i}^2)$, we have
    \begin{align*}
        \gamma_i\ge&1-\mathbbm{1}\left(\left|d^{-1}\hat{r}_{T,i}^2-d^{-1}\E(r_{T,i}^2)\right|>\frac{h}{2}\right)-\mathbbm{1}\left(d^{-1}\widehat{\E(r_{T,i}^2)}>\frac{h}{2}\right).
    \end{align*}
    Hence, 
    \begin{align}
        \E(\gamma_i)\ge 1-\Pr\left(\left|d^{-1}\hat{r}_{T,i}^2-d^{-1}\E(r_{T,i}^2)\right|>\frac{h}{2}\right)-\Pr\left(d^{-1}\widehat{\E(r_{T,i}^2)}<0\right)-\Pr\left(d^{-1}\widehat{\E(r_{T,i}^2)}>\frac{h}{2}\right).\label{E.5}
    \end{align}
    Note that 
    \begin{align*}
        \Pr\left(\left|d^{-1}\hat{r}_{T,i}^2-d^{-1}\E(r_{T,i}^2)\right|>\frac{h}{2}\right)\le\Pr\left(\left|\uppercase\expandafter{\romannumeral3}_i-d^{-1}\E(r_{T,i}^2)\right|>\frac{h}{6}\right)+\Pr\left(|\uppercase\expandafter{\romannumeral2}_i|>\frac{h}{6}\right)+\Pr\left(|\uppercase\expandafter{\romannumeral1}_i|>\frac{h}{6}\right).
    \end{align*}
    By \eqref{E.3} and Markov's inequality, we have
    \begin{align*}
        \Pr\left(\left|\uppercase\expandafter{\romannumeral3}_i-d^{-1}\E(r_{T,i}^2)\right|>\frac{h}{6}\right)=O\{h^{-(1+\varepsilon/2)}\}=o(1).
    \end{align*}
    Using similar arguments as those used when analyzing $n^{-1}\sum_{i=1}^n|\uppercase\expandafter{\romannumeral1}_i|$ and $n^{-1}\sum_{i=1}^n|\uppercase\expandafter{\romannumeral2}_i|$, by Markov's inequality again, we have
    \begin{align*}
        \Pr\left(|\uppercase\expandafter{\romannumeral2}_i|>\frac{h}{6}\right)+\Pr\left(|\uppercase\expandafter{\romannumeral1}_i|>\frac{h}{6}\right)=o(1).
    \end{align*}
    Hence,
    \begin{align}
        \Pr\left(\left|d^{-1}\hat{r}_{T,i}^2-d^{-1}\E(r_{T,i}^2)\right|>\frac{h}{2}\right)=o(1),\label{E.6}
    \end{align}
    which implies that
    \begin{align*}
        &\Pr\left(d^{-1}\hat{r}_{T,i}^2-d^{-1}\E(r_{T,i}^2)<h\right)\ge 1-\Pr\left(\left|d^{-1}\hat{r}_{T,i}^2-d^{-1}\E(r_{T,i}^2)\right|>\frac{h}{2}\right)\to 1,~~\text{and}\\
        &\Pr\left(d^{-1}\hat{r}_{T,i}^2-d^{-1}\E(r_{T,i}^2)>-h/2\right)\ge 1-\Pr\left(\left|d^{-1}\hat{r}_{T,i}^2-d^{-1}\E(r_{T,i}^2)\right|>\frac{h}{2}\right)\to 1.
    \end{align*}
    Thus,
    \begin{align}
        \Pr\left(\left(h/2-d^{-1}\E(r_{T,i}^2)\right)\frac{1}{n}\sum_{i=1}^n\mathbbm{1}\left(d^{-1}\hat{r}_{T,i}^2-d^{-1}\E(r_{T,i}^2)<h,d^{-1}\hat{r}_{T,i}^2>-h/2\right)>h/4\right)\to 1.\label{E.7}
    \end{align}
    
    By \eqref{E.1} and the fact that $d^{-1}\E(r_{T,i}^2)<h/2$, we have
    \begin{align*}
        &\frac{1}{n}\sum_{i=1}^nH_h\left(d^{-1}\hat{r}_{T,i}^2-h/2\right)\\
        \le&\frac{1}{n}\sum_{i=1}^nH_h\left(d^{-1}\hat{r}_{T,i}^2-d^{-1}\E(r_{T,i}^2)\right)\\
        &-\left(h/2-d^{-1}\E(r_{T,i}^2)\right)\frac{1}{n}\sum_{i=1}^n\mathbbm{1}\left(d^{-1}\hat{r}_{T,i}^2-d^{-1}\E(r_{T,i}^2)<h,d^{-1}\hat{r}_{T,i}^2>-h/2\right),
    \end{align*}
    which, combining with \eqref{E.4} and \eqref{E.7}, yields
    \begin{align*}
        \Pr\left(\frac{1}{n}\sum_{i=1}^nH_h\left(d^{-1}\hat{r}_{T,i}^2-h/2\right)<-h/5\right)\to 1.
    \end{align*}
    Similarly, we can prove
    \begin{align*}
        \Pr\left(\frac{1}{n}\sum_{i=1}^nH_h\left(d^{-1}\hat{r}_{T,i}^2\right)>h/5\right)\to 1.
    \end{align*}
    Furthermore, by the definition of $d^{-1}\widehat{\E(r_{T,i}^2)}$ and the continuity of $H_h(\cdot)$, we obtain
    \begin{align*}
        \Pr\left(0<d^{-1}\widehat{\E(r_{T,i}^2)}<h/2\right)\to 1,
    \end{align*}
    which, combining with \eqref{E.5} and \eqref{E.6}, implies $\E(\gamma_i)=1-o(1)$. Furthermore, by the Markov's inequality and the fact that $\gamma_i\le 1$, we have
    \begin{align*}
        \tilde{p}_h=1+o_p(1).
    \end{align*}

    Combining the conclusions from the two steps, we get the desired result.
\end{proof}

\bibliographystyle{asa}
\bibliography{sn-bibliography}

\end{document}